\newtcolorbox[auto counter]{invariant}[1][]{
  enhanced,
  breakable,
  fonttitle=\scshape,
  title={Invariant \thetcbcounter},
  colback=white,
  #1
}
\def\confversion{0} 
\newcommand{\ignore}[1]{}
	\newcommand{\conf}[1]{#1}
	\newcommand{\conf}[1]{\ignore{#1}}
	\newcommand{\full}[1]{#1}
	\newcommand{\full}[1]{\ignore{#1}}
\title{Online Fair Division: Towards Ex-Post Constant MMS Guarantees}
\author{
Pooja Kulkarni
\thanks{University of Illinois at Urbana-Champaign}\\ \texttt{poojark2@illinois.edu} 
\and
Ruta Mehta
\thanks{University of Illinois at Urbana-Champaign.}\\\texttt{rutameht@illinois.edu} 
\and
Parnian Shahkar
\thanks{University of California, Irvine}\\ \texttt{shahkarp@uci.edu}
}
\date{}
\title{Online Fair Division: Towards Ex-Post Constant MMS Guarantees}
\author{Anonymous Author(s)}
\date{}
\newcommand{\I}{{\mathcal I}}
\newcommand{\MMS}{{\sf{MMS}}}
\let\oldnl\nl
\newcommand{\nonl}{\renewcommand{\nl}{\let\nl\oldnl}}
\renewcommand{\nonl}{\renewcommand{\nl}{\let\nl\oldnl}}
\long\def\symbolfootnote[#1]#2{\begingroup%
\def\thefootnote{\fnsymbol{footnote}}\footnote[#1]{#2}\endgroup}
\newcommand{\unsat}{\mathtt{unsaturated}}
\newtheorem{theorem}{Theorem}[section]
\newtheorem{proposition}[theorem]{Proposition}
\newtheorem{lemma}[theorem]{Lemma}
\newtheorem{claim}[theorem]{Claim}
\crefname{algocf}{Algorithm}{Algorithms}
\Crefname{algocf}{Algorithm}{Algorithms}
\theoremstyle{definition}
\newtheorem{definition}[theorem]{Definition}
\newtheorem{example}[theorem]{Example}
\newcommand*{\Ical}{\mathcal{I}}
\newcommand*{\Icalhat}{\widehat{\mathcal{I}}}
\newcommand{\problem}{\textsc{OnlineKTypeFD}}
\newcommand{\remove}[1]{}
\begin{document}
\begin{titlepage}
\maketitle    
\begin{abstract}
We investigate the problem of fairly allocating $m$ indivisible items among $n$ sequentially arriving agents with additive valuations, under the sought-after fairness notion of maximin share (MMS). We first observe a strong impossibility: without appropriate knowledge about the valuation functions of the incoming agents, no online algorithm can ensure \textit{any} non-trivial MMS approximation, even when there are only two agents.

Motivated by this impossibility, we introduce \textsc{OnlineKTypeFD} (online $k$-type fair division), a model that balances theoretical tractability with real-world applicability. In this model, each arriving agent belongs to one of $k$ types, with all agents of a given type sharing the same known valuation function. We do not constrain $k$ to be a constant. Upon arrival, an agent reveals her type, receives an irrevocable allocation, and departs. We study the ex-post MMS guarantees of online algorithms under two arrival models:

\begin{itemize}
    \item \textbf{Adversarial arrivals:} In this model, an adversary determines the type of each arriving agent. We design a $\frac{1}{k}$-MMS competitive algorithm and complement it with a lower bound, ruling out any $\Omega(\frac{1}{\sqrt{k}})$-MMS-competitive algorithm, even for binary valuations.
    \item \textbf{Stochastic arrivals:} In this model, the type of each arriving agent is independently drawn from an underlying, possibly unknown distribution. Unlike the adversarial setting where the dependence on $k$ is unavoidable, we surprisingly show that in the stochastic setting, an asymptotic, arbitrarily close-to-$\frac{1}{2}$-MMS competitive guarantee is achievable under mild distributional assumptions. 
\end{itemize}

Our results extend naturally to a \emph{learning-augmented} framework; when given access to predictions about valuation functions, we show that the competitive ratios of our algorithms degrade gracefully with multiplicative prediction errors.

The main technical challenge is guaranteeing ex-post fairness, i.e., ensuring that \textit{every} arriving agent gets a bundle of sufficient value. 
For this, we design novel approaches built on ideas of maintaining \textit{tentative} overlapping allocations and multi-phase bag-filling, combined with procedures to carefully handle high-valued items, all of which may be of independent interest. 
\end{abstract}

\newpage

\setcounter{tocdepth}{2}
\tableofcontents
\end{titlepage}

\section{Introduction}\label{sec:intro}

Online fair resource allocation is a fundamental problem arising naturally in a variety of real-world scenarios. For instance, during disaster relief efforts—such as those witnessed in the recent Los Angeles wildfires—essential supplies like food, clothing, and medical aid must be distributed promptly as aid requests arrive. Similarly, in cloud computing, a central server must allocate compute and memory resources to a stream of arriving clients with varying demands. In each of these cases, ensuring that allocations are fair in hindsight (i.e., \textit{ex-post}) is critical: e.g. for maintaining trust, satisfaction, and social stability in disaster management, and for ensuring customer retention in industry applications.

Extensive recent work on {online fair division} has primarily focused on online {\em resource (items)} arrival (e.g., \cite{zeng2020fairness, benade2024fair}), and ex-ante guarantees (e.g., \cite{aleksandrov2015online}, \cite{banerjee2022online}) -- see the end of this section for more detailed related works. \cite{walsh2011online}, \cite{kash2014no} and \cite{banerjee2023online} studied online agent arrivals, but for {\em divisible items}. However, these works fail to capture our motivating scenarios where indivisible resources must be allocated to agents arriving online. Indeed \cite{amanatidis2023fair} note that: 
\begin{quote}
\textit{``The alternative model that considers a fixed set of resources and agents who arrive or depart over time has not been considered for indivisible resources partially because it is very challenging to achieve positive results.''} 
\end{quote}

To fill this notable gap, we initiate the study of {\em online discrete fair division with agent-arrival}, where the goal is to allocate a set of $m$ {\em indivisible} items to $n$ sequentially arriving agents. When an agent arrives, she reports her preferences through an additive valuation function. Based on this, she is immediately allocated a bundle of items that she takes and then departs with, rendering the allocation \emph{irrevocable}. The final allocation to all the agents needs to be {\em fair ex-post} under the sought-after fairness notion of {\em maximin share (MMS)}, discussed below.

Maximin share (MMS) \cite{budish2011combinatorial} is a share-based fairness notion: the MMS value of an agent is the highest value she can guarantee herself when she partitions the items into $n$ bundles and receives the bundle with the smallest value. In other words, she chooses a partition that maximizes the value of the least-valued bundle (see \Cref{def:mms}). In an MMS-allocation every agent $i$ receives a bundle worth at least her MMS value. In the offline setting, the non-existence of MMS allocations \cite{procaccia2014fair} motivated extensive work on {\em approximate} MMS allocations \cite{procaccia2014fair, amanatidis2017approximation, garg2019approximating, garg2020improved} where the current best-known guarantee is the existence of $(\frac{3}{4} + \frac{3}{3836})$ multiplicative approximation to MMS \cite{akrami2024breaking}.

In the online setting, we call an algorithm $\alpha$-MMS-competitive (fair) if {\em each} arriving agent receives a bundle worth at least \(\alpha\) times her MMS value\footnote{As soon as an agent arrives and reveals her valuation function, her MMS value and therefore the $\alpha$-MMS value can be computed \cite{woeginger1997polynomial}}.

The {\em holy-grail} would be a constant-MMS-competitive algorithm. 
However, the following simple example demonstrates that no non-trivial approximation of MMS is possible in the absence of additional information (even for the case of two agents with binary valuations):\footnote{Similar observations are well-known in the item-arrival case as well \cite{zhou2023multi}.}

\begin{example}\label{eg:basic}
Suppose $m$ items need to be allocated among {\em two} arriving agents, where $m$ is an even number. The first agent arrives and reports that she likes all $m$ items equally, say at $1$. Her MMS value is computed as $\frac{m}{2}$. To ensure $\alpha$-MMS allocation ex-post for some $\alpha>0$, suppose, we give her a bundle of some $\lceil \alpha \cdot \frac{m}{2} \rceil$ items out of the $m$ items, and she departs with her bundle. 
 
Then, the second agent arrives and reports that she values only two of the \(m\) items at \(1\), so her MMS value is \(1\); however, both of these items could end up in the bundle assigned to the first agent (unless $\alpha\le \frac{2}{m}$). In this case, she derives zero value from any bundle composed of the remaining items, implying only zero-MMS allocation is possible for her. \qed
\end{example}

Given any $\alpha$, we can choose $m > \frac{2}{\alpha}$ and use the above example to show non-existence of $\alpha$-MMS. The above example holds even when we know agent one's complete valuation function and the number of items liked by agent two\footnote{This example also shows that no non-trivial approximation for other important notions like EF1 or Prop1 is possible.}. This example underscores the challenges inherent to an online setting, where agents arrive sequentially and allocation decisions must be made irrevocably. 
In particular, without appropriate information about the valuation functions of the agents that might arrive, we risk losing too much value to the initial agents, precluding any meaningful fairness guarantee. Given the significance of this problem, the following question naturally arises:

\begin{tcolorbox}
\begin{center}
    Q. What information models for online discrete fair allocation with agent arrivals will enable reasonable fairness guarantees, such as constant-MMS approximations, while remaining sufficiently expressive to capture real-world scenarios?
\end{center}
\end{tcolorbox}

\paragraph{Our Contributions.} 
Towards addressing this question, we introduce the $\textsc{OnlineKTypeFD}$ (Online $k$-type fair division) problem. In many applications, including the disaster management and cloud computing discussed above, it is reasonable to assume that the arriving requests are not completely arbitrary and fall into one of a few categories or types\footnote{For instance \cite{jones2017a} groups refugees according to their requirements, and \cite{andersson2020assigning} classifies preferences of localities over refugees by types.}. Accordingly, we 
define $\textsc{OnlineKTypeFD}$ as follows: 
An instance is given by a triple \((N, M, \{v_i\}_{i\in[k]})\), where \(N\) is the indices of agents that will arrive (if \(n\) agents are expected, we assume \(N = \{1, 2, \dots, n\}\)), and \(M\) is the set of indivisible items available upfront to be allocated among the agents as they arrive sequentially. Additionally, we are given \(k\) valuation functions \(v_i: 2^{M} \to \mathbb{R}_+\) for each \(i \in [k]\), with each \(v_i\) corresponding to type \(i\). Every arriving agent belongs to one of these \(k\) types and reveals her type upon arrival. When an agent is of type $i\in[k]$, it means her valuation function is $v_i$. Note that we do not require \(k\) to be a constant.

We study the $\problem$ problem in two different arrival models:
\begin{itemize}[leftmargin=10pt]\phantomsection\label{dfn:advstoch}
    \item \textbf{Adversarial:} A fully-knowledgeable adversary determines the type of each arriving agent.
    \item \textbf{Stochastic:} The type of each arriving agent is drawn independently from an underlying, possibly unknown, distribution $D = (p_1, \ldots, p_k)$ over the \(k\) types. 
\end{itemize}

For the adversarial arrival model, we design a $\frac{1}{k}$-MMS-competitive algorithm, and complement it with a lower bound of $\frac{2}{\sqrt{k}-2}$. The lower-bound holds even for {\em binary} valuations, i.e., the value of an item is either $0$ or $1$ for every agent. In the offline setting, with binary valuations, even when the valuations are submodular, polynomial time algorithms that achieve MMS are known. This contrast highlights the complexity introduced by the online nature of our problem.

Despite this, for the stochastic arrival model, given $D$, we design a constant-MMS competitive algorithm, where we prove the constant can be arbitrarily close to ~$\frac{1}{2}$ under mild assumptions on the minimum probability in $D$. In particular, the algorithm guarantees {\em ex-post} $\sim$$\frac{1}{2}$-MMS allocation to {\em all} the agents with high probability. Here we manage to bound the failure probability by an inverse-exponential, {\em i.e.,} $(\frac{1}{e^{n^c}})$ for some constant $c>0$! We next extend the algorithm and analysis to the case where the underlying distribution $D$ is \textit{unknown}, while still achieving a competitive ratio arbitrarily close to \(\frac{1}{2}\). To the best of our knowledge, these are the first constant-competitive ex-post fairness guarantees for an online fair allocation problem with indivisible items. We note that, barring the MMS value computation for each type through a PTAS \cite{woeginger1997polynomial}, all our algorithms run in polynomial time.

Further, our results extend naturally to a \emph{learning-augmented} framework, where the \(k\) valuation functions are treated as predictions rather than as known inputs. We demonstrate that in this setting, 
our competitive ratios degrade gracefully with multiplicative prediction errors.

Technically the problem is complex given the uncertainty of future demands in addition to the combinatorial complexities well-known even in the offline setting (see survey \cite{amanatidis2023fair}). Therefore, we devote Section \ref{sec:tech-overview} to give an overview of the algorithmic approaches and analysis techniques we develop to work through the nuanced complexities, before proving the formal results in Sections \ref{sec:adversary}, \ref{sec:knownD}, and \ref{sec:unknowndist}, for the adversarial, stochastic with known-distribution, and unknown-distribution settings respectively. Section \ref{sec:prelims} sets up notations, the model, and useful properties.

\paragraph{Other Related Work.}\label{sec:relwork}
In this section, we discuss some of the related works. This is not meant to be an exhaustive survey; we highlight some works that are most related to ours. For further references, we also refer the readers to surveys \cite{aleksandrov2020online, amanatidis2023fair}.

\medskip

\noindent \textit{Online fair division with agent arrivals.} The existing online fair division literature with agents arriving online has focused on divisible items: \cite{walsh2011online} studied fair division of a single divisible item to $n$  online arriving agents. \cite{kash2014no} studied dynamic fair division where they have a set of divisible items available offline, and the agents arrive but do not depart from the system. \cite{sinclair2022sequential} and \cite{banerjee2023online} also classify agents into types according to their valuation functions, similar to our approach. \cite{friedman2015dynamic, friedman2017controlled} consider achieving fairness while minimizing disruptions where a disruption is reallocation of a previously allocated resource. In a related work, \cite{bogomolnaia2019simple} studied the case where agents' utility profiles are drawn randomly from a distribution. \cite{donahue2020fairness} in a different theme consider the closely related setting where uncertain number of agents from different ``groups'' arrive. They define fairness as individuals from different groups getting resource with equal probability.

\medskip
    
\noindent \textit{Online fair division with indivisible items.} While most of the current literature in online fair division focuses on divisible items, there are works with indivisible items. These mostly focus on the item-arrival setting. Among these, \cite{benade2018make} tries to minimize total envy over a period of time, \cite{he2019achieving} studies minimizing re-allocations to achieve EF1, \cite{zhou2023multi} studies MMS in online setting for indivisible items and chores, and \cite{zeng2020fairness} studies fairness and efficiency trade-offs over time with values drawn from a random distribution. \cite{procaccia2024honor} and \cite{yamada2024learning} explore online fair division with indivisible online items through the lens of bandit learning. 

\medskip

\noindent Other online resource allocation models that we don't discuss here are used for allocating divisible items that arrive online among fixed set of agents. To the best of our knowledge, there are no works on settings with agents arriving online and indivisible items.
\section{Notation and Preliminaries}\label{sec:prelims}

For any positive integer $n$, let $[n] = \{1, 2, \ldots, n\}$, and for two positive integers $i,j$ where $i<j$, let $[i,j] = \{i, i+1, \cdots,j\}$. For a set or list \( N \), let $|N|$ denote the number of elements in $N$.

\subsection{Problem Setting: Online Fair Division} 

\begin{definition}[\textsc{OnlineKTypeFD}]\label{dfn:onlinektype}
An instance of the \textsc{OnlineKTypeFD} problem is defined as
\(
\I = ([n], M, \{v_i\}_{i\in[k]}),
\)
where:
\begin{enumerate}
    \item \([n] = \{1, 2, \dots, n\}\) is the indices of agents who arrive sequentially.
    \item \(M\) is the set of indivisible items available upfront.
    \item \(\{v_i\}_{i\in[k]}\) is a collection of valuation functions of $k$ types, where \(v_i:2^M \to \mathbb{R}_{\ge 0}\). Each arriving agent belongs to one of these \(k\) types and reveals her type upon arrival; if an agent is of type \(i\), her valuation function is \(v_i\).
\end{enumerate}
In this online model, each arriving agent is immediately and irrevocably allocated a subset of the items in \(M\), after which she departs permanently. Our goal is to ensure that all agents receive a \emph{fair} share of the items. We study two variations of the \textsc{OnlineKTypeFD} problem: the \emph{adversarial model} and the \emph{stochastic model}.
\begin{itemize}[leftmargin=*]
    \item \textbf{Adversarial:} A fully-knowledgeable adversary determines the type of each arriving agent.
    \item \textbf{Stochastic:} The type of each arriving agent is drawn independently from an underlying, possibly unknown, distribution $D = (p_1, \ldots, p_k)$, where she is of type $i$ with probability $p_i$. 
    Without loss of generality, we assume 
    \(
    p_1 \geq p_2 \geq \cdots \geq p_k,
    \)
which implies that \(p_1 \geq \frac{1}{k}\) and \(p_k \leq \frac{1}{k}\).
\end{itemize}
\end{definition}

For type $i\in[k]$ agents, \(v_i\) expresses their preferences over subsets of \(M\). We assume that these valuations are \emph{additive}: for any \(S \subseteq M\),
$v_i(S) = \sum_{g \in S} v_i(\{g\}).$
For ease of notation, we write \(v_i(g)\) instead of \(v_i(\{g\})\) for every item \(g \in M\).

\noindent

\begin{definition}[Arrival Sequences]
In an \textsc{OnlineKTypeFD} problem, the \emph{arrival sequence} indicates the types of agents and the order in which they arrive, which is unknown to the algorithm beforehand. 
\end{definition}

\subsection{Fairness Notion: Maximin Share (MMS)}\label{sec:prel-MMS}

For any set \(S\) of items and a positive integer \(d\), let \(\Pi_d(S)\) denote the collection of all partitions of \(S\) into \(d\) bundles. 
\begin{definition}[Maximin Share (MMS)]\label{def:mms}
Given an instance $\I = ([n], M, \{v_i\}_{i\in[k]})$ of \textsc{OnlineKTypeFD} problem, the \emph{maximin share} (MMS) value for a valuation function \(v_i\) is defined as
\[
\MMS_{v_i}^{n} = \max_{P \in \Pi_n(M)} \min_{j \in [n]} v_i(P_j).
\]
\end{definition}
As all type $i$ agents have the same valuation function $v_i$, the MMS value of these agents is also defined as $\MMS_{v_i}^{n}$. When the instance $\I$ is clear from the context, we use the notation $\MMS_{v_i}^n$ as $\MMS_i(\mathcal{I})$ or $\MMS_i$. For type $i$ (or a type $i$ agent), an MMS partition $P^i = (P^i_1, P^i_2, \ldots, P^i_n)$ satisfies $\MMS_i = \min_{j \in [n]} v_i(P^i_j)$. 

Since the set of valuation functions \( \{v_i\}_{i\in[k]}\) is provided as part of the input for an \textsc{OnlineKTypeFD} problem, the (sufficiently close approximate) MMS value for each valuation function can be computed in advance using the polynomial time approximation scheme (PTAS) described in \cite{woeginger1997polynomial}, prior to the arrival of any agents. To simplify the exposition, we assume that the MMS value for every type is known in advance. This is in fact the $(1-\epsilon)$-approximate MMS value and all our results go through with this additional factor.

\begin{definition}[\(\alpha\)-MMS competitive algorithm]
In an \textsc{OnlineKTypeFD} problem, for \(0 < \alpha \leq 1\), an algorithm is \emph{\(\alpha\)-MMS competitive} if it returns an allocation where every agent receives a bundle that she values at least $\alpha$ times her MMS value. 
\end{definition}

\begin{definition}[Normalized Instance]
An input instance $\I = ([n], M, \{v_i\}_{i\in[k]})$ of an \textsc{OnlineKTypeFD} problem is \emph{normalized} if for every $i\in[k]$, \(\MMS_i = 1\) and the total value $v_i(M)=n$  for every type \(i\in[k]\). The former also implies that for every $g\in M$ and any valuation function $v_i$, $v_i(g)\leq 1$.
\end{definition}

\noindent{\bf Normalization is without loss of generality (wlog).} Given a non-normalized input instance $\I = ([n], M, \{v_i\}_{i\in[k]})$, for any valuation function $v_i$, the corresponding normalized valuation $v'_i$ is constructed by computing an MMS partition \(P^i = (P^i_1, \ldots, P^i_{n})\) for \(v_i\) and then rescaling the valuations so that for every \(j \in [n]\) and for every item \(g \in P^i_j\), $v'_i(g) = \frac{v_i(g)}{v_i(P^i_j)}$. Thereby, the corresponding normalized input instance is defined as $\I' = ([n], M, \{v'_i\}_{i\in[k]})$. Lemma 4 in \cite{simple} shows that for any bundle $b\subseteq M$, $v_i(b) \ge v'_i(b) \cdot \MMS^{n}_{v_i}$. By construction, the MMS value of every type in the normalized instance is $1$, i.e. for all $i\in[k]$, $\MMS^{n}_{v'_i} =1$. Therefore, if $v_i'(b)\geq \alpha \MMS^{n}_{v'_i}$, we have that $v_i(b) \ge \alpha \cdot \MMS^{n}_{v_i}$. This implies that an \(\alpha\)-MMS competitive algorithm applied to a normalized instance guarantees that each agent receives a bundle valued at least \(\alpha\) times her MMS value in the original (non-normalized) instance. Therefore, we assume throughout this paper that the inputs for the \textsc{OnlineKTypeFD} problem are normalized instances. 

\begin{definition}[Claiming a Bundle]
In an \textsc{OnlineKTypeFD} problem where the goal is to obtain an \(\alpha\)-MMS guarantee given a normalized input instance, a type $i$ or a type $i$ agent is said to \emph{claim} a bundle \(B\) if $v_i(B) \ge \alpha$.
\end{definition}

\begin{definition}[High-valued Items]
In an \textsc{OnlineKTypeFD} problem where the goal is to obtain an \(\alpha\)-MMS guarantee given a normalized input instance, an item $g$ is considered \emph{high-valued} by type $i$ if $v_i(g) \ge \alpha$. Items that are high-valued by all $k$ types are \emph{universally high-valued} items. 
\end{definition}  

\begin{definition}[Bag-filling]
Given a pool of items \(R\), a bag-filling procedure is a process in which a new, empty bag is created and items from \(R\) are added arbitrarily until a predetermined stopping condition is met.
\end{definition}  
Now let us provide a useful concentration bound, used in our analysis later.
\begin{lemma}[Corollary 3.4 of \cite{vondrak2010noteconcentrationsubmodularfunctions}]
\label{cor:nonmonotone-tails}
If $Z = f(X_1,\ldots,X_n)$ where $X_i \in \{0,1\}$ are independently random and $f$ is non-negative
submodular with marginal values in $[-1,1]$, then for any $\delta>0$, $\Pr[Z \leq (1-\delta) \mathbb{E}[Z]] \leq e^{-\delta^2 \mathbb{E}[Z] / 4}.$
\end{lemma}
\section{Technical Overview}\label{sec:tech-overview}
Our key technical contributions are for the stochastic model. However, we first discuss the problem in adversarial model, for a two-fold reason: (1) To bring out the nuanced complexity of the MMS problem in the online setting, and acclimatize the reader with the model without getting caught up in the probabilistic arguments of the stochastic model. (2) Our stochastic model with unknown distributions (\Cref{sec:unknowndist}) uses the adversarial algorithm as a subroutine. 

We note that given a {\em normalized instance} all our algorithms run in polynomial time. As discussed in Section \ref{sec:prel-MMS} any given instance can be normalized wlog, however it requires the computation of MMS values for each type $i\in [k]$. The latter can be done using a PTAS \cite{woeginger1997polynomial}. Thus, in general, our algorithms are polynomial-time barring the MMS value computation via a PTAS. 
\medskip

\noindent{\bf High-level challenges in contrast to offline setting.} The concepts of {\em ordered instances} \cite{garg2019approximating}, {\em single item reductions} and {\em bag-filling} are central to offline MMS algorithms. First, it is no more without loss of generality to assume ordered instance, because the allocations are irrevocable. As a consequence, reductions beyond single-item are no longer applicable. Furthermore, the bag-filling, if used, has to be much more carefully done due to the uncertain future agent types: Say we are filling a bag for an arrived agent of type $i$. If it becomes valuable for another type $j$, it is unclear whether to save it for type $j$ agent who may or may not arrive in the future.
\medskip

\subsection{{Adversarial Model}} 
In this model an all-knowing adversary decides the type of each arriving agent. Here, our first result is a $\frac{1}{k}$-approximation, which is a constant-approximation when $k$ is a constant. 
\noindent\begin{restatable}{theorem}{thmadvpos}\label{thm:adv}
    Given an instance, $\I=([n], M, \{v_i\}_{i \in [k]})$ of the \text{$\problem$} problem in the adversarial model, Algorithm \ref{alg:adversarial-alg} is $\frac{1}{k}$-MMS-competitive.
\end{restatable}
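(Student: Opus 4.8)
The plan is to design Algorithm~\ref{alg:adversarial-alg} around the idea of reserving, for every type $i\in[k]$, a disjoint ``fair'' region of the item set from which a type-$i$ agent can always be satisfied to a $\frac{1}{k}$-factor. Concretely, I would first fix a target of $\alpha = \frac1k$ on a normalized instance, so that $\MMS_i = 1$ and $v_i(M)=n$ for every type. The key structural fact is that if we partition $M$ into $n$ bundles realizing the MMS partition $P^i$ for type $i$, each of those bundles is worth at least $1 = k\alpha$ to type $i$; so it suffices, when a type-$i$ agent arrives, to hand her a single ``chunk'' worth $\geq \alpha$ from a pool that is still rich enough for type $i$. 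The natural way to guarantee richness over all $k$ types simultaneously is to split $M$ into $k$ (nearly) equal parts according to a common refinement, or more simply to argue that after serving at most $n$ agents we never need more than a $\frac1k$ share of any type's MMS mass from any single reserved part.

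The main steps I would carry out are: (1) Preprocessing of high-valued items: any item with $v_i(g)\ge \alpha$ for the arriving type $i$ can be given out as a singleton bundle, immediately satisfying that agent; so the interesting regime is when all remaining items are ``small'' ($v_i(g) < \alpha$) for the relevant type. (2) Maintain a partition of the yet-unallocated items into $k$ designated pools $R_1,\dots,R_k$, one per type, each initialized so that $v_i(R_i) \ge$ (enough to serve all $n$ possible type-$i$ agents with an $\alpha$-share). Here I would use that $v_i(M) = n$ and a counting/averaging argument: if pool $R_i$ is built from roughly a $\frac1k$ fraction of type $i$'s bundles in an MMS partition, then $v_i(R_i) \ge \lceil n/k\rceil \ge$ \dots --- this is where one has to be careful, since an agent may demand up to $\alpha$ but only $n$ of them ever arrive, so the total demand on $R_i$ is at most $n\alpha = n/k$, which should be coverable. (3) When a type-$i$ agent arrives, run a bag-filling from $R_i$ (together with any leftover items) adding small items until the bag is worth $\ge \alpha$ for $v_i$; since every item contributes $< \alpha$, the bag never overshoots past $2\alpha$, so we never waste more than a factor-$2$ worth, which is what keeps $R_i$ from depleting prematurely. (4) Verify the invariant ``$R_i$ always has value $\ge \alpha$ for type $i$ as long as fewer than $n$ type-$i$ agents have been served'' is preserved by each step, and conclude every served agent gets value $\ge \alpha = \frac1k$.

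The hard part will be step (2)--(4): arguing that the $k$ reserved pools can be chosen \emph{disjointly} while each still retains enough value for its type, because the MMS partitions $P^1,\dots,P^k$ of the different types need not be aligned, so carving out a region good for type $i$ may destroy the value structure for type $j$. The cleanest route around this, I expect, is to not fix the pools in advance but to allocate lazily: keep one global pool $R$ of unallocated items, and prove directly that at any point before an arriving type-$i$ agent is served, $v_i(R) \ge (\text{number of not-yet-arrived agents counting this one}) \cdot \alpha \ge \alpha$, using that initially $v_i(M) = n \geq n\alpha \cdot k / \dots$ wait --- more precisely, each prior agent (of any type) removed a bundle worth at most $2\alpha$ to type $i$ could be too lossy, so instead one shows each removed bundle is worth at most $1 = k\alpha$ to type $i$ via an MMS-partition charging argument, giving $v_i(R) \ge n - (n-1)\cdot 1 = 1 \ge \alpha$. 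Pinning down that charging argument --- that a bag-filled bundle of $v_i$-small items, or a high-valued singleton, is worth at most $\MMS_i = 1$ to \emph{any} other type --- is the crux; it likely needs the normalization $v_i(g)\le 1$ plus a careful stopping rule that caps each handed-out bundle's cross-type value, and this is exactly where a ``multi-phase'' or ``careful high-value handling'' idea (as the abstract hints) enters.

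Assuming the charging bound, the remaining verification is routine: $\alpha$-MMS competitiveness follows because every agent is served a bundle of $v$-value $\ge \alpha = \frac1k$ of her (normalized) MMS value, and the normalization lemma (Lemma~4 of~\cite{simple}, quoted above) transfers this to the original instance. I would also double-check the polynomial running time claim: bag-filling and MMS-partition bookkeeping are polynomial once the (approximate) MMS values are available from the PTAS of~\cite{woeginger1997polynomial}, consistent with the paper's stated convention.
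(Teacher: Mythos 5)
Your high-level skeleton matches the paper's: normalize so $\MMS_i=1$ and $v_i(M)=n$, set $\alpha=\frac1k$, preprocess high-valued items as singletons, serve the rest by bag-filling, and conclude by charging each removed bundle against $v_i(R)$. You also correctly locate the crux --- you must bound the value lost to \emph{every other} type whenever a bundle leaves the system. But the proposal stops exactly there, and the step you leave open is the one that carries the whole proof.

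The naive bag-fill you end up with (``run a bag-filling from $R$ adding items until the bag is worth $\ge\alpha$ for $v_i$'') does \emph{not} give the bound ``each removed bundle is worth at most $1=k\alpha$ to type $j$''. Normalization only caps single items at $1$; a bag filled until it reaches $\frac1k$ of type-$i$ value can contain arbitrarily many items and hence have arbitrarily large $v_j$-value when type $i$ assigns tiny per-item value while type $j$ assigns $\Theta(\frac1k)$ per item. Concretely, with $nk$ items each worth $\frac1k$ to type~$1$ and worth $\epsilon\ll\frac{1}{k^2}$ to type~$2$, a bag built for a type-$2$ agent needs roughly $\frac{1}{k\epsilon}$ items to reach $\frac1k$ of $v_2$-value, and is then worth $\approx\frac{1}{k^2\epsilon}\gg 1$ to type~$1$. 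So the invariant $v_i(R)\ge n-t$ is not preserved by one bag-fill per arrival, and your ``MMS-partition charging argument'' --- which is not how the paper argues and which you never make precise --- does not rescue it, since bag-filled bundles bear no relation to any MMS partition.

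The missing mechanism is the paper's notion of \emph{tentative allocations} together with \emph{saturated types}: during bag-filling, the bag is stopped as soon as \emph{any unsaturated} type values it at least $\frac1k$ (not just the arriving type), and if that type is not the arriving one the bag is reserved in her tentative set $G_j$ and a new bag is started. Combined with a preprocessing pass that removes every item worth $\ge\frac1k$ to some type, this caps \emph{every} created bundle at value $\le\frac2k\le 1$ for \emph{every} unsaturated type --- which is the bound you need. Saturation (stop reserving once $|G_j|$ equals the number of remaining agents, and release excess) then bounds the total value that can be tied up in other types' tentative bundles to $(n-t+1)\cdot\frac{k-1}{k}$, which is what lets the paper conclude $v_i(P)\ge\frac{n-t+1}{k}>0$ for the pool $P=R\setminus\bigcup_j G_j$ that the current bag-fill actually draws from. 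Your proposal has no analogue of $G_i$ or of saturation, so neither the $\le 1$ cross-type bound nor the pool-availability bound goes through; without them the proof does not close.
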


\noindent \textbf{Algorithm and Analysis Idea.} As discussed in Section \ref{sec:prel-MMS}, we can assume wlog that instance $I$ is normalized. and hence MMS$_i=1$ and $v_i(M)=n$ for each type $i\in [k]$. Since the arrival sequence can be arbitrary, one approach is to create $nk$ agents, $n$ of each type. However, MMS is very fragile with respect to (wrt) the number of (\#) agents, {\em e.g.,} if an agent values exactly $n$ items at $1$ each, then adding even one extra agent drops her MMS value from one to zero. This renders any approach appealing to duplicating agents tricky to analyze. Instead, we appeal to {\em tentative allocations} by carefully extending the concept of {\em bag-filling}, so as to ensure that at any point in time, there is enough value for any possible future arrival sequence of agents.

\medskip

\noindent \textit{Tentative Allocations.} Throughout the algorithm, we maintain a set \( G_i \) of tentative allocations for each type \( i \in [k] \). Each \( G_i \) consists of disjoint bundles such that for any \( B \in G_i \), \( v_i(B) \geq \frac{1}{k} \), though a bundle $B$ may belong to multiple $G_i$s.
\medskip

\noindent Ideally, if we could construct tentative allocations where \( |G_i| \ge n \) for all \( i \in [k] \), ensuring that every \( B \in G_i \) satisfies \( v_i(B) \geq \frac{1}{k} \), a \(\frac{1}{k}\)-competitive guarantee would follow: each arriving agent of type \( i \) would receive a bundle from \( G_i \), and updates would ensure that future agents still have access to feasible allocations -- whenever a bundle is allocated, it would be removed from all tentative sets containing it. This reduces the size of each $G_i$ by at most one ensuring the invariant that $|G_i| \ge$ (\# remaining agents) for all $i\in[k]$.

However, this fixed structure is infeasible because even overlapping bundles across types can lead to significant value loss. The following example demonstrates that pre-saturating the types fails to guarantee an approximation of $\frac{1}{k}$ in an adversarial setting.

\begin{example}\label{eg:adv-counter}
    Consider an instance with $k=2$, $3n$ items. Type $1$ values $2n$ of these items at $\frac{1}{2}-\epsilon$ each and the rest $n$ at $2\epsilon$ each. Therefore, the MMS value is $1$ and the MMS allocation combines in each bag $2$ items of $\frac{1}{2}-\epsilon$ value and $1$ item of value $\epsilon$. Now suppose, $\epsilon < \frac{1}{4n}$ for this type and the second type bundles all of these $2\epsilon$ valued items into a single bag. This bundle will not belong to type $1$'s tentative allocation since $n \cdot \frac{1}{2n} < \frac{1}{2}$. Further, suppose the second type bundles any of the remaining $n-1$ items, each of value $(\frac{1}{2}-\epsilon)$ for type $1$ into $n-1$ bags. These bundles too will not be included in $G_1$. Now, the value left for creating bundles for type $1$ is: $n+1$ items each of $(\frac{1}{2} - \epsilon)$ value and we cannot create $n$ bundles of value $\frac{1}{2}$.
\end{example} 

To overcome this challenge, we construct tentative allocations {\em dynamically}, adapting them on-the-fly. Towards this, we define the notion of \emph{(over) saturated types}.  

\medskip

\noindent \textit{Saturated Types.} A type is considered saturated (or over saturated) if the number of bundles in its tentative set equals (or exceeds) the number of agents left to arrive.  

\medskip  

\noindent When an agent arrives with type $i$, if there exists $B\in G_i$ then $B$ is allocated to the agent. Otherwise, a new bundle is formed from unassigned items through {\em bag-filling}. In this process whenever the bundle is valuable at more than $\frac{1}{k}$ to an unsaturated type(s), but not the current agent, it is tentatively allocated to those type(s). At the end of every iteration of the algorithm, we release excess bundles from any over saturated type. We show that these steps together maintain the invariant: for each type $i\in [k]$, either it is saturated or  $v_i(\mbox{remaining items}) \ge$ (\# remaining agents). This ensures a bundle worth $\frac{1}{k}$ to every agent (see \Cref{sec:adversary} for details).

\medskip

\noindent \textbf{Lower bound.}  
Given prior knowledge of valuation types of agents and the tractability of MMS in the offline setting, one would hope to achieve a constant independent of $k$ in the approximation factor. However, we show a lower bound of \( O(\frac{1}{\sqrt{k}}) \) instead, implying that our algorithm is achieving nearly the best possible guarantee. 
\begin{restatable}{theorem}{thmadvneg}\label{thm:adv-negative}
    Given an instance, $\I=([n], M, \{v_i\}_{i \in [k]})$ of the $\problem$ problem in the adversarial model no online algorithm is better than ${\frac{{2}}{\sqrt{k} - 2}}$-MMS-competitive.
\end{restatable}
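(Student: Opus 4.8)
\textbf{Proof proposal for Theorem~\ref{thm:adv-negative}.}

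The plan is to construct an adversarial instance with binary valuations on which no online algorithm can do better than roughly $\frac{2}{\sqrt{k}-2}$-MMS. The core idea, suggested by Example~\ref{eg:basic}, is that to serve an agent of type $i$, the algorithm must commit a nontrivial fraction of items to that agent, and if the algorithm does not know which type will arrive next, any such commitment can be ``wasted'' from the perspective of a cleverly chosen subsequent type whose valued items heavily overlap what was already given away. To amplify this from a constant loss to a $\Theta(1/\sqrt{k})$ loss, I would let the adversary reveal types one at a time and, at each step, choose the next type so that its set of valued items is concentrated precisely on the items the algorithm has so far refused to give out, forcing a geometric-style erosion of available value over $\Theta(\sqrt{k})$ rounds.

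Concretely, I would set $n=1$ for each type's perspective is awkward since MMS is defined with $n$ agents; instead take $n$ agents, all of whom turn out (adversarially) to be drawn from a small set of $\Theta(\sqrt{k})$ of the $k$ available types. Let $M$ consist of a ground set partitioned into blocks; type $j$ (for $j$ in the active set) values a carefully nested family of blocks at $1$ and everything else at $0$, with the block sizes chosen so that $\MMS_j^n = 1$ for each active type (i.e.\ the items type $j$ values can be split into $n$ bundles each of value $\ge 1$, but only barely). The adversary sends agents in increasing order of $j$; because the algorithm is online and must hand each arriving agent a bundle of value $\ge \alpha$ (in type $j$'s eyes) or fail, it is forced to consume at least an $\alpha$-fraction of type $j$'s valued mass, and I would design the nesting so that type $j+1$'s valued items are exactly (a subset of) those the algorithm was forced to use for types $1,\dots,j$. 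After $t \approx \sqrt{k}$ such rounds the remaining value for the last active type drops below $\alpha \cdot \MMS$, yielding $\alpha < \frac{2}{\sqrt{k}-2}$ after optimizing the block sizes and the number of rounds $t$.

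The key steps, in order: (1) fix parameters $t = \Theta(\sqrt{k})$ and item-block sizes as functions of $n$, $t$, and the target $\alpha$; (2) define the $t$ active binary valuation types with the nested-support structure and verify $\MMS_j^n = 1$ for each (this is a routine partition-counting check); (3) describe the adversary's strategy — reveal types $1, 2, \dots, t$ in order, each time for enough agents — and set up the bookkeeping of ``committed'' versus ``free'' items after round $j$; (4) prove the recurrence: if the algorithm has been $\alpha$-MMS so far, the free mass usable by type $j+1$ shrinks by a fixed additive/multiplicative amount, so that by round $t$ it is below $\alpha$; (5) solve for the largest $\alpha$ consistent with survival through all $t$ rounds and conclude $\alpha \le \frac{2}{\sqrt{k}-2}$.

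The main obstacle I anticipate is step (2) combined with (4): making the nested supports simultaneously (a) force overlap so that serving early types necessarily eats into later types' value, (b) keep every active type's MMS value pinned at exactly $1$ despite these overlaps (otherwise the adversary's ``$\MMS=1$'' normalization breaks and the bound is vacuous), and (c) have the erosion compound over $\Theta(\sqrt{k})$ rounds rather than $\Theta(\log k)$ or $\Theta(k)$ rounds — getting the exponent $1/2$ right is precisely a matter of balancing how much value must be committed per round (which pushes toward more rounds) against how fast the usable pool shrinks (which pushes toward fewer). I would resolve this by a two-parameter ansatz for block sizes and tune both parameters against the target $\frac{2}{\sqrt{k}-2}$; the binary-valuation restriction is maintained throughout since all item values are $0$ or $1$ by construction.
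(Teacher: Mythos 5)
Your proposal takes a genuinely different route from the paper, and the route as sketched has a gap that I don't think the ``two-parameter ansatz'' will fix without abandoning a constraint you've imposed on yourself.

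You insist on keeping $\MMS^n_j = 1$ for every active type and on binary valuations. But these two constraints together kill the over-commitment mechanism your erosion argument needs. If type $j$ has $\MMS^n_j = 1$ under a binary valuation, then giving an arriving agent of type $j$ a bundle of value $\geq \alpha \cdot 1$ for any $\alpha \leq 1$ requires only a \emph{single} item that $j$ values, not ``at least an $\alpha$-fraction of type $j$'s valued mass'' as you write. The algorithm can always hand out exactly one item per agent; no forced over-allocation occurs, so there is nothing to compound over $\Theta(\sqrt{k})$ rounds. Your step~(4) recurrence therefore never gets off the ground. The nested-support idea can create a distinct-representatives obstruction, but that is a different combinatorial argument and it is not clear it yields the $\sqrt{k}$ exponent; in particular, your stated mechanism (geometric erosion of an $\alpha$-fraction per round) is incorrect as stated.

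The paper's proof avoids this by deliberately \emph{not} normalizing. It uses a single special type with $\MMS = \mu_1 = \lfloor\sqrt{k}/2\rfloor$ that values all $n\mu_1$ items. The adversary sends exactly one agent of this type first. To beat $\frac{1}{\mu_1}$-MMS for that agent, the algorithm must hand over at least \emph{two} items --- this is where the forced over-commitment comes from, and it is unavailable if every type's MMS is pinned to $1$. The items are partitioned into $\mu_1$ intervals of $n$ items each; for every pair of intervals $(l,r)$, four auxiliary types (each valuing one half of $G_l$ plus one half of $G_r$, and exactly $n$ items total, so $\MMS = 1$) ensure that \emph{whichever} two items the algorithm picks for the first agent, some type of MMS $1$ loses two of its $n$ valued items. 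The adversary then commits all $n-1$ remaining arrivals to that one type, starving the last agent. The $\sqrt{k}$ is not the number of rounds of erosion; it comes from the count $4\binom{\mu_1}{2} + \mu_1 + 1 \leq k$, i.e.\ needing enough types to cover all pairs of intervals. So the paper's argument is a one-shot adversary, not a $\Theta(\sqrt{k})$-round erosion. If you want to salvage your plan, drop the ``all MMS $=1$'' constraint for the first type and note that the lower bound is stated for unnormalized instances; then you land close to the paper's construction.
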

\noindent\noindent Notably, our lower bound holds even for binary valuations and a constant number of types. The key idea is to construct an instance with binary valuations, where valuations are not normalized. In this instance, we have one type with a high-MMS value, where this will be the first agent's type, and multiple other types with an MMS value of 1. Since valuations are binary, the high-MMS type values many items, while every other types value exactly \( n \) items. By appropriately selecting these \( n\) items for the remaining types, we ensure that any allocation better than  $\frac{2}{\sqrt{k}-2}$-MMS for the first agent will result in one of the other types losing two items. The adversary then sends all remaining agents of this type, forcing the last agent to receive zero value (see \Cref{subsec:adv-lower-bound} for details).

\subsection{Stochastic Arrivals with Known Distribution}
Looking for better guarantees for scenarios arising in practice, we consider a stochastic setting where along with an $\problem$ instance, we are given a distribution $D = (p_1, \ldots, p_k)$ such that every arriving agent is of type $i$ with probability $p_i$. In particular, we prove the following.

\begin{restatable}{theorem} {thmknownD}\label{thm:knownD}
Given an instance, $\I=([n], M, \{v_i\}_{i \in [k]})$ of the $\problem$ problem in the stochastic arrival model with known distribution \(D\), for any constant \(0 <\epsilon < \frac{1}{2}\) such that 
\(
p_k = \omega\Bigl(\frac{n^\epsilon k}{\sqrt{n}}\Bigr),
\)
then, for any \(\eta > 0\) there exists an \(n(\eta)\) such that for all \(n > n(\eta)\), \Cref{alg:knownD} is \(\frac{1}{2(1+\eta)}\)-MMS competitive with probability at least \(1 - o\Bigl(\frac{1}{e^{n^{1.5\epsilon}}}\Bigr)\). 
\end{restatable}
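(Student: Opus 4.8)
I would split the argument into a probabilistic reduction and a deterministic construction. Set $\alpha := \frac{1}{2(1+\eta)}$, so the goal is to give every arriving agent a bundle of value $\ge \alpha$ (the instance is normalized, so this is exactly $\frac{1}{2(1+\eta)}$-MMS). Let $n_i$ be the number of type-$i$ agents that actually arrive; it is a sum of $n$ i.i.d.\ $\{0,1\}$ indicators with mean $np_i$. Because $np_i \ge np_k = \omega(n^\epsilon k/\sqrt n) = \omega(n^{1/2+\epsilon}k)$, a standard Chernoff bound gives $\Pr[n_i > (1+\tfrac\eta3)np_i] \le e^{-\Theta(\eta^2 np_i)} \le e^{-\omega(n^{1/2+\epsilon}k)}$, and a union bound over the $k$ types bounds the overall failure probability by $o(1/e^{n^{1.5\epsilon}})$, using $\tfrac12+\epsilon > \tfrac32\epsilon$ for $\epsilon<\tfrac12$. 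Let $\mathcal E$ be the event that $n_i \le t_i := \lceil (1+\tfrac\eta3)np_i\rceil$ for all $i$; it then suffices to show the algorithm can, deterministically, serve up to $t_i$ agents of each type $i$ with value-$\ge\alpha$ bundles drawn from the common pool $M$.

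The crux is therefore a purely combinatorial claim: one can construct, for each type $i$, a collection $G_i$ of $t_i$ pairwise-disjoint subsets of $M$ with $v_i(B)\ge\alpha$ for every $B\in G_i$, the collections being disjoint across types as well (overlap across types would only help, so disjointness is the conservative target). Given the $G_i$, the online algorithm is trivial — a type-$i$ arrival gets any unused bundle of $G_i$, and under $\mathcal E$ it never runs out. I would build the $G_i$ in two stages. First a \emph{high-value stage}: every item $g$ with $v_i(g)\ge\alpha$ for some type (in particular every universally high-valued item) is set aside as a singleton bundle and assigned to one of the types valuing it, lowering that type's residual target to some $t_i'\le t_i$; afterwards every remaining item is worth $<\alpha$ to \emph{every} type. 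Then per-type bag-filling: processing types in a fixed order, phase $i$ runs threshold-$\alpha$ bag-filling over the items not yet committed, but feeds candidate items into the current bag in increasing order of their value to the types not yet processed — spending the items that are cheapest for the remaining types — stopping once $t_i'$ bundles of value in $[\alpha,2\alpha)$ have been produced.

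The plausibility of this construction rests on a fractional room argument: assign each item the fraction $p_i$ to type $i$; since $\sum_i p_i=1$ this is feasible, and it hands type $i$ a $v_i$-value of $p_i\,v_i(M)=np_i$, whereas the value it actually needs to realize its $t_i$ bundles is roughly $2\alpha t_i+O(1) \approx \frac{np_i}{1+\eta}$, a factor about $(1+\eta)$ below what the fractional slice supplies, with even more slack relative to $\sum_i t_i = (1+\tfrac\eta3)n < 2n$. \textbf{The main obstacle} is turning this fractional slack into an integral guarantee: bag-filling wastes up to one item per bundle (which is why the high-value stage is needed, so that every bag-filled bundle has value $<2\alpha$) and, more seriously, the bundles built for earlier types in the fixed order can carry value to later types. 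The fix is the ``cheapest-for-the-remaining-types'' ordering rule inside each pass, and the key lemma to prove is the invariant: when phase $i$ begins, the still-available items carry $v_i$-value at least $2\alpha t_i'$ (up to a vanishing correction for ceilings). This invariant is where the real work concentrates; I expect to establish it by charging the $v_i$-value that earlier phases could have removed against the surplus guaranteed by the fractional witness, with the fact that an item is set aside in the high-value stage \emph{precisely when} it is high-valued for some type being what prevents a single popular item from breaking the accounting. Finally, combining the two stages: on $\mathcal E$ — probability $1-o(1/e^{n^{1.5\epsilon}})$ — the $G_i$ exist, so every agent gets a bundle worth $\ge\alpha = \frac1{2(1+\eta)}$ of her MMS value; $n(\eta)$ is chosen large enough that the Chernoff exponents dominate and the ceilings are negligible.
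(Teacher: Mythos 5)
There are several genuine gaps, the most serious being the size of your per-type targets. You set $t_i = \lceil(1+\eta/3)np_i\rceil$, so $\sum_i t_i \approx (1+\eta/3)n$, an overshoot of $\Theta(\eta n)$ bundles. The paper instead uses $M_i = \lfloor np_i + n^\epsilon\sqrt{np_i}\rfloor$, giving total overshoot $z = n^\epsilon\sum_j\sqrt{np_j} = o(n)$. This distinction is not cosmetic. In the paper's accounting (Lemmas~\ref{lem:oneunsat} and \ref{lem:allsaturated}), the only slack available to absorb the overshoot comes from the quantity $n'(1-|T_i|/n')$, where $T_i$ is the set of type-$i$ high-valued items, and Proposition~\ref{prop:limitT_i} only guarantees $n' - |T_i| \gtrsim \frac{n'p_k}{2} - z$ — a \emph{sublinear} amount. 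Your ``fractional room argument'' silently assumes the full $\Theta(\eta n)$ slack is available to every type, which is false whenever $|T_i|$ is close to $n'$ (which is permitted by the instance). Redoing your calculation with $t_i$ multiplicative, the condition needed to saturate an unsaturated type $i$ becomes roughly $n' - |T_i| \ge n'/3$, and nothing in your argument (or in the paper) enforces a bound like $|T_i| \le 2n'/3$. The hypothesis $p_k = \omega(n^\epsilon k/\sqrt{n})$ appears in the theorem precisely to make $n'p_k = \omega(z)$ with $z$ sublinear; you only use that hypothesis for the probability tail, but in the paper it is the engine of the combinatorial argument, not the Chernoff step. Replacing the additive $n^\epsilon\sqrt{np_i}$ slack with a multiplicative $\eta/3$ slack makes the key invariant you defer to false in general, not merely unproven.

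A second, independent gap is the case of many universally high-valued items. You propose building fully \emph{disjoint} families $G_i$ with $\sum_i|G_i|\approx(1+\eta/3)n$, calling disjointness ``conservative.'' But disjointness is a \emph{harder} requirement, and it is simply unachievable when $|C|$ (the set of items worth $\ge\alpha$ to every type) is close to $n$: an item of $C$ can sit in only one $G_i$, and after stripping $C$ each type has $\le n - |C| + 1$ value left from the normalization, so you cannot manufacture $\Theta(\eta n)$ additional $\alpha$-value bundles for anyone. This is exactly why \Cref{alg:knownD} branches on $|C| \ge n(1-\tfrac1k)+n^\epsilon\sqrt{np_1}$ and, in that branch, runs \Cref{alg:highC}, which departs entirely from ``disjoint $G_i$'': it shares the $C$-bundles across types and reserves the $C$-free cells of type 1's $\alpha$-MMS partition exclusively for type 1, leaning on the fact that the most frequent type has at least $n-|C|$ arrivals whp. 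Your proposal has no mechanism for this regime. Finally, the bag-filling invariant you identify as ``where the real work concentrates'' is also where the paper invests its most technical ideas — the two-ended type ordering of \Cref{alg:lowC} and the counting argument in Proposition~\ref{prop:limitT_i} — and those are precisely the pieces you leave as an expectation rather than a proof.
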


\noindent Although the above theorem seems to suggest that $1/2$ factor is achieved for asymptotically large $n$, for a slightly weaker factor the bound on $n$ is quite reasonable. For example, 
when $p_k \geq \frac{k}{n^{0.2}}$ the guarantee of the algorithm is $\frac{1}{3}$-MMS competitive for any $n \geq 59$, and $\frac{1}{2.1}$-MMS competitive for any $n\geq 7195$.\footnote{This is obtained by choosing $\epsilon= 0.001$, $\eta = 0.5$ for the former guarantee and $\eta = 0.1$ for the latter. The guarantees provided hold with probability at least $1-o\Bigl(\frac{1}{e^{n^{0.0015}}}\Bigr)$} Furthermore, the lower bound on $p_k$ implies an upper bound on $k$, however it is sublinear, i.e., $k= o(n^{0.25})$. We also note that our probability of failure is inverse-exponential!

\medskip

\noindent \textbf{Algorithm and Analysis Idea.} Our goal here is to use the distribution information to achieve better guarantees with (very) high probability (whp). The tricky part is to handle the events when relatively few, i.e., $o(\log n)$, agents are left to arrive. In these cases we cannot infer any statistical guarantees about the arrival sequence. One easy way to handle this would be to ensure fairness to all but $o(n)$ agents, or ensure ex-ante fairness. But our goal is {\em ex-post} fairness to {\em all} agents! We therefore take a multi-phase approach in \Cref{alg:knownD}. 

First, observe that given the distribution $D$, the expected number of agents of type $i$ is $np_i$ for any $i \in [k]$. Further, using Chernoff bound we get that \emph{for all $i \in [k]$}, with high probability at most $M_i = \lfloor np_i + n^\epsilon \sqrt{np_i}\rfloor$ agents will be of type $i$. 

We build upon the ideas of {\em tentative allocations} and {\em saturated types} from the previous section.
If we directly create {\em overlapping} tentative allocations \( G_i \) for each type $i$, like in the previous section, then it is unclear when to declare a type saturated. Using \( M_i \) as a threshold is infeasible because $G_i$ may loose shared bundles to other types leading to faster than tolerable depletion. The alternative of setting saturation at \( n \) ignores the underlying distribution. To address this, we can try to create \emph{disjoint} $G_i$s, and declare a type $i$ saturated when enough bundles are reserved for agents of that type, i.e. $|G_i| = M_i$. However, then the challenge is to bound the losses. 

Creating $M_i$ {\em exclusive} bundles for each type $i$ requires us to create in total $\sum_{i \in [k]}M_i \sim \sum_{i \in [k]}np_i + \sum_{i \in [k]} n^\epsilon \sqrt{np_i} = n + \sum_{i \in [k]} n^\epsilon \sqrt{np_i} > n$ many bundles of some constant approximate MMS value. This clearly is impossible for a binary instance where all types like exactly $n$ items, and $n- O(1)$ items are valued by all types. 

We carefully combine the two approaches of {\em overlapping} and {\em exclusive} bundles, handling high-valued and low-valued items through a multi-phase algorithm.

\begin{itemize}[leftmargin=*]
     \item \textbf{Phase 1: Universally high-valued items.} In this phase, items valued highly by all types are allocated in an online manner appealing to {\em overlapping} $G_i$s.
     \item \textbf{Phase 2: High-valued items.} In this phase, items valued highly by some, but not all types are allocated {\em exclusively} to a $G_i$, for some type $i$ that is not saturated, while ensuring that the losses can be bounded later. 
 \item \textbf{Phase 3: Low-valued items.} After Phase 2, all the items are low-valued for the unsaturated types. These are {\em exclusively} bundled via bag-filling.
 \end{itemize}
 We now give further nuances of each phase, and how they work together. 
\medskip

\noindent \textit{Phase 1: Universally high-valued items.} If there are $n'$ universally high-valued items, then it is natural to allocate them as singletons to the first $n'$ agents. Indeed we do exactly this in Phase I, unless $n'$ is too large, but less than $n$. The difficulty with large $n'$ is as follows: if $n-n'$ is small, say a constant, then every sequence of remaining $n-n'$ agents has a constant probability of occurring. This hints at a reduction to the adversarial case where no better than $\Omega(\frac{1}{\sqrt{k}})$ is possible. 

To circumvent this, whenever $n'$ is {\em large}, but less than $n$, we compute an $\alpha$-MMS partition for type $1$, the type whose arrival probability is the largest. In this partition, at least $n - n'$ bags exclude the universally high-valued items. We keep these $n-n'$ bags for type $1$ agents. Since $n p_1 \geq n/k >> n-n'$, there will be enough agents of type $1$ whp to consume these. Now, it is safe to allocate $n'$ universally high-valued items as before to agents of other types and the remaining agents of type $1$. Therefore, if $n'$ is large, the algorithm terminates here. Otherwise we proceed with Phase 2. 

\medskip

After phase 1, phases 2 and 3 essentially create exclusive bundles of high and low-valued items, respectively. To bound overall loss and prove that sufficient number of bundles can be created, we handle the high-valued and the low-valued items separately as follows.

\medskip

\noindent \textit{Phase 2: High-valued items.} To ensure that no type loses too much value in this phase, we create a careful ordering of types according to which they select their high-valued items -- in order, each type picks their high-valued items as singleton bundles until saturated, {\em} i.e., $|G_i|=M_i$. If a type remains unsaturated by the end of this phase, each remained item is valued less than $\alpha$-MMS ($\alpha$ times its MMS value) for it. 

The ordering is done in one of two ways: (1) If there are two types such that the set of their high-valued items have small enough overlap, we place them at the two ends (2) If there are no two types of this kind, we keep the agent with smallest arrival probability as last. In both cases, bounding the loss of the last type is the trickiest. In the former case, losses to the other types are bounded using a combination of the minimum probability bound and low overlap with the first type. In the latter case, through a careful counting argument, we demonstrate that if the last type remains unsaturated by the end of phase 2, the number of high-valued items available for this type must be small. This follows from the absence of universally high-valued items, and large overlaps between the set of high-valued items for any pairs of types.

\medskip

\noindent \textit{Phase 3: Low-valued items.} From a combinatorial point of view, the low-valued items are considerably easier to handle. We construct bundles by bag-filling from remained items until an unsaturated type claims a bundle, meaning it values the bundle at least $\alpha$-MMS. Since each unsaturated type now values each item at at most $\alpha$-MMS, we can bound the loss in every bag by $2\alpha$-MMS using standard arguments. Since $\alpha = \frac{1}{2(1+\eta)}$ this loss is at most $1$. Finally, to bound the overall loss despite creating more than $n$ bundles, we crucially use the bound on the minimum probability and the dependence on $\eta$.\\

While we have emphasized the combinatorial aspects of the algorithm and analysis, the full analysis requires a nuanced accounting of loss and intricate probabilistic arguments. We defer these details to the main technical section (\Cref{sec:knownD}).

\subsection{Stochastic Arrivals with Unknown Distribution}
Building on the case of known distributions, we next handle the case when distribution $D$ is unknown. For this, we essentially demonstrate that MMS retains the flexibility to first \emph{learn} the distribution while achieving the \emph{same} competitive guarantee as before. However, this comes at the cost of a more stringent constraint on the minimum probability bound, which, in turn, imposes a tighter limitation on the number of types. Despite these limitations, our upper bound on $k$, though smaller than in the known distribution case, remains sublinear. Formally, we prove the following theorem:
\begin{restatable}{theorem}{thmunknownD}\label{thm:unknownD}
    Given an instance, $\I=([n], M, \{v_i\}_{i \in [k]})$ of the $\problem$ problem in the stochastic arrival model with an unknown distribution \(D\), for any constant \(0<c<0.1\) such that 
    \(
    p_k = \omega\Bigl(\frac{k}{n^{\frac{2}{9}(1-c)}}\Bigr),
    \)
     for any \(\eta>0\), there exists an \(n(\eta)\) such that for all \(n>n(\eta)\), \cref{alg:unknownD} is \(\frac{1}{2(1+\eta)}\)-MMS competitive with probability at least \(1 - o\Bigl(\frac{1}{e^{n^{c/2}}}\Bigr)\).
\end{restatable}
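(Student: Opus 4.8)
The plan is to reduce the unknown-distribution case to the known-distribution case (Theorem~\ref{thm:knownD}) by splitting the arrival sequence into two stages: a \emph{learning stage} consisting of the first $n_0$ arrivals, which we use to build an empirical estimate $\widehat{D} = (\widehat{p}_1,\ldots,\widehat{p}_k)$ of the true distribution $D$, and an \emph{allocation stage} consisting of the remaining $n - n_0$ arrivals, on which we essentially run \Cref{alg:knownD} with $\widehat{D}$ in place of $D$. The key quantitative choice is $n_0 = \Theta(n^{1-c})$ for the given constant $c < 0.1$: this is small enough that sacrificing fairness-critical control over the first $n_0$ agents is affordable, yet large enough that, by a Chernoff/VC-type concentration argument, $|\widehat{p}_i - p_i|$ is uniformly (over all $k$ types) smaller than, say, $p_k/10$ with probability $1 - o(1/e^{n^{c/2}})$. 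Here the more stringent hypothesis $p_k = \omega(k / n^{\frac{2}{9}(1-c)})$ is exactly what is needed: the sampling error after $n_0$ samples scales like $\sqrt{\log k \,/\, n_0}$ (or $\sqrt{n_0}$-type fluctuations in counts), and forcing this to be $o(p_k)$ while also keeping the failure probability inverse-exponential in $n^{c/2}$ pins down the trade-off between $n_0$, $c$, and the allowed range of $p_k$ (hence of $k$).

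\textbf{Key steps, in order.} (1) Set $n_0 := \lceil n^{1-c}\rceil$; during the first $n_0$ arrivals, allocate nothing of consequence (or allocate greedily in a way that wastes at most a negligible $o(1)$ fraction of every type's value---e.g.\ hand each such agent a single low-valued item, or defer), and simply record the empirical type frequencies $\widehat{p}_i = (\text{count of type }i)/n_0$. (2) Prove the estimation guarantee: by Lemma~\ref{cor:nonmonotone-tails} applied to indicator sums (or a direct Chernoff bound) together with a union bound over the $k$ types, with probability $1 - o(1/e^{n^{c/2}})$ we have $\widehat{p}_i \in [(1-\tfrac1{10})p_i,\,(1+\tfrac1{10})p_i]$ simultaneously for all $i$; in particular $\widehat{p}_k = \Theta(p_k)$ and the induced thresholds $\widehat{M}_i = \lfloor (n-n_0)\widehat{p}_i + (n-n_0)^{\epsilon}\sqrt{(n-n_0)\widehat{p}_i}\rfloor$ remain valid high-probability upper bounds on the number of stage-2 arrivals of type $i$. (3) Run \Cref{alg:knownD} on the $n-n_0$ remaining agents with item pool $M$ (minus the $o(n)$ items possibly spent in stage~1) and distribution $\widehat D$; invoke Theorem~\ref{thm:knownD}'s analysis almost verbatim, checking that (a) the minimum-probability precondition $\widehat p_k = \omega\big((n-n_0)^{\epsilon'} k / \sqrt{n-n_0}\big)$ holds for a suitable $\epsilon'$ given our hypothesis on $p_k$, and (b) the ``few agents left'' tail cases and the multi-phase loss accounting go through with $n$ replaced by $n - n_0 = n(1-o(1))$ and the slightly perturbed thresholds. (4) Combine: every agent in stage~2 gets a $\frac{1}{2(1+\eta)}$-MMS bundle (adjusting $\eta$ by an $o(1)$ amount to absorb the $n_0$-loss and the $\widehat D$-vs-$D$ slack), and stage-1 agents are handled by noting $n_0$-many agents arriving early can each be given a fresh singleton / deferred bundle without affecting feasibility---or, more carefully, by reserving a small block of items up front whose removal changes each type's MMS by at most a $1 - o(1)$ factor. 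Finally, union-bound the two failure events (bad empirical estimate; \Cref{alg:knownD} failing on stage~2) to get the claimed $1 - o(1/e^{n^{c/2}})$.

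\textbf{Main obstacle.} The hard part is \emph{not} the concentration bound but showing that burning the first $n_0 = \Theta(n^{1-c})$ arrivals does not destroy ex-post fairness for \emph{those very agents}, and that the item budget and all the delicate Phase 1--3 counting inequalities of \Cref{alg:knownD} survive the replacement $n \mapsto n - n_0$ and $D \mapsto \widehat D$. Concretely, one must argue that the $n_0$ stage-1 agents can each be awarded a bundle of value $\ge \frac{1}{2(1+\eta)}$ \emph{without} committing any item that a later phase will need: the cleanest route is to earmark, before any arrival, a separate reservoir of items---large enough that the first $n_0$ agents (of whatever types) can be served from it greedily via bag-filling, yet small enough that deleting it from $M$ shrinks every type's MMS by only a $1-o(1)$ factor (using the normalization $v_i(M)=n$, $v_i(g)\le 1$, so any $o(n)$ items carry $o(n)$ total value for each type). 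One then runs the known-$D$ algorithm on the residual items with $n-n_0$ agents, and the residual MMS values are still $(1-o(1))$, which is absorbed into $\eta$. Verifying that the overlap-structure case analysis of Phase~2 (the ``two types with small-overlap high-valued sets'' vs.\ ``smallest-probability type last'' dichotomy) is unaffected by these $o(1)$ perturbations is the most tedious remaining check, but it is mechanical once the reservoir argument is in place.
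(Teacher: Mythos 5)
There is a genuine gap. You correctly identify the main obstacle — ensuring ex-post $\alpha$-MMS for the learning-stage agents themselves — but then hand-wave it away ("give them a fresh singleton / deferred bundle," "hand each a single low-valued item," or "greedily via bag-filling from a small reservoir") without an actual mechanism, and this step is where essentially all the work of the paper's proof lives. Deferral is not allowed: allocations are irrevocable and agents depart immediately. A single low-valued item clearly does not reach $\alpha$-MMS. And naive bag-filling from a small reservoir when an adversarially ordered prefix of agents arrives is exactly the adversarial problem, for which only $\Theta(1/k)$ is achievable; a "small" reservoir (one whose removal perturbs MMS by $1-o(1)$) has value $o(n)$ per type, so after losses of $\Theta(1/k)$ per agent you cannot serve $n_0$ agents of arbitrary types at constant MMS unless the reservoir's per-type value is at least $\Theta(k \cdot n_0)$. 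Reconciling "reservoir carries $\Theta(k n_0)$ value for every type" with "reservoir has negligible value" is the crux, and the paper solves it by randomly assigning each item to a bag $B_1$ with a carefully tuned probability $p \approx 2k n^{\epsilon'}/n$, then (i) using a Chernoff-type concentration bound (\Cref{cor:nonmonotone-tails}) to show $v_i(B_1) \geq (1-\delta)\cdot 2k n^{\epsilon'}$ for all types simultaneously, (ii) observing that the MMS of the sub-instance $(B_1, n^{\epsilon'}\text{ agents})$ is therefore at least $k(1-\delta)$, and (iii) running the $\tfrac{1}{k}$-competitive adversarial algorithm (\Cref{alg:adversarial-alg}) on this sub-instance so the factor of $k$ cancels and each learning-stage agent gets value $\geq 1-\delta \geq \alpha$. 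This "$B_1$ inflated by a factor of $k$ to pay for the $1/k$ adversarial loss" idea is the missing ingredient in your proposal, and it is not a mechanical check.

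Two secondary but real issues: (a) Your choice $n_0 = \Theta(n^{1-c})$ is too large. For $c<0.1$ this is $\geq n^{0.9}$, and paying $\Theta(k n_0)$ value per type to serve the learning stage would then exceed $\Theta(n)$ for $k$ near the permitted maximum, leaving nothing for stage two; the paper instead takes the learning stage to be $\lceil n^{\epsilon'}\rceil$ with $\epsilon' = \tfrac{2+c}{3}\approx 2/3$, which is pinned down by the inequalities in \Cref{cl:alleqs} together with $\epsilon = \tfrac{5+4c}{18}$ and the requirement $\epsilon<1/2$; these trade-offs do not allow $n_0$ as large as $n^{1-c}$. (b) You do not separate the regime in which there are many universally high-valued items. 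If $|C| \geq \lceil n^{\epsilon'}\rceil$, the random split of $M'$ would waste these singleton items; the paper instead serves the learning stage directly from $C$ and then invokes \Cref{alg:knownD} on the residual. Your concentration step (2) is fine in spirit, and the overall two-stage skeleton matches the paper, but without the $B_1$/adversarial trick and the $|C|$ dichotomy the argument does not close.
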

 
\noindent \textbf{Algorithm and analysis idea.} Our algorithm operates in two main stages:  

\noindent (1) \textit{Learning Stage:} We first observe a sufficient number of agents to estimate the type distribution. Meanwhile, we allocate bundles valued at least $\alpha$-MMS to all agents arriving at this stage.

\noindent (2) \textit{Allocation Stage:} Using the learned distribution, we apply the algorithm designed for the known distribution setting to obtain an $\alpha$-MMS allocation for the remaining agents.  

A key challenge is ensuring that no type loses excessive value during the learning stage. For instance, if agents of type 1 derive value only from $n$ specific items, we must avoid depleting all of them upfront. Further, this should hold for all types. To address this, we will use randomization. 

Recall that value of any item for a type is at most one. Then, if each item is randomly assigned to two baskets with probability $p$ and $(1 - p)$, then, with high probability, the first basket contains an almost-$p$-fraction of the type’s total value, while the second holds an almost-$(1 - p)$-fraction\footnote{This will hold up to submodular valuations.}. This follows because when viewing MMS as a function of the items in the instance, this function satisfies a Lipschitz property. This allows us to use a Chernoff-type concentration bound to bound the value of a single type in a random bag. Since Chernoff gives exponentially strong bounds on probability, we can use a union bound over all types to ensure value is preserved for all types using the \emph{same} random bag. Leveraging this, we divide the items into two baskets, say $B_1$ and $B_2$. Our high-level plan is to use $B_1$ in the  {\em learning stage}, and $B_2$ in the {\em allocation stage}. 

The probabilities with which the items are distributed between $B_1$ and $B_2$ should be carefully chosen: If $n'$ agents are handled in the learning stage, then $B_1$ should have sufficiently {\em large} value to be distributed to them. This is because the learning-stage is essentially like the adversarial setting, where we have to run the \Cref{alg:adversarial-alg} on instance with $n'$ agents and $B_1$ as the set of items. To ensure that the corresponding allocation to these $n'$ agents is constant-MMS in the original instance, $B_1$ should have value at least $\Theta(n'\cdot k)$ for all the types whp. This follows, if we assign each item to $B_1$ with probability $\frac{2kn'}{n}$. To ensure that this is not too large a loss for the agents arriving in allocation stage while still enough to learn the distribution $D$, we choose $n' \sim n^{\frac{2}{3}}$.

\noindent {\em Handling high-valued items.}
In executing our high-level plan, high-valued items again create issues as in the previous settings. 
If there are a large number of universally high-valued items, then a large fraction of $B_1$ would be formed by these. In particular, $B_1$ may contain far more than $n'$ such items. Clearly keeping these exclusively for the first $n'$ agents will leave insufficient items for the remaining agents. We therefore introduce a stage that isolates universally high-valued items and learns from them for as long as possible. If there are at least $n'$ such items, we use them for learning the distribution. Otherwise, we allocate these items to the earliest arriving agents, remove them from consideration, and then randomly distribute the remaining items into two baskets as discussed earlier. Then, by observing the types of the next $n'$ agents, we estimate the distribution. Meanwhile, we allocate items from $B_1$ to these agents.

In the second (allocation) stage, we are in the case of known-distribution and we apply the algorithm from previous section (\Cref{alg:knownD}) to obtain an $\alpha$-MMS allocation.

While the complete analysis is nuanced, we hope this overview provides an initial understanding for the reader, and we refer them to \Cref{sec:unknowndist} for the full details.

\subsection{Extension to Learning-Augmented Framework}
Now by proving the following lemma, we will show that all of our results extend to a learning-augmented framework. 

\begin{restatable}{lemma}{learningaugmented}\label{lem:learningaugmented}
Consider two inputs of an \textsc{OnlineKTypeFD} problem, \(\I^{e} = ([n], M, \{v^{e}_i\}_{i\in[k]})\) and \(\I = ([n], M, \{v_i\}_{i\in[k]})\), and suppose there exists a constant \(\beta > 1\) such that for every \(i\in[k]\) and every \(g\in M\), \(\frac{1}{\beta}v_i(g) \le v^{e}_i(g) \le \beta v_i(g)\). Then, if an algorithm is \(\alpha\)-MMS competitive on \(\I^{e}\), it allocates each agent a bundle valued at least \(\frac{\alpha}{\beta^2}\) times her MMS value with respect to the instance \(\I\).
\end{restatable}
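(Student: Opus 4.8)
The plan is to track how the two perturbation bounds on individual item values propagate through (i) the MMS value of each type, and (ii) the value of the bundle that the $\alpha$-MMS-competitive algorithm hands each agent. The key observation is that both $\MMS$ and additive bundle-value are monotone and positively homogeneous in the per-item values, so a uniform multiplicative perturbation of the item values perturbs both quantities by the same multiplicative factor in the same direction.

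\textbf{Step 1: MMS scales within a factor of $\beta$.} Fix a type $i$ and let $P^{e}=(P^{e}_1,\dots,P^{e}_n)$ be an MMS partition of $M$ under $v^{e}_i$, so $\MMS^{n}_{v^{e}_i}=\min_{j}v^{e}_i(P^{e}_j)$. Since $v^{e}_i(g)\le\beta v_i(g)$ for every $g$ and valuations are additive, $v^{e}_i(P^{e}_j)\le\beta\,v_i(P^{e}_j)$ for every $j$; taking the minimum over $j$ and noting $P^{e}$ is a valid $n$-partition witnessing a lower bound for $\MMS^{n}_{v_i}$ gives $\MMS^{n}_{v^{e}_i}\le\beta\,\MMS^{n}_{v_i}$. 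The symmetric argument, starting from an MMS partition under $v_i$ and using $v^{e}_i(g)\ge\frac1\beta v_i(g)$, yields $\MMS^{n}_{v_i}\le\beta\,\MMS^{n}_{v^{e}_i}$. Hence $\frac1\beta\MMS^{n}_{v_i}\le\MMS^{n}_{v^{e}_i}\le\beta\,\MMS^{n}_{v_i}$.

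\textbf{Step 2: transfer the algorithm's guarantee.} Run the $\alpha$-MMS-competitive algorithm on $\I^{e}$; it produces an allocation in which each agent, say of type $i$, receives a bundle $B$ with $v^{e}_i(B)\ge\alpha\,\MMS^{n}_{v^{e}_i}$. The same allocation is feasible for $\I$ (the item set $M$ and the agent sequence are identical). Now bound the true value of $B$: using $v_i(g)\ge\frac1\beta v^{e}_i(g)$ item-wise and additivity, $v_i(B)\ge\frac1\beta v^{e}_i(B)\ge\frac{\alpha}{\beta}\,\MMS^{n}_{v^{e}_i}$, and then applying the lower bound from Step 1, $\MMS^{n}_{v^{e}_i}\ge\frac1\beta\MMS^{n}_{v_i}$, we get $v_i(B)\ge\frac{\alpha}{\beta^2}\,\MMS^{n}_{v_i}$, which is exactly the claimed guarantee.

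\textbf{Expected main obstacle.} There is no deep obstacle here — the argument is essentially two applications of monotonicity plus homogeneity of additive valuations. The one point requiring a little care is making sure the direction of each inequality is used consistently: in Step 1 one needs the ``$\le\beta$'' direction to upper-bound $\MMS^{n}_{v^{e}_i}$ and the ``$\ge\frac1\beta$'' direction to lower-bound it (these come from the two different inequalities in the hypothesis), and in Step 2 one must pair the ``true value at least $\frac1\beta$ of estimated value'' step with the ``estimated MMS at least $\frac1\beta$ of true MMS'' step to accumulate the $\beta^2$ loss rather than accidentally cancelling it. It is also worth stating explicitly that the argument uses additivity only to pull the per-item bounds out of sums; it would still go through for any valuation class closed under the relevant monotone/homogeneous operations, but for the stated lemma additivity suffices.
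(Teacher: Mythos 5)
Your proposal is correct and follows essentially the same approach as the paper: both first establish the two-sided bound $\frac{1}{\beta}\MMS^{n}_{v_i}\le\MMS^{n}_{v^{e}_i}\le\beta\,\MMS^{n}_{v_i}$ by passing an MMS partition from one instance to the other, and then chain it with the item-wise perturbation bound on the allocated bundle to accumulate the $\beta^2$ factor. The only cosmetic difference is that you prove the two directions of Step 1 using a "pointwise bound then take min" argument for one direction and the "any partition lower-bounds MMS" argument for the other, whereas the paper applies the latter symmetrically to both, but these are logically interchangeable.
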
 
\begin{proof}
    Consider any type $i\in[k]$ agent, and a bundle $b$ she receives in the online algorithm. Let $\mu^e_i = \MMS_{i}(\I^{e})$, and $\mu_i = \MMS_{i}(\I)$. Let us consider the MMS partition of type $i$ with respect to $\I$, $P^i = (P^i_1, P^i_2, \ldots, P^i_n)$ satisfying $\mu_i = \min_{j \in [n]} v_i(P^i_j)$. Hence, for any $j\in [n]$, $v_i(P^i_j) \geq \mu_i$, and since valuations are linear, $\beta v^e_i(P^i_j) \geq v_i(P^i_j)$. These imply $v^e_i(P^i_j) \geq \frac{\mu_i}{\beta}$ for all $j\in[k]$. This means there exists a partition of $M$ into $n$ bundles where $v^e_i(P^i_j) \geq \frac{\mu_i}{\beta}$. By the definition of MMS, this implies $\mu^e_i \geq \frac{\mu_i}{\beta}$. Doing a similar analysis starting from the MMS partition of type $i$ with respect to $\I^{e}$, we obtain $\beta \mu_i \geq \mu^e_i$. Hence, 
    \begin{equation}\label{eq:mu_is}
        \beta \mu_i \geq \mu^e_i \geq \frac{\mu_i}{\beta}.
    \end{equation} 
    
    Since the algorithm is $\alpha-$MMS competitive on $\I^{e}$, $v^e_i(b) \geq \alpha \cdot \mu^{e}_i$. Using \cref{eq:mu_is}, and considering linearity of valuations, $\beta v_i(b) \geq v^e_i(b)$, we obtain $\beta v_i(b) \geq \alpha \frac{\mu_i}{\beta}$. Hence, $v_i(b) \geq \frac{\alpha}{\beta^2} \mu_i$. Since this analysis holds for all agents, every agent receives a bundle she values at least $\frac{\alpha}{\beta^2}$ times her MMS value with respect to $\I$. 
\end{proof}

The lemma above implies that in a learning-augmented framework—where the \(k\) valuation functions are treated as predictions subject to multiplicative errors (with the erroneous input \(\I^{e}\) used by our algorithms in place of the ground truth instance \(\I\))—the performance guarantees of our algorithms degrade gracefully in proportion to the multiplicative error factor.
\section{Adversarial Arrival of Agents}\label{sec:adversary}
In this section, we study the $\problem$ problem in the adversarial arrival model. We design an algorithm that is $\frac{1}{k}$-MMS competitive, and prove a lower bound to show that there is no online algorithm that can give an approximation better than $\frac{{2}}{\sqrt{k}-2}$, where $k$ is the number of types. Formally, we prove the following two theorems in this section.
\thmadvpos*

\thmadvneg*
\subsection{A $\frac{1}{k}$-MMS-competitive algorithm}
In this section, we will prove \Cref{thm:adv}. Before proceeding, we introduce further notation relevant to this section. 

\noindent \textit{Tentative Allocations.} For each type \(i \in [k]\), let \(G_i\) denote the collection of bundles tentatively allocated to agents of type \(i\). A bundle is tentatively allocated for type \(i\) only if it is valued at least \(\frac{1}{k}\) by that type. Note that the bundles within a given \(G_i\) are disjoint, but a single bundle may be tentatively allocated for multiple types.

\noindent \textit{Saturated types.} We call a type \emph{saturated} if the set $G_i$ for this type has $n-t$ bundles in it, right before the arrival of the $(t+1)$th agent. In the algorithm we use the set of unsaturated types which is the set of all types minus the set of saturated types and denote it using $\mathtt{unsaturated}$. 

Finally, we maintain $R$ as the set of items that are not allocated by the algorithm in any previous round and $P$ is the pool of available items which excludes both tentatively and previously allocated items. As always, we assume the input instance is normalized, so the MMS value of each type is $1$.

\subsubsection{Overview of Algorithm} 

The algorithm operates in two phases. In the pre-processing phase, for each agent type \(i \in [k]\), up to \(n\) items that are valued above \(\frac{1}{k}\) by type \(i\) are identified. Each such item is put into a singleton bundle and tentatively added to the collection \(G_i\). When an agent \(t\) arrives and reveals her type \(i\), the algorithm first checks whether \(G_i\) is non-empty. If it is, an arbitrary bundle from \(G_i\) is allocated to agent \(t\) and removed from every \(G_j\) that contains it. 
If \(G_i\) is empty, a bag-filling procedure is initiated using the pool of available items. Items are sequentially added to a bag until at least one unsaturated type values the bag at least \(\frac{1}{k}\). If the arriving agent’s type \(i\) values the bag at least \(\frac{1}{k}\), the bag is allocated to her immediately. Otherwise, the bag is tentatively allocated to all unsaturated types that value it at least \(\frac{1}{k}\); any type that becomes saturated as a result triggers an update of the unsaturated set, and the items in the bag are removed from the available pool. This bag-filling process repeats until a bundle is allocated to agent \(t\). Finally, if any type possesses more tentative bundles than the number of remaining agents, extra bundles are arbitrarily removed.

\begin{algorithm}[h!]
    \caption{$\problem$ problem in adversarial model}
    \label{alg:adversarial-alg}
    \SetKwInOut{Input}{Input}\SetKwInOut{Output}{Output}
  \Input{An instance $\I= (N,M,\{v_i\}^k_{i=1})$}
  \Output{An allocation of items among agents such that each agent receives a value of at least $\frac{1}{k}$.}
  \BlankLine
  Initialize $R \gets [m]$ and let $n = |N|$ \tcp*{The set of unallocated items}
  $(G_i)_{i \in [k]} \gets$ Preprocess($\I$) \label{adv-step:preprocess}\tcp*{Preprocess to handle high-valued items} 
  For all types that have $|G_i| < n$ add them to saturated types i.e., $\unsat \gets \{i \mid |G_i| < n\}$\;
  \For{$t = 1$ \KwTo $n$}{
  Agent $t$ arrives and reveals type $i$\;
  \eIf{$G_{i} \neq \emptyset$}{
    Allocate to this agent any bundle from her $G_{i}$ and remove this bundle from $G_j$ for all $j \in [k]$. \label{adv-step:allocate-reserved}
  }{
    $P \gets R \setminus \{\cup_{j \in [k]} \cup_{g \in G_j}g\}$ \tcp*{Pool of available items}
    \While{$\sf true$}{
            Initialize Bag $B \gets \emptyset$ \; 
            Fill $B$ arbitrarily with items from $P$ till at least for one type $j \in \unsat$, $v_j(B)\geq \frac{1}{k}$\label{adv-step:bag-fill}\;
            \eIf{$v_{i}(B) \geq \frac{1}{k}$} {\label{adv-step:allocate-start}
                Allocate B to agent $t$\;
                Update $R \gets R \setminus B$ \;
                break out of loop \label{adv-step:allocate-end}
            }{
                \For{all unsaturated types $j \in \unsat$ such that $v_j(B) \geq \frac{1}{k}$}{ \label{adv-step:other-reserve-start}
                    Update $G_j \gets G_j \cup B$ \tcp*{Tentatively allocate $B$}
                    \If{$|G_j| \geq n-t$}{update $\unsat \gets \unsat \setminus \{j\}$}
                    Update $P \gets P \setminus B$ \tcp*{Update the pool of available items} \label{adv-step:other-reserve-end}
                }
            }
            }
        }
    For any type such that $|G_j| > n-t$, release arbitrary $|G_j| - (n-t)$ bundles from her $G_j$. \label{adv-step:maintain-size}
  }
  
\end{algorithm}
\begin{algorithm}[h!]
\caption{\textbf{Preprocessing}: Handling large valued items}
    \label{alg:preprocessing}
    \SetKwInOut{Input}{Input}\SetKwInOut{Output}{Output}
  \Input{An instance $\I= (N,M,\{v_i\}^k_{i=1})$}
  \Output{$\{(G_i)_{i \in [k]}\}$ where $G_i$ is the tentative set of bundles allocated to type $i$}
  \BlankLine
Initialize $G_i \gets \emptyset$ $\forall i\in [k]$\ \\
\For{$i = 1$ \KwTo $k$}{
    \For{each $g \in M$}{
        \If{$v_i(g) \geq \frac{1}{k}$ and $|G_i| < |N|$}{
            $G_i \gets G_i \cup \{\{g\}\}$ \
        }
    }
}
\end{algorithm}
\subsubsection{Analysis}
The central idea of the proof is to show that the Algorithm maintains the following invariant. 
\begin{invariant}[label=inv:adv]
\textbf{Invariant:} Right before the arrival of agent $t+1$, for each type $i \in [k]$, either $|G_i| = n-t$ or $v_i(R) \geq n-t$ holds.
\end{invariant}
\noindent This invariant states that for each type, we either maintain a tentative allocation ensuring that, even if all remaining agents belong to that type, there are sufficient bundles to allocate one to each agent, or we retain a sufficiently large remaining value in \( R \) to fill a bag for an arriving agent of that type. This guarantees that every agent receives a value of at least \( \frac{1}{k} \) upon arrival. Towards this, we first establish some helpful claims.
\begin{claim}\label{cl:atmost1/k}
    After the preprocessing step, for every unsaturated type $i\in[k]$, and item $g\in P$, $v_i(g)< \frac{1}{k}$.
\end{claim}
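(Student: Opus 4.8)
The plan is to extract the statement almost directly from the preprocessing routine (Algorithm~\ref{alg:preprocessing}) and then check that it survives into the main loop. The one fact I need about preprocessing is this: for a fixed type $i$, the inner loop inserts the singleton $\{g\}$ into $G_i$ for \emph{every} item $g$ with $v_i(g)\ge\tfrac1k$, \emph{provided} $|G_i|<n$ at that moment. Hence, if the routine finishes with $|G_i|<n$ --- equivalently, $i$ is one of the types placed in $\unsat$ right after preprocessing --- then the guard $|G_i|<n$ held throughout, so \emph{all} items $g$ with $v_i(g)\ge\tfrac1k$ were added to $G_i$ as singletons. Therefore, immediately after preprocessing, $\{g: v_i(g)\ge\tfrac1k\}\subseteq\bigcup_{h\in G_i}h\subseteq\bigcup_{j\in[k]}\bigcup_{h\in G_j}h$, and since the pool is $P=R\setminus\bigcup_{j}\bigcup_{h\in G_j}h$, no item that an unsaturated type values at least $\tfrac1k$ lies in $P$. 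This already proves the claim for the state right after preprocessing.

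To get the claim at the point $P$ is recomputed inside the loop (where it is actually invoked in the analysis of bag-filling), I would use two structural facts about the algorithm. (i) The set $\unsat$ only ever shrinks: it is initialized to $\{i:|G_i|<n\}$ and the only updates are deletions; hence a type $i$ that is unsaturated when $P$ is computed in round $t$ was unsaturated after preprocessing, so by the previous paragraph every $g$ with $v_i(g)\ge\tfrac1k$ sits in $G_i$ as a singleton there. (ii) A bundle --- in particular such a preprocessed singleton --- leaves $G_i$ only by being allocated in Step~\ref{adv-step:allocate-reserved} or released in Step~\ref{adv-step:maintain-size}; but the release step only shrinks those $G_j$ with $|G_j|>n-t$, and the mid-round check removes $j$ from $\unsat$ as soon as $|G_j|$ reaches $n-t$, so a type still in $\unsat$ always has $|G_j|\le n-t-1$ and is thus never touched by a release while it is unsaturated. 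Combining: for $i$ unsaturated when $P$ is computed, each $g$ with $v_i(g)\ge\tfrac1k$ either was allocated to a past agent --- so $g\notin R\supseteq P$ --- or still has $\{g\}\in G_i$ --- so $g\in\bigcup_j\bigcup_{h\in G_j}h$ and again $g\notin P$. Either way $P$ contains no item valued at least $\tfrac1k$ by any unsaturated type.

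I expect the only delicate point to be fact~(ii): carefully ruling out that a preprocessed singleton of an unsaturated type is ever released. This hinges on the bookkeeping that $|G_j|$ grows only through bag-filling, which touches a $G_j$ solely while $j\in\unsat$ and stops the moment $|G_j|$ hits $n-t$ (removing $j$ from $\unsat$), so an unsaturated type's bundle set never reaches the release threshold $n-t$. If, however, the intended reading is only the static state right after preprocessing --- as the phrase ``after the preprocessing step'' suggests --- then the first paragraph is already a complete proof and facts (i)--(ii) are unnecessary.
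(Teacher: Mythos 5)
Your first paragraph is exactly the paper's argument, and your paragraphs (i)--(ii) correctly supply the temporal bookkeeping that the paper leaves implicit: although the claim is phrased ``after the preprocessing step,'' it is invoked later inside the main loop (in \Cref{clm:adv-bundle-value} and \Cref{lem:adv-1/k}) after $P$ has been recomputed and bundles have been allocated and released, so one does need to verify that a preprocessed singleton $\{g\}$ of a still-unsaturated type $i$ is never released from $G_i$ by Step~\ref{adv-step:maintain-size}. Your observation that $\unsat$ only shrinks, and that the release step only touches $G_j$ with $|G_j|>n-t$ while a type still in $\unsat$ always satisfies $|G_j|\le n-t$, is exactly the missing glue (and \Cref{clm:adv-saturated} is proved independently of this claim, so there is no circularity). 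One tiny nit: you assert $|G_j|\le n-t-1$ for $j\in\unsat$ at release time, but $|G_j|=n-t$ is possible if no mid-round check ran on $j$ that round; this does not affect the conclusion because the release threshold is strictly $|G_j|>n-t$. In short, same approach as the paper, made explicit.
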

\begin{proof}
    Suppose for contradiction that there is an item $g\in P$, where $v_i(g)\geq \frac{1}{k}$. As type $i$ is not saturated, $|G_i| < n$, hence this item should have been added as a singleton bundle to $G_i$ in the preprocessing step. As $P$ excludes all items tentatively allocated to all types, $g\notin P$, and this contradicts with our assumption.
\end{proof}

\begin{claim}\label{clm:adv-saturated}
    At any point if a type gets saturated in the algorithm then she remains saturated till the end.
\end{claim}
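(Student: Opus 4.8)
The plan is to combine a one-line bookkeeping fact about the set $\unsat$ with an induction on the round index that tracks $|G_i|$. First I would observe that no line of \Cref{alg:adversarial-alg} ever \emph{inserts} a type into $\unsat$: the set is fixed once at the start (to the types with $|G_i|<n$) and thereafter is only ever shrunk, inside the bag-filling loop, when some $|G_j|$ reaches the current threshold $n-t$. Hence once type $i$ leaves $\unsat$ it never re-enters, which is already the ``algorithmic'' reading of ``saturated''. What remains is the quantitative statement that the size of $G_i$ keeps exactly matching the number of not-yet-arrived agents.

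I would then prove by induction on $t$ that if type $i$ is saturated just before agent $t+1$ arrives --- i.e.\ $|G_i| = n-t$ at that moment, an equality because the trimming step (Line~\ref{adv-step:maintain-size}) at the end of iteration $t$ leaves every $G_j$ with at most $n-t$ bundles --- then $|G_i| = n-(t+1)$ just before agent $t+2$ arrives. Two facts drive this: (a) within iteration $t+1$ the only step that can enlarge $G_i$ is the tentative-reservation block (Lines~\ref{adv-step:other-reserve-start}--\ref{adv-step:other-reserve-end}), which operates only on types currently in $\unsat$, so a saturated $G_i$ never grows; and (b) iteration $t+1$ removes at most one bundle from $G_i$: in the branch where a reserved bundle is allocated it is deleted from all $G_j$ (hence at most one removal from $G_i$), while in the bag-filling branch the bundle handed out is a freshly filled bag that lies in no $G_j$, so no $G_j$ loses anything. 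Combining (a) and (b), just before the trimming step runs we have $|G_i|\in\{\,n-t-1,\ n-t\,\}$; the trimming then releases $\max\{0,\ |G_i|-(n-(t+1))\}$ bundles, and since $n-t-1=n-(t+1)$, in both cases the iteration ends with $|G_i|=n-(t+1)$, so $i$ is still saturated. Taking as base case the round at which $i$ first becomes saturated (which holds after preprocessing for any type with $|G_i|=n$, and after the trimming step for any type that crosses the threshold during a bag-filling), the induction gives saturation for every subsequent round, i.e.\ until the end. (If $G_i=\emptyset$ for a saturated $i$, then $n-t=0$ and no further agent arrives, so there is nothing to prove.)

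The one place that needs care --- and the main obstacle --- is nailing down fact (b): the bag-filling sub-loop can iterate several times within a single outer round, so I would need to verify that it only ever consumes raw items from the pool $P = R \setminus \bigcup_{j}\bigcup_{g\in G_j}\{g\}$, which by construction excludes every already-reserved bundle, whence no existing bundle of any $G_j$ is destroyed during bag-filling; the only deletion from the $G_j$'s in the whole round is the single bundle allocated to agent $t+1$ in the reserved-bundle branch. Once that is pinned down, ``$|G_i|$ is non-increasing with per-round drop at most one'' together with ``the remaining-agent count drops by exactly one per round'' makes the equality $|G_i| = n-t$ self-propagating --- which is exactly the claim --- and this monotonicity is then the workhorse for maintaining the stated invariant.
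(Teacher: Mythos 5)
Your proof follows the same high-level strategy as the paper's: establish that once $|G_i| = n - t$, the per-round change is exactly one net removal, so the equality self-propagates. The conclusion is correct, and your care in checking that the bag-filling sub-loop only consumes items from the pool $P$ (which excludes reserved bundles) is a genuinely useful addition, since the paper's proof is terser and silent on this.

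There is, however, one inaccurate step: fact~(a), ``a saturated $G_i$ never grows,'' tacitly identifies ``saturated'' (the quantitative condition $|G_i| = n - t$) with ``not in $\unsat$'' (the algorithm's bookkeeping variable), and these are not the same. The set $\unsat$ is initialized to $\{i : |G_i| < n\}$ and thereafter only shrinks when a bundle addition during bag-filling pushes $|G_j| \geq n - t$. It is never updated when a type becomes saturated simply because the threshold $n - t$ has decreased to meet a stagnant $|G_i|$. Concretely, if $|G_i| = n - 1$ after preprocessing and round~1 neither removes from nor adds to $G_i$, then before agent~$2$ the type is saturated ($|G_i| = n - 1$) yet still in $\unsat$, so in round~$2$ it can gain a bundle via the tentative-reservation block, making $|G_i|$ exceed $n - t$ before trimming. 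This contradicts your claimed pre-trim window $|G_i| \in \{n-t-1, n-t\}$. The induction survives anyway: fact~(b) still gives $|G_i| \geq n - (t+1)$ before trimming, and Line~\ref{adv-step:maintain-size} caps it at $n - (t+1)$ afterward regardless of any intermediate growth, so $|G_i| = n - (t+1)$. The argument should simply lean on ``loses at most one, trimmed down to the cap'' rather than on fact~(a). Relatedly, your base-case list (``$|G_i| = n$ after preprocessing'' or ``crossed the threshold during bag-filling'') omits the third way a type can first become saturated, namely the threshold dropping to meet an unchanged $|G_i|$, which is exactly the scenario above.
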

\begin{proof}
    Recall that we call a type $i\in[k]$ saturated when we have $r$ agents yet to come and the type has $r$ tentatively allocated bundles. Right before agent $t$ arrives, if type $i$ is saturated that means $G_i$ has $n - (t-1)$ tentatively allocated bundles. When agent $t$ comes in and takes a bundle, she will either take a bundle that also belongs to $G_i$ or she will take some other bundle. In the first case, we still have $n-t$ tentative bundles reserved for type $i$, and in the second case we will drop a bundle to avoid saving extra bundles. In both cases, before the arrival of agent $t+1$, $|G_i| = n-t$, hence the type remains saturated.
\end{proof}

\begin{claim}\label{clm:adv-bundle-value}
    At any point in the algorithm, for any unsaturated type $i \in [k]$, and a tentatively allocated bundle $B \in \bigcup_{j\in[k]} G_j$, we have $v_i(B)\leq 1$. 
\end{claim}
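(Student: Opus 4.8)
The plan is to split according to how $B$ got into $\bigcup_{j\in[k]} G_j$: either $B$ is a singleton $\{g\}$ created in the preprocessing step (\Cref{alg:preprocessing}), or $B$ is a bag assembled during some bag-filling round of the main loop and tentatively allocated at step~\ref{adv-step:other-reserve-start} of \Cref{alg:adversarial-alg}. These are the only two ways a bundle ever enters some $G_j$---and bundles are never edited afterwards, only removed wholesale at steps~\ref{adv-step:allocate-reserved} and~\ref{adv-step:maintain-size}---so the case split is exhaustive. For a preprocessing singleton there is nothing to do: since the instance is normalized, $v_i(B)=v_i(g)\le \MMS_i=1$ for every type $i$.

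So the real work is the bag case. I would fix the moment $\tau$ at which $B$ was created and tentatively allocated, and let $g$ be the last item inserted into $B$ during that bag-filling round, writing $B':=B\setminus\{g\}$ for the partial bag just before $g$ was added. The first step is to establish that type $i$ is already unsaturated at time $\tau$: it is unsaturated at the (weakly later) execution point named in the statement, and by \Cref{clm:adv-saturated} saturation is irreversible, so $i$ was unsaturated at every earlier point, in particular at $\tau$. Then I would invoke two facts about time $\tau$: (i) by the stopping rule of the bag-filling loop (step~\ref{adv-step:bag-fill}), just before $g$ was added no unsaturated type valued the partial bag at $\ge\tfrac1k$, and specializing this to the unsaturated type $i$ gives $v_i(B')<\tfrac1k$; (ii) the item $g$ was drawn from the pool $P$ and $i$ was unsaturated, so \Cref{cl:atmost1/k} gives $v_i(g)<\tfrac1k$. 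Adding these and using additivity of $v_i$, $v_i(B)=v_i(B')+v_i(g)<\tfrac2k\le 1$, the last inequality using $k\ge 2$. (If $k=1$ the bag case cannot arise: any bag that meets the stopping rule has value $\ge 1=\tfrac1k$ for the arriving agent and is therefore handed to her at step~\ref{adv-step:allocate-start} instead of being tentatively allocated, so the claim is vacuous there.)

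The one point I expect to need care with is the mismatch between two time coordinates: the statement quantifies over an arbitrary execution point, whereas the structural facts I lean on---the bag-filling stopping condition and \Cref{cl:atmost1/k}---only describe the earlier moment at which $B$ was built. Bridging that gap is precisely the role of the monotonicity of saturation (\Cref{clm:adv-saturated}); once ``type $i$ was already unsaturated when $B$ was created'' is secured, what remains is the single-line estimate above, drawing on additivity, the normalization bound $v_i(g)\le 1$, and \Cref{cl:atmost1/k}.
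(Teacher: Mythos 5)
Your proof is correct and follows essentially the same route as the paper: handle preprocessing singletons via normalization, and for bag-filling bundles combine the stopping rule with \Cref{cl:atmost1/k} to bound the value by $\tfrac{2}{k}\le 1$. If anything, your write-up is slightly more careful than the paper's at two points it leaves implicit: you explicitly invoke \Cref{clm:adv-saturated} to transport ``$i$ unsaturated now'' back to ``$i$ unsaturated when $B$ was assembled,'' and you note why the bag case is vacuous when $k=1$; you also skip the paper's harmless sub-split between $B\in G_i$ and $B\notin G_i$, since the $\tfrac{2}{k}$ bound covers both.
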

\begin{proof}
If $B$ contains a single item, as the input is normalized, the value of every item is bounded by $1$ for all types. Hence, $v_i(B)\leq 1$.

If $B$ contains multiple items, it was created during the bag-filling procedure from the pool of remained items. We will prove that if $B \in G_i$, then $v_i(B) \leq \frac{2}{k}$. If $B \notin G_i$, then $v_i(B) < \frac{1}{k}$. In both scenarios as $k\geq 2$, we have that $v_i(B) \leq 1$.

Consider a bundle $B$ that is tentatively allocated in the bag-filling procedure. This bundle is assigned to any unsaturated type \( i \in [k] \) that values it at least \( \frac{1}{k} \).

Thus, if \( B \) is not allocated to an unsaturated type, its value for that type must be at most \( \frac{1}{k} \). Conversely, whenever an unsaturated type values \( B \) at least \( \frac{1}{k} \), it is tentatively allocated to her. Right before adding the last item to \( B \), no unsaturated type valued it at \( \frac{1}{k} \). Since all items in the pool have a value of at most \( \frac{1}{k} \) for all unsaturated types by \cref{cl:atmost1/k}, the last item added to \( B \) also has a value of at most \( \frac{1}{k} \) for them. Together, these observations imply that the total value of \( B \) can be at most \( \frac{2}{k} \) for the types that received $B$ as a tentative allocation.
\end{proof}

We can now prove the following main lemma.
\begin{lemma}\label{lem:invariant-adv}
    Algorithm \ref{alg:adversarial-alg} maintains Invariant \ref{inv:adv} throughout the algorithm.
\end{lemma}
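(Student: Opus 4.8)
The plan is to prove Invariant~\ref{inv:adv} by induction on the agent index $t$, showing that if the invariant holds right before agent $t$ arrives, it continues to hold right before agent $t+1$ arrives. The base case $t=0$ is immediate: before any agent arrives, for each type $i$ either the preprocessing step filled $G_i$ to size exactly $n$ (so $|G_i| = n - 0$, and $i$ is saturated), or $|G_i| < n$, in which case $i$ is unsaturated and $v_i(R) = v_i(M) = n = n-0$ by normalization. So I would first dispense with this and then fix $t \ge 1$ and a type $i \in [k]$, splitting on whether type $i$ is saturated or unsaturated just before agent $t$ arrives.

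For the inductive step, the saturated case is the easy one: by \Cref{clm:adv-saturated} a saturated type stays saturated, and the bookkeeping in \Cref{adv-step:allocate-reserved} (removing the allocated bundle from every $G_j$) together with the final trimming step \Cref{adv-step:maintain-size} guarantees $|G_i|$ drops by exactly the right amount, landing at $|G_i| = n - t$; I'd cite the argument inside the proof of \Cref{clm:adv-saturated} here. The real work is the unsaturated case, where by the inductive hypothesis $v_i(R) \ge n - (t-1) = n-t+1$ before agent $t$ arrives, and I must show that after processing agent $t$ either $i$ has become saturated or $v_i(R) \ge n-t$. The key observation is that $R$ only shrinks when a bag $B$ is removed from it in the allocation branch (\Cref{adv-step:allocate-start}--\Cref{adv-step:allocate-end}) — the tentative-allocation branch updates only the pool $P$, not $R$. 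So I need to bound $v_i(B)$ for the single bag (if any) that actually gets allocated to agent $t$ during her turn. If that bag was a preexisting singleton from some $G_j$, it contains one item of value $\le 1$; if it was built by bag-filling, \Cref{clm:adv-bundle-value} gives $v_i(B) \le 1$ for every unsaturated type $i$. Either way $v_i(R)$ decreases by at most $1$, so it stays $\ge n-t$ — unless type $i$ became saturated in the meantime, in which case we're done by the other branch. I should also note that if agent $t$'s own type is $i$ and she is served from a bag-filled $B$ that she values at $\ge 1/k$, the same bound $v_i(B)\le 1$ applies; and if she is served from her own nonempty $G_i$, then actually $R$ does not shrink at all, and moreover $|G_i|$ decreases by one but the invariant was $|G_i| \ge n-t+1$ only if $i$ were saturated — so I need to be slightly careful and observe that when $G_i \ne \emptyset$ for the arriving type, the path that touches $R$ is not taken.

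The main obstacle I anticipate is handling the interaction between the tentative allocations of \emph{other} types and the value $v_i(R)$ for the type $i$ under consideration: when bag-filling runs for agent $t$, it may tentatively allocate several bags to various unsaturated types $j \ne i$ before finally allocating one bag to agent $t$, and all of those bags leave $P$ but not $R$ — so I must be careful that $v_i(R)$ is genuinely only affected by the one final allocated bag, and that this bag's value to $i$ is controlled by \Cref{clm:adv-bundle-value} regardless of which type triggered its creation. A secondary subtlety is the edge case where the while-loop runs multiple times: each failed iteration tentatively allocates a bag and shrinks $P$ but not $R$, so only the terminating iteration matters for $R$; I'd state this explicitly. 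Once these points are nailed down, the arithmetic $n-t+1-1 = n-t$ closes the induction, and combined with the fact that an unsaturated type always has every pool item worth $<1/k$ to it (\Cref{cl:atmost1/k}) — so bag-filling for such a type can always terminate while the invariant holds — this yields that every arriving agent receives value $\ge 1/k$, which is what \Cref{thm:adv} needs.
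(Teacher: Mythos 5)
Your proof is correct and follows essentially the same inductive strategy as the paper (Claim~4.2 for the saturated case, then bounding the one allocated bag's value to type $i$ by $1$ in the unsaturated case). There is one small imprecision worth flagging: you repeatedly invoke \Cref{clm:adv-bundle-value} to bound $v_i(B)$ for the bag $B$ built by bag-filling and handed \emph{directly} to agent $t$. But \Cref{clm:adv-bundle-value} is stated only for bundles that are tentatively allocated, i.e.\ bundles in $\bigcup_j G_j$; the final bag that terminates the while-loop is allocated immediately to agent $t$ and never enters any $G_j$, so the claim as stated does not cover it. The bound $v_i(B)\le 1$ is still true for that bag, but it needs the parallel argument the paper spells out inside Lemma~\ref{lem:invariant-adv}'s proof: right before the last item was added, no unsaturated type (including $i$) valued the bag at $\ge \tfrac1k$, and the last item is worth $<\tfrac1k$ to every unsaturated type by \Cref{cl:atmost1/k}, so $v_i(B)\le \tfrac2k\le 1$ if $i$ claims it, and $v_i(B)<\tfrac1k$ if it does not. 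Once you replace the blanket citation of \Cref{clm:adv-bundle-value} with this short direct argument for the freshly built bag, the proof is complete. Your organizational choice — tracking which operations actually shrink $R$, rather than the paper's split on the arriving agent's type versus the tracked type $i$ — is a perfectly sound equivalent framing, and your remark that the $G_i\ne\emptyset$ branch in the pseudocode as literally written does not update $R$ (making the invariant vacuously preserved there) is a fair reading, though the intended semantics and the paper's own proof treat $R$ as the set of still-unallocated items and bound the loss by $1$ via \Cref{clm:adv-bundle-value} instead.
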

\begin{proof}[Proof of Lemma \ref{lem:invariant-adv}]
    We prove this using induction on the number of agents that have already arrived. Before any agent arrives, all items are still unallocated and therefore for all types $i \in [k]$, we have $v_i(R) = v_i([m]) = n$ and the invariant holds. 

    Now consider any $t \in [n]$. Assuming that the invariant holds before the arrival of agent $t$, we prove that it holds before the arrival of agent $t+1$ as well. Fix any type $i \in [k]$. If $i$ is saturated before the arrival of agent $t$, then by Claim \ref{clm:adv-saturated}, $i$ remains saturated and the invariant holds. Therefore, from now on suppose $i$ is unsaturated before the arrival of agent $t$. Since the invariant holds, we have $v_i(R) \geq n-(t-1)$ before allocating any items to agent $t$. Let the type of agent $t$ be $j$. Consider the following cases.
    \paragraph{Case 1: $i = j$.} If the current agent is of type $i$, and if $G_i \neq \emptyset$, then we allocate a bundle from $G_i$ to agent $t$. Using claim \ref{clm:adv-bundle-value}, the value of this bundle is at most $1$ for type $i$. On the other hand if $G_i = \emptyset$ then she takes a bundle $B$ via bag-filling from the pool of remained items. Right before adding the last item to the bundle, the bundle was unclaimed (valued less than $\frac{1}{k}$) by all unsaturated types, including $i$. By \cref{cl:atmost1/k}, the last added item can have a value of at most $\frac{1}{k}$ for all unsaturated types; hence the value of $B$ can be at most $\frac{2}{k}$, and since $k\geq 2$, $v_i(B)\leq 1.$

    \paragraph{Case 2: $i \neq j$.} Now suppose agent $t$ is of type $j\neq i$. 
    \begin{enumerate}
        \item  If agent $t$ takes a bundle $B\in G_j$, by \cref{clm:adv-bundle-value}, $v_i(B)\leq 1$.
        \item If $G_j = \emptyset$ when agent $t$ arrives, she takes a bundle $B$ via bag-filling from the remained pool of items. If this bundle is never claimed by type $i$, then $v_i(B) < \frac{1}{k}$. If the bundle is claimed by type $i$, it is only allocated to agent $t$ if both types $i$ and $j$ claim it at the same time. In this case, since no unsaturated type claimed the bundle before addition of the last item, and the last added item has a value of at most $\frac{1}{k}$ for all unsaturated types by \cref{cl:atmost1/k}, the bundle's value can be at most $\frac{2}{k} \leq 1$ since we assume $k \geq 2$. Hence, even in this case, $v_i(B)\leq1$.
    \end{enumerate}
    We proved that in all cases, the value type $i$ loses to agent $t$ is at most $1$. Therefore, the remained value in $R$ after allocating this bundle is at least $n - (t - 1) - 1 \geq n-t$. Hence the invariant holds.
    \end{proof}

    Now we prove our next main lemma which says that as long as the invariant is true, each incoming agent can be allocated a bundle of value $\frac{1}{k}$.

\begin{lemma}\label{lem:adv-1/k}
    If the invariant is maintained, each agent $t \in N$ can be allocated a value of $\frac{1}{k}$.
\end{lemma}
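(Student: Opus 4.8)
The plan is to show that every arriving agent receives value at least $\frac1k$, by a case split on whether her type's tentative set is empty. Fix agent $t$ arriving with type $i$. Apply \Cref{inv:adv} at the moment just before agent $t$ arrives (for $t=1$ use the trivial base case $v_i(R)=v_i(M)=n$): either $|G_i|=n-(t-1)$ or $v_i(R)\ge n-(t-1)$. In the first case — and more generally whenever $G_i\neq\emptyset$ — the algorithm allocates agent $t$ an arbitrary bundle $B\in G_i$, which is fine because a bundle is placed in $G_i$ only when type $i$ values it at least $\frac1k$; note this covers $|G_i|=n-(t-1)\ge 1$ since $t\le n$. So the only remaining case is $G_i=\emptyset$, where $|G_i|=0\neq n-(t-1)$ and hence $v_i(R)\ge n-(t-1)\ge 1$; here the algorithm runs the bag-filling loop.

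The crux is to prove that at the start of every iteration of this loop (before agent $t$ is served) one has $v_i(P)\ge\frac1k$, where $P$ is the current pool. Three observations feed into this. (i) Throughout the loop $G_i$ stays empty, since a bundle is added to $G_j$ only inside the else-branch, which is entered only when agent $t$'s own type $i$ values the bundle less than $\frac1k$ — so the bundle is never added to $G_i$; consequently $i$ also stays unsaturated. (ii) By \cref{clm:adv-bundle-value} together with \cref{cl:atmost1/k}, for an unsaturated type any tentative bundle it does not hold is worth less than $\frac1k$ to it; since $G_i=\emptyset$, every bundle in $\bigcup_{j}G_j$ is worth less than $\frac1k$ to type $i$. (iii) The distinct bundles in $\bigcup_j G_j$ are pairwise disjoint: two bag-filling bundles are disjoint because a bundle's items leave the pool when it is tentatively allocated, a bag-filling bundle is disjoint from every preprocessing singleton because singletons' items are excluded from the pool, and distinct preprocessing singletons are disjoint by construction. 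Writing $\mathcal B$ for this set of distinct bundles and noting no items leave $R$ during the loop, $v_i(R)=v_i(P)+\sum_{B\in\mathcal B}v_i(B)$. Each $|G_j|\le n-(t-1)$ during round $t$ (true at the start of the round by the release step of round $t-1$, or the cap $|N|=n$ in preprocessing for $t=1$, and a type stops enlarging its $G_j$ once $|G_j|$ reaches $n-t$, so it never exceeds $n-t+1=n-(t-1)$), and $|G_i|=0$, so $|\mathcal B|\le\sum_{j\neq i}|G_j|\le(k-1)(n-(t-1))$. Combining with (ii),
\[
v_i(P)=v_i(R)-\sum_{B\in\mathcal B}v_i(B)\ \ge\ (n-(t-1))-\frac{(k-1)(n-(t-1))}{k}\ =\ \frac{n-(t-1)}{k}\ \ge\ \frac1k.
\]

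Finally, $v_i(P)\ge\frac1k>0$ at the start of each iteration means the bag-filling never gets stuck: if all of $P$ were added to the current bag, type $i$ (unsaturated) would already value it at least $\frac1k$, so the stopping condition is always reached. Moreover, each iteration that does not serve agent $t$ strictly shrinks $P$, so after finitely many iterations the loop must break — which it does only by allocating to agent $t$ a bundle her type values at least $\frac1k$. I expect the bound $v_i(P)\ge\frac1k$ to be the delicate point: the obvious count gives only $\sum_j|G_j|\le k(n-(t-1))$, which degrades the estimate to the useless $v_i(P)>0$; the key is that in this case $G_i$ is empty, which shaves off exactly one ``copy'' of $n-(t-1)$ and leaves precisely the $\frac{n-(t-1)}{k}$ slack we need.
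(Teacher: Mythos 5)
Your proof is correct and takes essentially the same approach as the paper's: split on whether $G_i$ is empty; in the empty case use the invariant to get $v_i(R)\ge n-(t-1)$, bound $v_i(\bigcup_{j\ne i}G_j)$ by $(k-1)(n-(t-1))\cdot\frac1k$ using that every tentative bundle outside $G_i$ is worth less than $\frac1k$ to the (still-unsaturated) type $i$ and that each $|G_j|\le n-(t-1)$, and conclude $v_i(P)\ge\frac{n-(t-1)}{k}\ge\frac1k$. Your extra observations (that $G_i$ stays empty across bag-filling iterations, that the distinct tentative bundles are pairwise disjoint, and the explicit termination argument) are small rigor additions but do not change the route; the one mild imprecision is that neither \cref{clm:adv-bundle-value} nor \cref{cl:atmost1/k} states verbatim that a tentative bundle not in $G_i$ is worth $<\frac1k$ to an unsaturated $i$ — for a preprocessing singleton this really rests on \cref{clm:adv-saturated} (so $i$ was never saturated, hence never had bundles released, hence any high-valued singleton still in some $G_j$ would still be in $G_i$) — but the paper's own proof elides this identically, so it is not a gap relative to the paper.
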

\begin{proof}
    When agent $t$ comes in, suppose her type is $i$. If $G_{i} \neq \emptyset$ then we can assign her a bundle from the tentative allocation $G_i$. Note that a bundle $B$ is only added to $G_i$ if $v_i(B)\geq \frac{1}{k}$. Hence, the agent receives a bundle with value of at least $\frac{1}{k}$. 
    
     Otherwise, if \(G_i = \emptyset\), the algorithm attempts to fill a new bag until its value for type \(i\) reaches at least \(\frac{1}{k}\). During this bag-filling process, some bundles may be tentatively allocated to other types. We want to show that despite this, agent $t$ eventually receives a bag valued at least \(\frac{1}{k}\). To see this, note that if agent $t$ (or equivalently type $i$) claims a bag with value at least $\frac{1}{k}$, the bag is immediately allocated to the agent. The only way she loses value to other types during this process is if the other types claim the bag while she does not. However, in such cases, each bag tentatively allocated to other types has a value of less than \(\frac{1}{k}\) for type \(i\). The maximum number of such bags is \((n - (t-1))\cdot(k-1)\). This follows because, for each of the other \(k-1\) types, their tentative allocations can include at most \(n - (t-1)\) bags—beyond this point, the agent is saturated, and no further bundles are allocated to it\footnote{A careful reader might note that we need to only keep $(n-t)$ bags to saturate any agent, however this does not give us any gain in the approximation factor.}. Therefore, 
    \begin{align*}
        v_i(\cup_{j \in [k], j \neq i} G_j) \leq (n-(t-1))\cdot (k-1) \frac{1}{k}
    \end{align*}
    Moreover, since $G_i = \emptyset$, $v_i(G_i) = 0$. Hence, $$v_i(\cup_{j \in [k]} G_j) \leq (n-(t-1))\cdot (k-1) \frac{1}{k}.$$
    
    Given the pool of items $P = R \setminus \{\cup_{j \in [k]} \cup_{g \in G_j}g\}$, by additivity of valuations, we have $v_i(P) = v_i(R) - v_i(\bigcup_{j\in[k]} G_j)$. By our invariant, $v_i(R) \geq n-t + 1$. Putting these together, the remained value in the pool of items $P$ satisfies
    \begin{align*}
        v_i(P) = v_i(R) - v_i(\cup_{j \in [k]} G_j)  \geq (n-t+1)\cdot(1 - (k-1)\frac{1}{k}) \geq (n-t+1)\frac{1}{k}.
    \end{align*}
    Therefore, as long as \( t \leq n \), a sufficient amount of value remains in the pool of remaining items. Consequently, we can still construct a bag with a value of at least \( \frac{1}{k} \) from \( P \), even after multiple bundles have been tentatively allocated to other types.
\end{proof}

Lemmas \ref{lem:invariant-adv} and \ref{lem:adv-1/k} together establish \Cref{thm:adv}.

\subsection{Non-existence of $\Omega(\frac{1}{\sqrt{k}})$-MMS-competitive algorithm}\label{subsec:adv-lower-bound}

We now complement our algorithmic result with a proof for non-existence of a ${\frac{2}{{\sqrt{k} - 2}}}$-MMS competitive algorithm under adversarial arrivals. We note that this holds even when $k$ is a constant. 

For this result, we will assume that the valuations are not normalized. This lets us keep our valuations binary i.e., all types value all items either $0$ or $1$. The same example works with normalized valuations. However, it is beneficial to work with binary valuations to notice the stark contrast with the offline setting where a polynomial-time computable exact MMS allocation is known to exist.
\thmadvneg*

\begin{proof}
    For any $k$, and even number $n$, we construct an instance with binary valuations as follows. Let $\mu_1 = \Big\lfloor {\frac{\sqrt{k}}{2}} \Big\rfloor$. Note that $k \geq 4\binom{\mu_1}{2} + \mu_1 + 1$. We describe the valuation function of $4\binom{\mu_1}{2} + \mu_1 + 1$ types, and the remained $k-(4\binom{\mu_1}{2} + \mu_1 + 1)$ types can have any arbitrary valuations.  

    Our instance has $n \mu_1$ items. Let us denote these items by $\{1, 2, \ldots, n\mu_1\}$. Let us partition these items into $\mu_1$ intervals, each containing $n$ items. Interval $r\in[\mu_1]$ is $G_r = \{(r-1)n +1, rn\}$.
    
    The first type, $i$, values all of the items at $1$. Therefore, this type's MMS value is $\mu_1$. For all the remaining types, the MMS value is $1$ and therefore they value exactly $n$ items at $1$, and the rest of the items at $0$. Call $\mu_1$ of these types $i_1, \ldots i_{\mu_1}$. For any $r\in[\mu_1]$, type $i_r$ values items in $G_r$ at $1$ and all others at $0$. 
    
    Finally, we define the remaining $4 \binom{\mu_1}{2}$ types. These types are created so that we have four types associated with every pair of distinct types from $\{i_1, \ldots, i_{\mu_1}\}$. For any pair of distinct $l, r \in [\mu_1]$, we define four new types, denoted as $\{i^{1}_{l,r}, i^{2}_{l,r}, i^{3}_{l,r}, i^{4}_{l,r}\}$. Now to see how the valuations of these types are defined, fix a pair $\{l,r\}$. Given this pair, to define the valuation for $i^p_{l,r}$ for every $p\in[4]$, arbitrarily divide $G_l$ and $G_r$ into two parts of equal size. Call these $G_l^1, G_l^2$ and $G_r^1, G_r^2$. Now we define the valuations for the four types corresponding to pair $\{l,r\}$ as follows:
    \begin{itemize}[leftmargin=*]
        \item $i^1_{l,r}$ likes items in $G_l^1 \cup G_r^1$ at $1$ and all others at $0$.
        \item $i^2_{l,r}$ likes items in $G_l^1 \cup G_r^2$ at $1$ and all others at $0$.
        \item $i^3_{l,r}$ likes items in $G_l^2 \cup G_r^1$ at $1$ and all others at $0$.
        \item $i^4_{l,r}$ likes items in $G_l^2 \cup G_r^2$ at $1$ and all others at $0$.
    \end{itemize}
    Consider an adversarial order where the first agent is of type $i$. To give this agent a better than $ {\frac{2}{\sqrt{k} - 2}} = \frac{1}{\frac{\sqrt{k}}{2} - 1} \geq \frac{1}{\mu_1}$ MMS approximation, we must give this agent at least two items. Now if both these items belong to same interval $G_r$ for some $r \in [\mu_1]$, then type $i_r$ looses two items to the first agent, and is remained with $n-2$ valuable items. Then, the adversary can send all the $n-1$ remaining agents of type $i_r$ and therefore the last agent will get a value of $0$. To avoid this, the algorithm must choose two different intervals $G_r$, and $G_l$ and pick one item from each, for assigning them to the first agent. Note that for any two items chosen from $G_l$ and $G_r$, there exists one type $i^p_{l,r}$ where $p \in [4]$ that values both items at $1$. That type loses two valuable items to the first agent, hence is remained with only $n-2$ valuable items. Hence, the adversary can send all the $n-1$ remaining agents of this type, so the last agent has no valuable item left and gets a value of $0$. Consequently, the algorithm cannot select any pair of items for the first agent that remains robust against the adversary's decisions. Thus, no online algorithm can guarantee better than a $\frac{2}{\sqrt{k}-2}$-MMS. 
\end{proof}
\section{Stochastic Arrival of Agents with Known Distribution}\label{sec:knownD}
In this section, we study the \textsc{OnlineKTypeFD} problem under the stochastic arrival model, where agents arrive according to a known probability distribution. 
Under mild assumptions, we design an algorithm that is \(\alpha\)-MMS competitive, with \(\alpha\) approaching \(\frac{1}{2}\) as \(n\) grows large.

For each type \(i \in [k]\), let \(X^n_i\) denote the number of type \(i\) agents observed ex-post after all \(n\) agents have arrived. Although \(X^n_i\) is unknown until the final agent arrives, the expected value \(\mathbb{E}(X^n_i) = n p_i\) is known in advance, where \(p_i\) is the probability that an arriving agent is of type \(i\). Since each agent belongs to exactly one type, it follows that \(\sum_{i=1}^k X^n_i = n\). The following proposition follows directly from Chernoff bound.

\begin{proposition}\label{prop:boundx_i} Consider an \textsc{OnlineKTypeFD} problem on input $(\I= (N,M,V=\{v_i\}^k_{i=1}), D)$, where $n=|N|$. For any $\epsilon>0$, and any type $i\in[k]$, the deviation of $X^n_i$ from its expectation satisfies the following $P\left[|X^n_i - np_i| \leq \Theta(n^{\epsilon} \sqrt{np_i})\right] \geq 1 - O(e^{-n^{2\epsilon}})$.
\end{proposition}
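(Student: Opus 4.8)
The plan is to apply a standard Chernoff bound to the random variable $X^n_i$, which is a sum of $n$ independent indicator random variables. First I would write $X^n_i = \sum_{t=1}^n \mathds{1}[\text{agent } t \text{ is of type } i]$, where each indicator is an independent Bernoulli random variable with success probability $p_i$ by the stochastic arrival model. Hence $X^n_i$ is a sum of independent $\{0,1\}$ variables with mean $\mu \defeq np_i$, so the multiplicative Chernoff bounds apply directly.

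Next I would set the deviation parameter as a multiplicative factor relative to the mean: take $\delta = \Theta(n^{\epsilon}\sqrt{np_i})/(np_i) = \Theta(n^{\epsilon}/\sqrt{np_i})$, so that $\delta \mu = \Theta(n^{\epsilon}\sqrt{np_i})$ is exactly the target deviation. The two-sided Chernoff bound gives $P[|X^n_i - \mu| \geq \delta \mu] \leq 2\exp(-\delta^2 \mu /3)$ (valid for $0 < \delta < 1$; for the upper tail the bound $\exp(-\delta^2\mu/3)$ holds for all $\delta>0$, which also covers the regime $p_k = \omega(n^\epsilon k/\sqrt n)$ where $\delta$ is small). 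Substituting, $\delta^2 \mu = \Theta(n^{2\epsilon}/(np_i)) \cdot np_i = \Theta(n^{2\epsilon})$, so the failure probability is $2\exp(-\Theta(n^{2\epsilon})) = O(e^{-n^{2\epsilon}})$, yielding the claimed bound $P[|X^n_i - np_i| \leq \Theta(n^\epsilon\sqrt{np_i})] \geq 1 - O(e^{-n^{2\epsilon}})$.

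The only mild subtlety — and the step most worth being careful about — is ensuring the regime of $\delta$ is consistent: if $p_i$ is very small then $\delta$ could in principle exceed $1$, in which case one uses the additive Hoeffding bound or the general upper-tail Chernoff bound instead; but under the theorem's standing assumption $p_k = \omega(n^\epsilon k/\sqrt n)$ (and $p_i \geq p_k$ is not assumed, so one should phrase the constant in $\Theta(\cdot)$ appropriately, or simply note the bound is stated up to constants absorbed in $\Theta$). Since the statement only claims an asymptotic bound with hidden constants, no delicate optimization is needed — the proof is essentially a one-line invocation of Chernoff after identifying $X^n_i$ as a binomial-type sum, and I would present it as such.
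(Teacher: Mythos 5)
Your proposal is correct and matches the paper's approach: the paper simply states that the proposition ``follows directly from Chernoff bound,'' and your argument fills in precisely that calculation (identify $X^n_i$ as a Binomial$(n,p_i)$, set $\delta = \Theta(n^\epsilon/\sqrt{np_i})$ so $\delta^2\mu = \Theta(n^{2\epsilon})$, apply the multiplicative Chernoff bound). Your note on the regime where $\delta \geq 1$ is a sensible observation that the paper glosses over; as you say, in all applications of this proposition the standing hypothesis $p_k = \omega(n^\epsilon k/\sqrt{n})$ (together with $\epsilon < 1/2$) keeps $\delta = o(1)$, so the issue never actually arises.
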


In this section, whenever we use $\{M',C\}$, we refer to the set of items $M = M' \cup C$, with the distinction that elements of $M'$ and $C$ are considered as two separate lists. For every lemma from this section that is referenced in \cref{sec:unknowndist}, we denote the number of agents as \( |N| \) rather than \( n \). Although we assume that \( n = |N| \) in the input, this notation is adopted deliberately to maintain generality for next section. Recall that given the desired MMS approximation factor $\alpha$, all items that are valued at more than $\alpha$ by all types are called \emph{universally high-valued items}. The rest of this section is organized as follows. In \Cref{subsec:correctpredicchernoff} we present the main algorithm (\Cref{alg:knownD}) that handles all instances. This algorithm builds upon two algorithms: \Cref{alg:highC} that handles instances with large number of universally high-valued items and \Cref{alg:lowC} that is invoked when there are no universally high-valued items. These algorithms are described in \Cref{sec:highC} and \Cref{sec:lowC} respectively. Finally, in \Cref{app:tightness} we show that the bounds obtained in this theorem are almost-tight for our algorithm.

\subsection{An (Almost) $\frac{1}{2}$-MMS-competitive Algorithm}\label{subsec:correctpredicchernoff}
\Cref{alg:knownD} is our main algorithm for this model. Given the input parameter \(\alpha\), \cref{alg:knownD} first identifies the set \(C\) of universally high-valued items. Since any singleton bundle containing an item from \(C\) is valued by more than $\alpha$ for all types, such bundles can be allocated to any arriving agent. If the number of items in \(C\) exceeds the threshold specified in line \ref{preproc:Cgeq}, the algorithm calls \cref{alg:highC}. Otherwise, each item in \(C\) is allocated to the first \(|C|\) arriving agents, reducing the problem to a smaller instance \(\mathcal{I}' = ([|C|+1, n], M \setminus C, \{v_i\}_{i=1}^k)\) where no item is valued above \(\alpha\) by all \(k\) types. The algorithm then proceeds with \cref{alg:lowC} on this reduced instance.

We will describe each of these subroutines in detail and demonstrate that, in both cases, an $\alpha-$MMS guarantee is ensured when $n$ is asymptotically large, where $\alpha$ can be made arbitrarily close to $\frac{1}{2}$. The main result of this section is the following theorem.
\thmknownD*
\noindent We establish \Cref{thm:knownD} by proving following two lemmas. The first lemma, proved in \Cref{sec:highC} says that our algorithm can handle cases when we have large number of universally high-valued items. The second lemma, proved in \Cref{sec:lowC} says that our algorithm can handle the case when we have a small number of universally high-valued items.

\begin{restatable}{lemma}{lemfirstimp}\label{lemma:firstimp}
    When $|C| \geq n(1-\frac{1}{k}) + n^{\epsilon}\sqrt{np_1}$, for any $\alpha\leq 1$, \cref{alg:knownD} is $\alpha-$MMS competitive with probability at least $1-O(e^{-n^{2\epsilon}})$. 
\end{restatable}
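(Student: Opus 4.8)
\textbf{Proof proposal for Lemma~\ref{lemma:firstimp}.}
The plan is to analyze \Cref{alg:highC}, the subroutine invoked when the set $C$ of universally high-valued items is large, namely $|C| \geq n(1-\tfrac{1}{k}) + n^{\epsilon}\sqrt{np_1}$. The key structural observation is that, since $|C| < n$ is needed to make this an interesting case (if $|C| \geq n$ we simply allocate singletons from $C$ to all agents and are trivially done), we have $n - |C| \leq \tfrac{n}{k} - n^{\epsilon}\sqrt{np_1} < \tfrac{n}{k}$. I would first compute an $\alpha$-MMS partition $P^1 = (P^1_1,\ldots,P^1_n)$ for type~$1$ (the most likely type, with $p_1 \geq \tfrac{1}{k}$). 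Since $|C| < n$, at most $|C|$ bags of this partition can contain an item of $C$ as a singleton; more carefully, the number of bags that are \emph{not} disjoint from $C$ is at most $|C|$, so at least $n - |C|$ bags of $P^1$ are entirely contained in $M \setminus C$. Reserve these (at least) $n - |C|$ bags exclusively for type~$1$ agents.

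The second step is the probabilistic heart: I need that with probability $1 - O(e^{-n^{2\epsilon}})$ the number of type-$1$ agents is large enough to absorb these reserved bags, i.e.\ $X^n_1 \geq n - |C|$. By \Cref{prop:boundx_i}, $X^n_1 \geq np_1 - \Theta(n^\epsilon\sqrt{np_1})$ with the desired probability. Since $np_1 \geq \tfrac{n}{k}$ and $n - |C| \leq \tfrac{n}{k} - n^\epsilon\sqrt{np_1}$, we get $np_1 - \Theta(n^\epsilon\sqrt{np_1}) \geq \tfrac{n}{k} - \Theta(n^\epsilon\sqrt{np_1}) \geq n - |C|$ once the threshold in the hypothesis is set with the right leading constant (this is precisely why the bound $n^{\epsilon}\sqrt{np_1}$ appears in the statement, with constants chosen to dominate the $\Theta(\cdot)$ deviation). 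So whp there are enough type-$1$ agents.

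The third step is to describe and verify the allocation rule: when a type-$1$ agent arrives, if a reserved bag remains, give her one; the remaining $n - |C|$ reserved bags never run out by Step~2. Every other arriving agent — of any type $j \neq 1$, or a type-$1$ agent once the reserved bags are exhausted — receives a singleton $\{g\}$ with $g \in C$; since $C$ is universally high-valued, $v_j(g) \geq \alpha = \alpha \cdot \mathrm{MMS}_j$ (recall the instance is normalized so $\mathrm{MMS}_j = 1$), and type-$1$ agents taking a reserved bag get value $\geq \alpha$ by the $\alpha$-MMS property of $P^1$. It remains to check the counting: the total number of agents served is $n$; those served from reserved bags number at most $n-|C|$ many type-$1$ agents, and the rest, numbering at least $|C|$, are each served a distinct element of $C$ — and there are exactly $|C|$ such elements, so this balances exactly. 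I should double-check the edge case where $X^n_1$ is much larger than $n-|C|$: then surplus type-$1$ agents are served from $C$, which is fine since $|C| \geq n - X^n_1$ in that regime. The main obstacle I anticipate is making the constant-chasing in Step~2 airtight — ensuring the exact threshold $n(1-\tfrac1k) + n^\epsilon\sqrt{np_1}$ in the hypothesis leaves enough slack against the $\Theta(n^\epsilon\sqrt{np_1})$ Chernoff deviation and against the rounding/floor issues in "at least $n-|C|$ bags avoid $C$" — but this is bookkeeping rather than a conceptual difficulty. The $\alpha \leq 1$ generality is immediate since nothing in the argument used $\alpha$ being close to $\tfrac12$; universally high-valued items and the $\alpha$-MMS partition both scale with whatever $\alpha$ is given.
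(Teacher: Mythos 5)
Your overall approach matches the paper's exactly: apply \Cref{prop:boundx_i} (a Chernoff bound) to conclude that, with probability $1-O(e^{-n^{2\epsilon}})$, $X^n_1 \geq np_1 - n^\epsilon\sqrt{np_1} \geq \tfrac{n}{k} - n^\epsilon\sqrt{np_1} \geq n - |C|$, then invoke the combinatorial machinery of \Cref{alg:highC}: reserve the bags of type~$1$'s $\alpha$-MMS partition that avoid $C$ for type-$1$ agents, and serve everyone else from $C$. The paper factors the second part out as \Cref{lem:highC} (which is deterministic given the event $X^n_1 \geq n-|C|$), whereas you inline it; that decomposition difference is cosmetic.

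There is one gap worth fixing: your step~3 argues \emph{global} feasibility (``the total numbers balance exactly''), but in an online arrival model you must verify that at \emph{every} point in the sequence the bag the algorithm wants to hand out is actually still available. The dangerous prefix is one where many non-type-$1$ agents arrive before the reserved bags are exhausted, draining $C$ (equivalently, $A$) prematurely. What closes this is the observation — which you never state — that the event $X^n_1 \geq n-|C|$ forces $\sum_{j\neq 1} X^n_j \leq |C|$, so \emph{any} prefix contains at most $|C|$ non-type-$1$ agents, and hence before $G_1$ empties out the number of bags drawn from $A$ is at most $|C|$. Once $G_1$ is empty, the remaining bags in $A$ exactly equal the remaining agents, so the allocation never stalls. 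This is precisely the $t^*$ bookkeeping in the paper's proof of \Cref{lem:highC}; without it, ``balances exactly'' is a statement about totals, not about a feasible online schedule. The constant-chasing issue you raised in step~2 is a reasonable worry but is in fact benign — the standard Chernoff bound yields a deviation of $n^\epsilon\sqrt{np_1}$ with implied constant $1$, matching the threshold in the hypothesis.
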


\begin{restatable}{lemma}{lemsecondimp}\label{lem:secondimp}
    If \( |C| < n\left(1 - \frac{1}{k}\right) + n^{\epsilon}\sqrt{n p_1} \), $p_{k} = \omega(\frac{n^\epsilon k}{\sqrt{n}})$, for any small $\eta>0$, there exists an $n(\eta)$, where for all $n>n(\eta)$, \cref{alg:knownD} is \( \frac{1}{2(1+\eta)}-\)MMS competitive with probability at least \( 1 - o(e^{-n^{1.5\epsilon}}) \).
\end{restatable}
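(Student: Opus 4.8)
## Proof Proposal for Lemma~\ref{lem:secondimp}

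The plan is to split the analysis along the two ways the algorithm can dispose of universally high-valued items, and then to track the MMS value ``budget'' of every type through the three phases described in the technical overview. Since the hypothesis of this lemma puts us in the case $|C| < n(1-\tfrac1k) + n^\epsilon\sqrt{np_1}$, the main algorithm does \emph{not} call \Cref{alg:highC}; instead it allocates the $|C|$ universally high-valued items as singletons to the first $|C|$ agents and then invokes \Cref{alg:lowC} on the reduced instance $\mathcal I' = ([|C|+1,n], M\setminus C, \{v_i\}_{i\in[k]})$, which has no item valued above $\alpha$ by all $k$ types. The first observation is that the first $|C|$ agents are trivially satisfied: each receives a singleton from $C$, worth more than $\alpha \ge \alpha\cdot\MMS_i = \alpha$ to whatever type it is. So it suffices to show that \Cref{alg:lowC}, run on $\mathcal I'$ with $n-|C|$ remaining agents, gives every remaining agent a bundle worth at least $\frac{1}{2(1+\eta)}$, with the claimed probability. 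Set $\alpha = \frac{1}{2(1+\eta)}$ throughout.

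Next I would condition on the high-probability event, guaranteed by \Cref{prop:boundx_i} and a union bound over the $k$ types, that simultaneously for every $i\in[k]$ the number of type-$i$ agents satisfies $X^n_i \le M_i := \lfloor np_i + n^\epsilon\sqrt{np_i}\rfloor$; the failure probability here is $O(k e^{-n^{2\epsilon}}) = o(e^{-n^{1.5\epsilon}})$ since $k$ is sublinear. On this event, \Cref{alg:lowC} only needs to produce, for each type $i$, enough bundles to cover $M_i$ agents (not $n$), which is where the distributional slack is exploited. The core of the argument is then the two-part invariant maintained by \Cref{alg:lowC}: after each phase, for every type $i$, \emph{either} type $i$ is saturated ($|G_i| = M_i$ exclusive bundles reserved) \emph{or} the value type $i$ retains in the unallocated pool is at least (number of remaining type-$i$ agents the algorithm still must cover). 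I would carry this through Phase~2 (high-valued-for-some-type items, handed out as exclusive singletons in a carefully chosen order of types) and Phase~3 (low-valued items, bag-filled until an unsaturated type claims a bundle at value $\ge\alpha$). In Phase~3, since every unsaturated type now values every remaining item at most $\alpha$, a standard argument bounds the value any unsaturated type loses per allocated bag by $2\alpha \le 1$ (using $\alpha = \frac{1}{2(1+\eta)} \le \tfrac12$), so each bag costs every type at most one unit of MMS budget.

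The feasibility count is the crux: the algorithm creates $\sum_i M_i \approx n + \sum_i n^\epsilon\sqrt{np_i}$ bundles, which exceeds $n$, yet each type only has $v_i(M) = n$ worth of value, so one must show the surplus $\sum_i n^\epsilon\sqrt{np_i} = O(k\sqrt{n}\,n^\epsilon/\sqrt{p_k}\cdot p_k)$ — more precisely $\sum_i\sqrt{np_i}\le \sqrt{nk}$ by Cauchy--Schwarz, giving surplus $O(n^\epsilon\sqrt{nk})$ — is absorbed by the factor $\eta$ and the hypothesis $p_k = \omega\!\bigl(\tfrac{n^\epsilon k}{\sqrt n}\bigr)$. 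Concretely: a type-$i$ budget computation shows that to cover $M_i = np_i + O(n^\epsilon\sqrt{np_i})$ bundles at cost $\le 1$ each plus the Phase-1/Phase-2 overhead, against a total value of $n$, the per-bundle value can be pushed down only to $\frac{n}{M_i + (\text{overhead})} \ge \frac{n p_k}{n p_k + O(n^\epsilon\sqrt{nk}) + O(|C|)} = \frac{1}{1 + o(1)}$ fraction of $\alpha$; choosing $n$ large enough (depending on $\eta$) makes the $o(1)$ term smaller than $\eta$, which is exactly what defines $n(\eta)$. I expect the \textbf{main obstacle} to be bounding the loss of the \emph{last} type in the Phase-2 ordering: when no two types have low-overlap high-valued sets, one keeps the lowest-probability type last, and must show via a counting argument that if it remains unsaturated at the end of Phase~2 then the number of high-valued items still available to it is small — this uses the absence of universally high-valued items in $\mathcal I'$ together with large pairwise overlaps among high-valued sets, and it is the one place where the clean budget bookkeeping is replaced by a delicate combinatorial estimate. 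Finally, the two failure sources — the Chernoff event above, and any lower-order bad event inside \Cref{alg:lowC}'s probabilistic steps — are combined by a union bound to give total failure probability $o(e^{-n^{1.5\epsilon}})$, completing the proof once \Cref{thm:knownD} is assembled from Lemmas~\ref{lemma:firstimp} and~\ref{lem:secondimp}.
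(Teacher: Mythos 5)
Your overall decomposition matches the paper's: allocate $C$ as singletons, condition on a Chernoff event bounding ex-post type counts, and then argue that \Cref{alg:lowC}'s preprocessing saturates every type by a budget computation that charges at most $2\alpha < 1$ per lost bag and absorbs the $\Theta(n^\epsilon\sqrt{nk})$ surplus via the $\eta$ slack and the hypothesis $p_k=\omega(n^\epsilon k/\sqrt n)$. That is precisely the chain \Cref{prop:boundx_i} $\to$ \Cref{lem:oneunsat} $\to$ \Cref{prop:limitT_i} $\to$ \Cref{lem:allsaturated} in the paper, including the ``hard case'' you correctly single out (the last type in the Phase-2 ordering, treated by the counting argument in Case~3 of \Cref{prop:limitT_i}).

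There is one real gap, however: your Chernoff step is stated on $n$ agents, writing $X^n_i \le \lfloor np_i + n^\epsilon\sqrt{np_i}\rfloor$ with failure probability $O(ke^{-n^{2\epsilon}})$. But \Cref{alg:lowC} is invoked on the \emph{reduced} instance with $n' = n-|C|$ agents, the saturation threshold is $M_i=\lfloor n'p_i+n'^\epsilon\sqrt{n'p_i}\rfloor$, and the Chernoff bound delivers $P[\exists i: X^{n'}_i>M_i]=O(ke^{-n'^{2\epsilon}})$. Because $|C|$ can be almost $n(1-1/k)$, one must show $n'$ is still large for $e^{-n'^{2\epsilon}}$ to be $o(e^{-n^{1.5\epsilon}})$ — and this is exactly where the unused hypothesis $|C| < n(1-\tfrac1k)+n^\epsilon\sqrt{np_1}$ enters, via \Cref{prop:boundk}, giving $n' = \Omega(n/k) = \omega(n^{3/4+\epsilon/2})$ and hence $n'^{2\epsilon}=\omega(n^{1.5\epsilon})$. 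Your sketch never invokes that hypothesis on $|C|$ at all, so the failure-probability bound is unjustified as written. Relatedly, your ``two-part invariant maintained after each phase'' framing does not reflect that \Cref{alg:lowC} performs all bundle reservation offline in preprocessing; the actual claim (\Cref{lem:allsaturated}) is the stronger, unconditional statement that \emph{every} type is saturated before any online arrivals, obtained from the lower bound on $|G_i|$ in \Cref{lem:oneunsat}, not from an online either/or invariant.
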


\begin{algorithm}[ht!]
\caption{Stochastic Arrival of Agents with Known Distribution}\label{alg:knownD}
    \SetKwInOut{Input}{Input}\SetKwInOut{Output}{Output}
  \Input{Stochastic Instance $\I = (N,M, \{v_i\}_{i\in [k]})$, distribution $D$ with $p_1\geq p_2\geq\cdots\geq p_k$, $\alpha = \frac{1}{2(1+\eta)}$, $\epsilon$.}
  \Output{ $\alpha$-MMS.}
  \BlankLine
Let $n = |N|$, $C = \{g\in M \mid \forall i\in [k] :v_i(g)\geq \alpha \}$, $M' = M\setminus C$.\label{preproc:C}\\
\eIf{$|C|\geq n(1-\frac{1}{k}) + n^{\epsilon}\sqrt{np_1}$\label{preproc:Cgeq}}{
Let $\Icalhat = (N, \{M',C\}, \{v_i\}^k_{i=1}).$\\
Run \cref{alg:highC} on $(\Icalhat, D,\alpha)$.
}
{
\For{$t \in [|C|]$\label{alg5:onlinestart}}{
    Agent $t$ arrives and reveals type $j$.\\
    Let $g$ be an arbitrary item in $C$.
    Assign $\{g\}$ to agent $t$, and remove it from $C$. 
}\label{alg5:onlineend}
Let $\Ical' = ([|C|+1, n],M', \{v_i\}^k_{i=1})$ be the reduced instance. \\
Run \cref{alg:lowC} on $(\Ical', D, \epsilon, \alpha)$.
}
\end{algorithm}

\subsection{Large Number of Universally High-Valued Items.}\label{sec:highC}
\textbf{Description of \cref{alg:highC}}  
The algorithm operates in two phases. In the preprocessing phase, the most frequent type (type 1) is asked to compute an \(\alpha\)-MMS partition \(A\). The algorithm then adjusts \(A\) so that each bundle contains exactly one item from the set \(C\) of universally high-valued items: any bundle with more than one such item has its extra items removed and each extra high-valued item is placed in its own singleton bundle, which is then added to \(A\). Additionally, bundles in \(A\) that contain no item from \(C\) are removed and stored in a reserved set \(G_1\) (subject to \(|G_1| \le n - |C|\)), thereby ensuring that \(A\) reserves bundles for all types while maintaining a separate set of reserved bundles for type 1 in $G_1$. In the online phase, when agents arrive sequentially, a type 1 agent is allocated a bundle from \(G_1\) if available; otherwise, they receive a bundle from \(A\). Agents of other types are allocated bundles from \(A\). 
\begin{algorithm}[ht!]
\caption{Large number of universally high-valued items}\label{alg:highC}
\SetKwInOut{Input}{Input}\SetKwInOut{Output}{Output}
\Input{A Stochastic Instance $\I = (N, \{M',C\},\{v_i\}^k_{i=1})$, $D$ with $p_1\geq p_2\geq\cdots\geq p_k$, $\alpha$.}
\Output{$\alpha$-MMS.}
\BlankLine
$G_1 = \emptyset$, $A\gets \alpha$-MMS partition of type with the highest probability, in $\I$. \label{highC:prepstart}\\
\For{$r\in N$}{
  $x_r = |A_r\cap C|$.\\
    \If{$x_r \geq 2$}{Remove $x_r-1$ items of $A_r\cap C$ from $A_r$ and put each item in a separate new bag, then add the bag to $A$.}
    \If{$x_r = 0$}{$A \gets A\setminus A_r$.\\
        \If{$|G_1|\leq |N|-|C|$}{
        $G_1 \gets G_1 \cup A_r$.
        }
    }
}\label{highC:prepend}
\For{$t \in N$\label{highC:onlinestart}}{
        Agent $t$ arrives and reveals type $j$.\\
        \eIf{$j==1$ and $G_1\neq \emptyset$} {
                Pick a bag $b$ from $G_1$ and assign it to agent $t$.\\
                $G_1 \gets G_1\setminus b$.
            }
         {
            Pick a bag $b$ from $A$ and assign it to agent $t$.\\
                $A \gets A\setminus b$.\label{alg-stoch:endfirst}
         }   
    }\label{highC:onlineend}
\end{algorithm}

\subsubsection{Analysis of \cref{alg:highC}}\label{app:highC}

First, let us prove the following useful lemma. 
\begin{lemma}\label{lem:highC}
    Given a stochastic instance $\I = (N, \{M',C\}, \{v_i\}^k_{i=1})$ and distribution $D$, if the number of type $1$ agents arrived ex-post $X^{|N|}_1$ satisfies $X^{|N|}_1 \geq |N| - |C|$ for any $\alpha \leq 1$, \cref{alg:highC} is $\alpha-$MMS competitive. 
\end{lemma}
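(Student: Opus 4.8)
\textbf{Proof plan for Lemma~\ref{lem:highC}.}
The plan is to show that under the hypothesis $X^{|N|}_1 \ge |N|-|C|$, every arriving agent receives a bundle worth at least $\alpha$ to her. First I would establish structural facts about the partition $A$ and the reserved set $G_1$ at the end of the preprocessing loop (lines \ref{highC:prepstart}--\ref{highC:prepend}). Since $A$ starts as an $\alpha$-MMS partition of type~$1$ into $|N|$ bundles, and the preprocessing only (i)~splits off extra $C$-items from any bundle into their own singleton bags and (ii)~removes bundles that contain no $C$-item (moving them to $G_1$ as long as $|G_1| \le |N|-|C|$), I would argue: every bundle remaining in $A$ after preprocessing either is an original $\alpha$-MMS bundle with exactly one $C$-item (value $\ge \alpha$ for type~$1$, hence, being a singleton-or-larger bag containing a universally high-valued item, value $\ge \alpha$ for \emph{every} type), or is a fresh singleton bag $\{g\}$ with $g \in C$ (value $\ge \alpha$ for every type by definition of $C$). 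The bundles moved to $G_1$ are $\alpha$-MMS bundles of type~$1$, so each has value $\ge \alpha$ for type~$1$. I would then do a simple counting step: the number of distinct $C$-items is $|C|$, and after splitting each lands in exactly one bag of $A$ that contains a $C$-item; the number of bundles placed in $G_1$ is exactly $\min(|N|-|C|,\,\#\{\text{original bundles with } x_r=0\})$. Since the original $|N|$ bundles partition into those with $x_r\ge 1$ and those with $x_r=0$, and each $C$-item sits in some bundle, there are at most $|C|$ bundles with $x_r\ge1$; hence there are at least $|N|-|C|$ bundles with $x_r=0$, so the cap in the \texttt{if} is the binding one and $|G_1| = |N|-|C|$ exactly at the end of preprocessing. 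Meanwhile $|A| \ge |C|$ (it contains at least one bag per $C$-item, possibly more from splitting).

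Next I would run the online phase accounting. Let $T$ be the (ex-post) set of arriving agents, $|T|=|N|$. Agents of type~$1$ are served from $G_1$ first, then from $A$; all other agents are served from $A$. The only way an agent fails to get value $\ge\alpha$ is if the pool she draws from is empty. For type-$1$ agents: $G_1$ can serve $|N|-|C|$ of them, and by hypothesis $X^{|N|}_1 \ge |N|-|C|$, so all type-$1$ agents beyond the first $|N|-|C|$ must instead draw from $A$ — but there are exactly $X^{|N|}_1 - (|N|-|C|)$ such agents. For agents not of type~$1$: there are $|N| - X^{|N|}_1$ of them, all drawn from $A$. So the total number of draws from $A$ is $\bigl(X^{|N|}_1 - (|N|-|C|)\bigr) + \bigl(|N|-X^{|N|}_1\bigr) = |C|$, which is at most $|A|$. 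Hence $A$ never runs dry, $G_1$ never runs dry, and every agent is served a bundle of value $\ge\alpha$ to her type (using the structural facts above that bundles remaining in $A$ are worth $\ge\alpha$ to \emph{all} types, and bundles in $G_1$ are worth $\ge\alpha$ to type~$1$, which is who draws them). Since the instance is normalized, $\MMS_i = 1$ for every type, so value $\ge\alpha$ means $\ge \alpha\cdot\MMS_i$, giving $\alpha$-MMS competitiveness.

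The main obstacle I anticipate is the bookkeeping around $G_1$: one must be careful that the cap $|G_1|\le|N|-|C|$ in the preprocessing loop is never the obstruction that leaves some type-$1$ agent unserved — i.e., confirming that the number of zero-$C$ bundles is genuinely at least $|N|-|C|$ so that $|G_1|$ fills up to exactly $|N|-|C|$, and simultaneously that $|A|$ ends up at least $|C|$. Both follow from the partition-counting observation that at most $|C|$ of the $|N|$ original bundles can contain a $C$-item, but it is worth stating cleanly. A secondary subtlety is the claim that a post-split bag in $A$ containing a universally high-valued item has value $\ge\alpha$ for every type (not just type~$1$): this is immediate for the singleton bags, and for a retained original bundle it is because it still contains its one $C$-item, whose value alone is $\ge\alpha$ for all types. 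Everything else is routine counting.
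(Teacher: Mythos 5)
Your proof is correct and takes essentially the same approach as the paper's: you establish the same structural facts (each $A$-bundle is worth at least $\alpha$ to every type since it contains a universally high-valued item; each $G_1$-bundle is worth at least $\alpha$ to type~$1$; $|A|=|C|$ and $|G_1|=|N|-|C|$ after preprocessing) and then argue the pools are large enough for the arriving agents. The paper differs only presentationally: it case-splits explicitly on $|C|\ge|N|$ (where $G_1$ stays empty and $|A|\ge|N|$ directly serves everyone — your formulas $|G_1|=|N|-|C|$ and draws-from-$A$ $=(X^{|N|}_1-(|N|-|C|))+(|N|-X^{|N|}_1)$ implicitly assume $|C|<|N|$) versus $|C|<|N|$, and it argues temporally around the last time $t^*$ that $G_1$ is drawn from rather than with your global count — both work, since the set of agents that must draw from $A$ has fixed size $|N|-|G_1|=|C|\le|A|$ regardless of arrival order, so the $j$-th of them always finds $A$ nonempty.
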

\begin{proof}
During preprocessing, since any bundles removed from \( A \) do not contain items from \( C \), all items in \( C \) remain in \( A \). Given that each bundle in \( A \) contains exactly one item from \( C \), it follows that \( |A| = |C| \). Moreover, since every bundle in \( A \) has a value greater than \( \alpha \) for all agent types, any bundle of it can be allocated to any arriving agent. 
We analyze two cases separately: when $|C| \geq |N|$, and when $|N|-1 \geq |C|$.

\begin{enumerate}
    \item If $|C| \geq |N|$, then $|N| - |C|\leq 0$, implying that $G_1$ remains empty during the preprocessing step. Consequently, in the online phase, each arriving agent is allocated a bundle from $A$. Since $|A| = |C| \geq |N|$, the number of available bundles in $A$ is sufficient to provide each agent with a bundle valued at more than $\alpha$.

    \item If $|N|-1 \geq |C|$, the bundles in $G_1$ belong to the $\alpha$-$\MMS$ partition of type $1$ and are therefore claimed by type 1 agents. Hence, all bundles in $G_1$ are exclusively reserved for type $1$ agents. Since there are $|C|$ items in $C$, at least $|N| - |C|$ bundles in the $\alpha$-$\MMS$ partition of type $1$ contain no items from $C$. This implies that at the end of preprocessing, $|G_1| = |N| - |C|$. Hence, before any agent arrives, the total number of reserved bundles satisfies $|G_1| + |A| = |N|$. During the online phase, each arriving agent receives one bundle from either $A$ or $G_1$, ensuring that after the $t$-th arrival, $|G_1| + |A| = |N| - t$.

     By assumption, $X^{|N|}_1 \geq |N| - |C|$. Therefore, the number of arriving type $1$ agents exceeds the number of bundles reserved for them in $G_1$. As a result, at some point after the arrival of a type $1$ agent, $G_1$ becomes empty, and from that moment onward, all type $1$ agents will receive bundles from $A$. Let $t^*$ denote the index of the last agent who received a bundle from $G_1$.

    Prior to the depletion of $G_1$, only agents of types in $[2,k]$ received bundles from $A$. Given that $\sum_{i=1}^{k} X^{|N|}_i = |N|$ and the bound $X^{|N|}_1 \geq |N| - |C|$, it follows that $\sum_{i=2}^{k} X^{|N|}_i \leq |C|$. Therefore, the number of initial bundles in $A$ is sufficient for all agents of types $[2,k]$ who arrived before the $t^*$-th agent. Additionally, before the arrival of the $t^*$-th agent, $G_1 \neq \emptyset$, ensuring that any type $1$ agent can still receive a bundle from $G_1$. 

    Once agent $t^*+1$ arrives, as $G_1$ is depleted and $|G_1| + |A| = |N| - t^*$, it implies that $|A| = |N| - t^*$. Consequently, the remaining number of bundles in $A$ matches the number of remaining agents. Since each bundle in $A$ contains one item from $C$ and is claimed by all agent types, each bundle can be assigned to a unique remaining agent.
\end{enumerate}
\end{proof}

We are now ready to prove the main lemma of this Section. 

\lemfirstimp*
\begin{proof}
According to \cref{prop:boundx_i}, with probability at least \( 1 - O(e^{-n^{2\epsilon}}) \), we have 
\[
X^n_1 \geq np_1 - n^{\epsilon} \sqrt{np_1}.
\]
Since type \( 1 \) is the most frequent, it follows that \( p_1 \geq \frac{1}{k} \), which implies that, with high probability,
\[
X^n_1 \geq \frac{n}{k} - n^{\epsilon} \sqrt{np_1}.
\]
Given the condition \( |C| \geq n(1 - \frac{1}{k}) + n^{\epsilon} \sqrt{np_1} \), we conclude that, with high probability, \( X^n_1 \geq n - |C| \). Applying \cref{lem:highC}, the desired result follows.
\end{proof}

\subsection{No Universally High-Valued Items}\label{sec:lowC}
\textbf{Description of \cref{alg:lowC}} 
Our algorithm begins with a preprocessing phase for a reduced instance \(\Ical'=(N',M',\{v_i\}_{i\in [k]})\) with \(n' = |N'|\). For each type \(i\), we first define \(T_i\) as the set of high-valued items, noting that \(\bigcap_{i=1}^k T_i = \emptyset\) since there is no universally high-valued item in a reduced instance. We then compute the expected number of agents \(\mu_i = n' p_i\) for each type and prepare a reserved collection \(G_i\) of bundles. We also compute $M_i = \lfloor\mu_i + n'^{\epsilon}\sqrt{\mu_i}\rfloor$ as the upper limit on the size of $G_i$. A type $i$ is saturated if $|G_i| = M_i$. An ordering \(L\) of types is determined based on whether a pair \(i,j\) exists with \(|T_i \setminus T_j| \ge \frac{n' p_i}{2}\); if so, type \(i\) is ordered first and \(j\) last, otherwise the least frequent type is placed last. Adhering to the ordering \(L\), high-valued items are allocated sequentially. When it is the turn of type \( i \), as long as \( |G_i| < M_i \), an available high-valued item from \( T_i \setminus \hat{T_i} \) (where $\hat{T}_i$ records items that have been allocated) is selected, prioritizing items that are not highly valued by the last type in the order. The chosen item forms a new singleton bundle, which is then added to \( G_i \). Then, we iteratively construct additional bundles from the remaining items \(R\) until every unsaturated type is saturated, or no item in $R$ remains. When an agent arrives and reveals her type \(j\), she is immediately assigned a bundle from the pre-prepared collection \(G_j\). 

\begin{algorithm}[h!]
\caption{No universally high-valued items}\label{alg:lowC}
\SetKwInOut{Input}{Input}\SetKwInOut{Output}{Output}
\Input{A Stochastic Instance $\I' = (N', M', \{v_i\}_{i\in [k]})$, $D$ with $p_1\geq p_2\geq\cdots\geq p_k$, $\epsilon, \alpha$.}
\Output{$\alpha\text{-MMS}$}
\BlankLine
$n' \gets |N'|$. \label{lowC:prepstart}\\
For all $i\in[k]$, let $T_i = \{g\in M' \mid v_i(g)\geq \alpha\}$, $\mu_i = n'p_i$,  $G_i = \emptyset$, $M_i = \lfloor \mu_i +n'^{\epsilon}\sqrt{\mu_i} \rfloor$, $\hat{T_i} = \emptyset$, .\label{preproc:T_i}\\
$\mathtt{unsaturated} = [k]$.\label{defineend}\\
    \eIf{$\exists i,j\in[k]$ s.t. $|T_i\setminus T_j|\geq \frac{n'p_i}{2}$\label{sortstart}}{
    Let $L$ be a list of an arbitrary ordering of $[k]$ where $i$ is first and $j$ is last.
    }{
    Let $L$ be a list of an arbitrary ordering of $[k]$ where the type with the lowest probability is last.\label{lastshouldbek}
    }\label{sortend}
Let $k' = L[k]$. \\
\For{$r\in [k]$\label{turnstart}}{
    Let $i = L[r]$.\\
    \While{$T_i \setminus \hat{T_i}\neq \emptyset$ and $|G_i| < M_i$\label{whilehighval}}{
        If $T_i \setminus (\hat{T_i}\cup T_{k'})\neq \emptyset$, pick an item $g$ from $T_i \setminus (\hat{T_i}\cup T_{k'})$, otherwise pick any item from $T_i \setminus \hat{T_i}$. \label{prioritizing}\\
        $G_i \gets G_i \cup \{g\}$ and $M' \gets M'\setminus g$.\\
        \For{$j \in [r,k]$}{
            $\hat{T}_{L[j]} \gets \hat{T}_{L[j]} \cup g$.
        }
    }
    If $|G_i| = M_i$, Remove $i$ from $\mathtt{unsaturated}$.
}\label{turnend}
$R \gets M'$.\label{BFstart}\\
\While{$\mathtt{unsaturated \neq \emptyset}$ and $R\neq \emptyset$\label{while:BF}}{
Initialize Bag $B\gets \emptyset$.
Fill $B$ arbitrarily with items from $R$ till at least one $\mathtt{unsaturated}$ type claims $B$. \\
Pick an arbitrary $\mathtt{unsaturated}$ type $i$ that claimed the bag and add $B$ to $G_i$. \\
Update $R\gets R \setminus B$. \\
If $|G_i| = M_i$, remove $i$ from $\mathtt{unsaturated}$. 
}\label{lowC:prepend}
\For{$t \in N'$\label{lowC:onlinestart}}{
    Agent $t$ arrives and reveals type $j$.\\
    Pick a bag from $G_j$ and assign it to agent $t$. Remove the bag from $G_j$. 
}\label{lowC:onlineend}
\end{algorithm}

\subsubsection{Notations Used for the Analysis of \Cref{alg:lowC}}\label{sub:knownDnotations}
For an input instance \(\I' = (N', M', \{v_j\}_{j=1}^k)\), where \(n' = |N'|\), given a distribution \(D\) where the probability of an agent being of type \(j\) is \(p_j\) (for any \(j \in [k]\)), \(\mu_j = n' p_j\) is defined as the expected number of agents of type \(j\). We introduce a parameter \(z\) defined as
\begin{equation}\label{definez}
    z = n'^\epsilon \sum_{j=1}^{k} \sqrt{\mu_j}
\end{equation}
which serves as an approximate upper bound on the total number of extra bundles saved ex-post.

We define a function \(b(i)\) that maps type \(i\) to the set of types preceding it in the ordering \(L\). 

For each type \(j\), let \(E_j\) denote the set of singleton bundles reserved in \(G_j\) during steps \ref{turnstart}--\ref{turnend}. Since each item is reserved for at most one type, the reserved sets for distinct types are disjoint, and since \(|G_j|\) is upper bounded by \(M_j\), we have 
\begin{equation}\label{eq:upperE_j}
    |E_j| \le |G_j| \le M_j.
\end{equation}

Let us partition \(T_i\) as follows
\begin{equation}\label{eq:T_ibar}
    \Bar{T_i} = T_i \cap \bigcup_{j \in b(i)} E_j \quad \text{and} \quad T'_i = T_i \setminus \Bar{T_i}
\end{equation}
where, \(\Bar{T_i}\) represents the set of items that are highly valued by type \(i\) but have already been assigned to types preceding \(i\) in the ordering, and \(T'_i\) represents the set of high-valued items available for type \(i\) to choose when its turn begins.

\subsubsection{Analysis of \Cref{alg:lowC}}

Since \cref{prop:limitT_i} and \cref{lem:oneunsat} will be used in the analysis of \cref{sec:unknowndist}, we restate the full notation for clarity. In \cref{prop:limitT_i}, we provide an upper bound on the number of high-valued items for every unsaturated type. Then, we use this bound in \cref{lem:allsaturated} to prove that all types are saturated in the preprocessing phase. Given this, we finally establish that, with high probability, the reserved bundles for each type are sufficient to accommodate all arriving agents of that type.

\begin{proposition}\label{prop:limitT_i}
Given an input instance $\I' = (N', M', \{v_i\}_{i\in [k]})$ with known distribution $D$, where $n' = |N'|$, if by the end of line \ref{turnend} in \Cref{alg:lowC} type \(i\) is not saturated, then 
\[
|T_i| \le n'\Bigl(1-\frac{p_k}{2}\Bigr) + z,
\]
where \(z = n'^\epsilon \sum_{j=1}^{k} \sqrt{n'p_j}\), with \(p_j\) denoting the probability that an agent is of type \(j\) under \(D\) (in particular, \(p_k\) is the probability of the least frequent type).
\end{proposition}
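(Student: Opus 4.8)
\textbf{Proof proposal for Proposition~\ref{prop:limitT_i}.} The plan is to argue by contradiction: suppose type $i$ is unsaturated at the end of line~\ref{turnend} yet $|T_i| > n'(1-\tfrac{p_k}{2}) + z$. Since type $i$ is unsaturated, the while-loop at line~\ref{whilehighval} must have terminated because $T_i \setminus \hat{T_i} = \emptyset$, i.e.\ \emph{every} high-valued item of $i$ was already recorded in $\hat{T_i}$ by the time $i$'s turn ended. An item enters $\hat{T}_i$ only when some type $j$ with $j$ appearing no later than $i$ in $L$ (including $i$ itself) reserves it as a singleton in steps~\ref{turnstart}--\ref{turnend}. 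Hence $T_i \subseteq \bigcup_{j \in b(i) \cup \{i\}} E_j$, and so $|T_i| \le \sum_{j \in b(i)} |E_j| + |E_i|$ by disjointness of the $E_j$'s.

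Next I would bound the right-hand side. Using \eqref{eq:upperE_j}, $|E_j| \le M_j = \lfloor \mu_j + n'^\epsilon\sqrt{\mu_j}\rfloor \le n'p_j + n'^\epsilon\sqrt{n'p_j}$. Summing over \emph{all} $j\in[k]$ would give $\sum_j M_j \le n'\sum_j p_j + n'^\epsilon \sum_j\sqrt{n'p_j} = n' + z$, which is slightly too weak --- it loses only nothing. The key refinement is that the sum $\sum_{j \in b(i)\cup\{i\}} |E_j|$ omits at least the types that come \emph{after} $i$ in $L$; in particular it omits the last type $k' = L[k]$ (assuming $i \ne k'$; the case $i = k'$ must be handled separately, and there one uses that $|E_i| \le M_i$ while $i$ unsaturated means $|G_i|<M_i$, forcing $T_i\setminus\hat T_i=\emptyset$ with all of $T_i$ sitting in earlier $E_j$'s, i.e.\ the same inclusion with $b(i)=[k-1]$). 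So $\sum_{j\in b(i)\cup\{i\}}|E_j| \le \sum_{j\ne k'} M_j \le (n' - M_{k'}) + z \le n' - n'p_{k'} + z$. Since $p_{k'}$ is either $p_k$ (in the second ordering branch, where the least-frequent type is placed last) or at least $p_k$ (in the first branch, where $j=L[k]$ is whatever type the overlap condition picked, and $p_j \ge p_k$ trivially), we get $|T_i| \le n' - n'p_k + z$.

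Wait --- that yields $n'(1-p_k)+z$, which is actually \emph{stronger} than the claimed $n'(1-p_k/2)+z$, so I should double-check where the factor $\tfrac12$ comes from; I suspect the subtlety is in the first ordering branch, where the priority rule at line~\ref{prioritizing} (picking items of $T_i$ outside $T_{k'}$ first) interacts with the special pair $(i^\star, j^\star)$ with $|T_{i^\star}\setminus T_{j^\star}| \ge n'p_{i^\star}/2$, and the clean $-n'p_{k'}$ bound need not hold for the type $i^\star$ placed first. In that case the honest bound is: $T_i$'s items that were grabbed by $k' = j^\star$ number at most $|T_{k'} \cap E_{k'}|$, but $k'$ only reserves items from $T_{k'}\setminus \hat T_{k'}$, and because of the priority rule earlier types drained $T_{k'}$-items last; a careful count shows at least $\tfrac{n'p_{k'}}{2}$ worth of $k'$'s reservations lie outside $T_i$ (or symmetrically the low-overlap pair guarantees it), giving the $\tfrac12$. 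So the real structure of the proof is: (i) the inclusion $T_i \subseteq \bigcup_{j\preceq i} E_j$; (ii) $\sum_{j\in[k]} M_j \le n' + z$; (iii) a lower bound of roughly $\tfrac{n'p_k}{2}$ on $|E_{k'}\setminus T_i|$, coming either from the explicit low-overlap pair in branch~1 or from $k'$ being the least-frequent last type in branch~2 combined with the line~\ref{prioritizing} priority; subtract (iii) from (ii).

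\textbf{Main obstacle.} The hard part is step~(iii): establishing that at least about $\tfrac{n'p_k}{2}$ of type $k'$'s reserved singletons are items that $i$ does \emph{not} value highly, uniformly over every unsaturated $i$. In the second ordering branch this should be comparatively clean --- $k'$ is the least-frequent type and the line~\ref{prioritizing} rule makes every other type $i$ reserve its $T_i\setminus T_{k'}$ items before touching $T_{k'}$, so the items $k'$ ends up taking are precisely ones that were \emph{not} preferentially avoidable, and one argues $|T_{k'}| \ge |G_{k'}| = M_{k'} \ge n'p_{k'}$ can't all overlap $T_i$ without violating the no-universally-high-valued-item property cascaded over pairs. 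In the first branch one must instead lean on the defining inequality $|T_{i^\star}\setminus T_{j^\star}|\ge n'p_{i^\star}/2$ and argue it transfers to a useful bound for the generic unsaturated $i$; pinning down exactly which quantities get the $/2$ and confirming the union-of-$E_j$ accounting is not double-counting is where I expect to spend the real effort.
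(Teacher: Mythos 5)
You have the right skeleton, and your first two steps are both correct and essentially the same as the paper's: (i) an unsaturated type $i$ has all of $T_i$ absorbed into $\bigcup_{j\preceq i}E_j$ (the paper writes this as $T_i = T_i'\cup\bar T_i$ with $T_i'=E_i$ and $\bar T_i\subseteq\bigcup_{j\in b(i)}E_j$), and (ii) $\sum_{j\in[k]}M_j\le n'+z$. Your Case~$i\ne L[k]$ is also correct and, as you observe, gives the stronger $n'(1-p_k)+z$ --- exactly what the paper's Case~1 gets. But the genuine gap is the case $i=L[k]$, and there your diagnosis of where the $\tfrac12$ comes from is off. It does \emph{not} arise only in the first ordering branch: both ordering branches need a separate argument when $i=L[k]$, and each yields the $\tfrac12$ for a different reason.

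Concretely, the paper's Case~2 ($i=L[k]$ and the low-overlap pair exists): let $u=L[1]$. The priority rule at line~\ref{prioritizing} forces $u$ to consume items of $T_u\setminus T_{L[k]}$ first, and since $|T_u\setminus T_i|\ge n'p_u/2$ and $M_u\ge n'p_u$, one gets $|E_u\cap\bar T_i|\le M_u - n'p_u/2$, so the sum bound improves by $n'p_u/2\ge n'p_k/2$. The paper's Case~3 ($i=L[k]$ and no such pair exists) is structurally different: there $i=k$ is the least frequent type, every pair satisfies $|T_u\setminus T_v|<n'p_u/2$, and one derives a contradiction by cascading $|T_k\cap T_1\cap\dots\cap T_{j}|\ge|T_k| - j\cdot n'p_k/2$; if $|T_k|>n'(1-p_k/2)+z$ then $\bigl|\bigcap_j T_j\bigr|\ge n'/2 + z>0$ (using $p_k\le 1/k$), contradicting the absence of universally high-valued items. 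Your sketch of branch~2 gestures at ``cascaded over pairs'' but gets the quantifiers tangled (you write ``$|G_{k'}|=M_{k'}$'' for a type you are assuming is unsaturated, i.e.\ $|G_{k'}|<M_{k'}$, and you speak of $T_{k'}$ overlapping $T_i$ while $i=k'$). So the hard case --- the one that actually produces the $\tfrac12$ --- is not proven, and the sub-arguments you expect to need are assigned to the wrong branches. To close the gap you would need to (a) realize both branches need the $i=L[k]$ treatment, (b) make the priority-rule accounting precise in branch~1, and (c) carry out the intersection cascade to a contradiction in branch~2.
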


\begin{proof}
Since type $i$ is not saturated by the end of line \ref{turnend}, in the turn of type $i$, the while loop in line \ref{whilehighval} only stops because all available high-valued items for type \(i\) (i.e., \(T'_i\)) are reserved as singleton bundles. Therefore $|E_i|= |T'_i| < M_i$. We consider three cases:

\begin{itemize}[leftmargin=*]
    \item \textbf{Case 1: \(i \neq L[k]\).} As \(\Bar{T_i} = T_i \cap \bigcup_{j\in b(i)}E_j\subseteq \cup_{j\in b(i)}E_j\), using the upper bound in \eqref{eq:upperE_j}, we have that
    \(|\Bar{T_i}| \leq \sum_{j\in b(i)} M_j\). Since \(|T_i| = |T'_i| + |\Bar{T_i}| \le M_i + \sum_{j\in b(i)} M_j\) and each \(M_j \le n'p_j + n'^\epsilon\sqrt{n'p_j}\), it follows that
    \begin{align*}   
    |T_i| \le & n'\sum_{j\in b(i)\cup\{i\}}p_j + z \le \\ &n'(1-p_{L[k]}) + z \le n'(1-p_k) + z < n'(1-\frac{p_k}{2}) + z,
    \end{align*}

    where we used the fact that \(L(k) \notin b(i)\cup\{i\}\) and that the probability of any type is at least \(p_k\).
    
    \item \textbf{Case 2 (\(i = L[k]\) and \(|T_{L[1]}\setminus T_i| \ge \frac{n'p_{L[1]}}{2}\)):} Let \(u = L[1]\) be the first type in the ordering of $L$. By line \ref{prioritizing}, type \(u\) reserves at least \(\frac{n'p_u}{2}\) items outside \(T_i\), so at most \(M_u - \frac{n'p_u}{2}\) of its reserved items belong to \(\Bar{T_i}\). For any other \(j \in b(i)\setminus u\), we use the general bound \(|E_j \cap \Bar{T_i}| \le |E_j| \leq M_j\), which implies $|\Bar{T_i}| \le \sum_{j \in b(i)} M_j - \frac{n'p_u}{2}$. Since \(|T_i| = |T'_i| + |\Bar{T_i}|\) with \(|T'_i| < M_i\), it follows that
\begin{equation}\label{eq:boundT_i}
    \begin{aligned}
        |T_i| < & \sum_{j \in b(i) \cup \{i\}} M_j - \frac{n'p_u}{2} = \\
        & \sum_{j \in [k]} M_j - \frac{n'p_u}{2} \leq n' +z - \frac{n'p_u}{2} \leq n'\Bigl(1-\frac{p_k}{2}\Bigr) + z
    \end{aligned}
\end{equation}
where we used \(b(i) \cup \{i\} = [k]\) as \(i\) is the last type in the ordering \(L\), $\sum_{i\in[k]} M_j \leq n'+z$, and the fact that $p_k$ is the smallest probability, i.e. $p_u\geq p_k$. 

\item \textbf{Case 3 (\(i = L[k]\) and \(|T_{L[1]}\setminus T_i| < \frac{n'p_{L[1]}}{2}\)):} In this case, the condition in line \ref{sortstart} is false, so for every pair of types \(u, v\) we have \(|T_u \setminus T_v| < \frac{n'p_u}{2}\). In this scenario, \(i = L[k]\) is the least frequent type (i.e. \(i = k\)). Suppose for contradiction that \(|T_k| > n'\Bigl(1-\frac{p_k}{2}\Bigr) + z\). For any \(j \in [k-1]\) we have that \(|T_k\setminus T_j| < \frac{n'p_k}{2}\), implying \(|T_k\cap T_j| \ge |T_k| - \frac{n'p_k}{2}\). In particular, for \(j = 1\) we obtain
\[
|T_k \cap T_1| \ge n'\Bigl(1-\frac{p_k}{2}\Bigr) + z - \frac{n'p_k}{2}.
\]
Since \(T_k \cap T_1 \subseteq T_k\), and \(|T_k\setminus T_2| < \frac{n'p_k}{2}\), at most \(\frac{n'p_k}{2}\) items in $T_k \cap T_1$ are not high-valued for type 2. Hence,
\[
|T_k \cap T_1 \cap T_2| \ge n'\Bigl(1-\frac{p_k}{2}\Bigr) + z - 2\frac{n'p_k}{2}.
\]
Continuing this argument for every \(j < k\) leads to
\[
\Bigl|\bigcap_{j=1}^{k-1} T_j \cap T_k\Bigr| \ge n' + z - k\frac{n'p_k}{2} \geq \frac{n'}{2} + z,
\]
where we used \(p_k \le \frac{1}{k}\). This contradicts the fact that no item is high-valued by all \(k\) types in the reduced instance. Therefore, we must have \(|T_k| \le n'\Bigl(1-\frac{p_k}{2}\Bigr) + z\).
\end{itemize}
\end{proof}

\noindent The following technical proposition will be needed in the proof of \Cref{lem:allsaturated}.

\begin{restatable}{proposition}{propboundk}\label{prop:boundk}
The following bounds hold. 
\begin{enumerate}
    \item If \(p_k = \omega\Bigl(\frac{n^\epsilon k}{\sqrt{n}}\Bigr)\), then \(k = o\Bigl(n^{\frac{1}{4} - \frac{\epsilon}{2}}\Bigr)\).
    \item If in \cref{alg:knownD}, $|C| < n(1-\frac{1}{k}) + n^{\epsilon}\sqrt{np_1}$, number of agents in $\I'$ satisfies \(n' = \Omega\Bigl(\frac{n}{k}\Bigr) = \omega\Bigl(n^{\frac{\epsilon}{2}+\frac{3}{4}}\Bigr)\). 
\end{enumerate}
\end{restatable}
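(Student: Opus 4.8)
\textbf{Proof plan for Proposition~\ref{prop:boundk}.}

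The proposition has two parts, and both are essentially algebraic manipulations of the hypotheses, so the plan is short. For part (1), I would start from the hypothesis $p_k = \omega\bigl(\frac{n^\epsilon k}{\sqrt n}\bigr)$ and combine it with the structural fact $p_k \le \frac{1}{k}$ (from the normalization $p_1 \ge \cdots \ge p_k$, recorded in Definition~\ref{dfn:onlinektype}). Chaining these gives $\frac{1}{k} \ge p_k = \omega\bigl(\frac{n^\epsilon k}{\sqrt n}\bigr)$, i.e. $\frac{1}{k} = \omega\bigl(\frac{n^\epsilon k}{\sqrt n}\bigr)$, which rearranges to $k^2 = o\bigl(\frac{\sqrt n}{n^\epsilon}\bigr) = o\bigl(n^{1/2 - \epsilon}\bigr)$, and hence $k = o\bigl(n^{1/4 - \epsilon/2}\bigr)$. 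This is the whole argument for part (1); the only care needed is to be careful about the direction of the asymptotic inequalities ($a = \omega(b)$ and $a \le c$ give $c = \omega(b)$, which is what we use).

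For part (2), the hypothesis is that we are in the branch of Algorithm~\ref{alg:knownD} where $|C| < n(1 - \frac1k) + n^\epsilon\sqrt{np_1}$. The reduced instance $\Ical'$ has agent set $[|C|+1, n]$, so $n' = n - |C|$. From the branch condition, $n' = n - |C| > n - n(1-\frac1k) - n^\epsilon\sqrt{np_1} = \frac nk - n^\epsilon\sqrt{np_1}$. Now bound $\sqrt{np_1} \le \sqrt n$ (since $p_1 \le 1$), so $n' > \frac nk - n^{\epsilon + 1/2}$. Using part (1), $k = o(n^{1/4-\epsilon/2})$, so $\frac nk = \omega\bigl(n^{3/4 + \epsilon/2}\bigr)$, which dominates the subtracted term $n^{\epsilon+1/2}$ (since $3/4 + \epsilon/2 > 1/2 + \epsilon$ for $\epsilon < 1/2$); hence $n' = \Omega(\frac nk) = \omega\bigl(n^{3/4+\epsilon/2}\bigr)$, as claimed. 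I should also note the trivial upper bound $n' \le n$ so that $\frac nk = \Theta$-comparable, but for the stated conclusion only the lower bound matters.

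There is no real obstacle here — the proposition is a bookkeeping lemma isolating two asymptotic estimates that get reused downstream (notably in the proof of \Cref{lem:allsaturated} and in \Cref{lem:secondimp}). The only thing to be slightly careful about is making the $\omega/\Omega/o$ bookkeeping rigorous: I would phrase part (1) as "there is a function $\omega(1) \to \infty$ with $p_k \ge \frac{n^\epsilon k}{\sqrt n}\cdot \omega(1)$, hence $k^2 \le \frac{\sqrt n}{n^\epsilon \omega(1)}$" and take square roots, and similarly make the domination step in part (2) explicit by factoring out $\frac nk$. Everything else is a one-line substitution.
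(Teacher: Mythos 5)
Your proof is correct and follows essentially the same route as the paper's: part (1) chains $p_k \le \frac{1}{k}$ with the hypothesis to get $k^2 = o(n^{1/2-\epsilon})$, and part (2) uses $n' = n - |C|$, bounds $\sqrt{np_1}\le\sqrt n$, and invokes part (1) together with $\epsilon<\frac12$ to show $\frac nk$ dominates $n^{\epsilon+1/2}$. No gaps; the bookkeeping matches the paper step for step.
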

\begin{proof}
    \begin{enumerate}
        \item As $p_k$ is the probability of least frequent type, \(p_k \le \frac{1}{k}\). Under the assumption \(p_k = \omega\left(\frac{n^\epsilon k}{\sqrt{n}}\right)\), it follows that \(\frac{n^\epsilon k}{\sqrt{n}} = o(p_k) \le o\Bigl(\frac{1}{k}\Bigr)\). Multiplying by \(k\) yields \(\frac{n^\epsilon k^2}{\sqrt{n}} = o(1)\), i.e., \(k^2 = o\Bigl(n^{\frac{1}{2}-\epsilon}\Bigr)\). Taking square roots on both sides gives \(k = o\Bigl(n^{\frac{1}{4}-\frac{\epsilon}{2}}\Bigr)\).\\

        \item Note that \(n' = n-|C| \ge \frac{n}{k} - n^\epsilon\sqrt{np_1} \ge \frac{n}{k} - n^{\epsilon+\frac{1}{2}}\). As \(k = o\Bigl(n^{\frac{1}{4}-\frac{\epsilon}{2}}\Bigr)\), so \(\frac{n}{k} = \omega\Bigl(n^{\frac{\epsilon}{2}+\frac{3}{4}}\Bigr)\) and hence \(n' = \Omega\Bigl(\frac{n}{k}\Bigr)\) for \(\epsilon < \frac{1}{2}\).
    \end{enumerate}
\end{proof}

\begin{restatable}{lemma}{lemoneunsat}\label{lem:oneunsat}
    Given an input instance $\I' = (N', M', \{v_i\}^k_{i=1})$ with known distribution $D$, where $n' = |N'|$, if by the end of preprocessing steps (line \ref{lowC:prepend}) of \cref{alg:lowC}, a type $i$ is not saturated, $|G_i| \geq |T_i| - n' + \frac{ v_i(M')-|T_i|}{2\alpha} + n'p_i - n'^\epsilon \sum_{j\neq i}\sqrt{n'p_j}$, where, for any \(j \in [k]\), \(p_j\) denotes the probability of an agent being of type \(j\) according to the distribution \(D\).
\end{restatable}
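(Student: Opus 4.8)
\textbf{Proof proposal for Lemma~\ref{lem:oneunsat}.}

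The plan is to track, for a type $i$ that remains unsaturated at the end of preprocessing (line~\ref{lowC:prepend}), a lower bound on $v_i(M')$ accounted across the three groups of items: (a) the singleton high-valued bundles reserved in $E_j$ for types $j \in b(i)$ that ``steal'' value from $T_i$, (b) the singleton bundles $E_i$ reserved for $i$ itself during its turn, and (c) the bag-filled bundles created in the while loop of lines~\ref{while:BF}. Since $i$ is unsaturated, by the logic of \Cref{prop:limitT_i} its turn ended because $T'_i = T_i \setminus \Bar T_i$ was exhausted, so $|E_i| = |T'_i|$; moreover the bag-filling loop ran until $R = \emptyset$ (otherwise it would only stop when \texttt{unsaturated} $= \emptyset$, contradicting $i$ being unsaturated). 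Hence every item of $M'$ not placed in some $E_j$ ($j \in b(i) \cup \{i\}$) ends up in a bag-filled bundle.

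First I would bound the value $i$ places on the bag-filled bundles. Each such bundle $B$ is formed from $R$, which at that stage contains no item that $i$ values at least $\alpha$ (all of $T_i$ has been consumed as singletons by the end of the turns — this needs a small check that every item of $T_i$ is indeed removed from $M'$, which follows because $\Bar T_i$ items sit in earlier $E_j$'s and $T'_i$ items are exactly $E_i$). So by the standard bag-filling argument, when a bundle is closed it was worth less than $\alpha$ to every unsaturated type before the last item, and the last item is worth at most $\alpha$ (in fact at most the universal per-item bound $1$, but $\le\alpha$ suffices via \cref{cl:atmost1/k}-style reasoning adapted to this phase), so $v_i(B) < 2\alpha$. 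If $\beta_i$ denotes the number of bag-filled bundles added to $G_j$'s in total, the total value $i$ sees in all bag-filled items is $< 2\alpha \cdot (\text{number of bag-filled bundles})$. On the other hand, the number of bag-filled bundles equals $\sum_{j} (|G_j| - |E_j|)$ summed over types, and since each saturated type $j \ne i$ has $|G_j| = M_j$ and contributes at most $M_j \le n'p_j + n'^\epsilon\sqrt{n'p_j}$ bundles, while $i$ itself gets $|G_i| - |E_i|$ bag-filled bundles, I can upper bound the count of bag-filled bundles by $(|G_i| - |E_i|) + \sum_{j \ne i} M_j = (|G_i| - |T'_i|) + n'(1-p_i) + n'^\epsilon\sum_{j\ne i}\sqrt{n'p_j}$.

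Now assemble the value equation. Write $v_i(M') = v_i(\text{items in } \bigcup_{j} E_j) + v_i(\text{bag-filled items})$. Items in $\bigcup_j E_j$ that are high-valued for $i$ are exactly $T_i$ (the ones in $\Bar T_i$ sit in earlier $E_j$'s, the ones in $T'_i$ are $E_i$); items in $\bigcup_j E_j$ that are \emph{not} high-valued for $i$ contribute at most $\alpha$ each but there are at most $\sum_{j\ne i} M_j - |\Bar T_i|$... here I would instead be more careful and bound $v_i$ of the non-$T_i$ singleton items by $\alpha$ times their count, which is $\le \sum_{j\ne i}M_j - |\Bar T_i|$; combined with the $< 2\alpha$ bound per bag-filled bundle this gives an inequality in which $|G_i|$ appears, and solving for $|G_i|$ yields the claimed bound $|G_i| \ge |T_i| - n' + \frac{v_i(M') - |T_i|}{2\alpha} + n'p_i - n'^\epsilon\sum_{j\ne i}\sqrt{n'p_j}$. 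The main obstacle will be the bookkeeping to make the two ``$\le \alpha$ per item'' and ``$< 2\alpha$ per bundle'' estimates line up so that the $\frac{1}{2\alpha}$ coefficient on $v_i(M') - |T_i|$ emerges cleanly rather than with extra slack; in particular I must be sure that every item of $M'$ is counted exactly once and that the items $i$ values at least $\alpha$ but which landed in some $E_j$ (i.e.\ $\Bar T_i$) are not double-subtracted. Getting this accounting tight — especially reconciling the count of bag-filled bundles with $\sum_j M_j$ and the number of singletons — is where the real work lies; the bag-filling value bound itself is routine given \cref{cl:atmost1/k}.
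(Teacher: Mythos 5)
Your overall plan — decompose $v_i(M')$ across the singleton phase and the bag-filling phase, use the $<2\alpha$ per-bag bound, count bags lost to other types, and solve for $|G_i|$ — is exactly the paper's approach, and your identification of $E_i=T'_i$, the fact that $R\cap T_i=\emptyset$ at the start of bag-filling, and the $<2\alpha$ bag bound are all correct. You also correctly note that the remaining concern is in the ``bookkeeping.''

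However, the bookkeeping as you sketch it does not close, and the place it fails is precisely the one you flagged. You bound the number of bag-filled bags given to each $j\neq i$ by $M_j$ (``each saturated type $j\neq i$ contributes at most $M_j$ bundles''), whereas the correct bound is $M_j-|E_j|$, since $|E_j|$ of type $j$'s reserved bundles are singletons, not bag-filled bags. Similarly, in the singleton accounting you bound the number of non-$T_i$ singletons by $\sum_{j\neq i}M_j-|\Bar T_i|$, whereas the exact count is $\sum_{j\neq i}|E_j|-|\Bar T_i|$. Each replacement of $|E_j|$ by $M_j$ is individually a valid over-estimate, but both discard the $|E_j|$ terms, which are the ones that make the argument work. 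In the paper's version, the two $|E_j|$ contributions enter the $|G_i|$ bound with coefficients $-\tfrac12$ (from $\tfrac{1}{2\alpha}\cdot\alpha$ on the non-$T_i$ singleton count) and $+1$ (from $-(M_j-|E_j|)$ on the bag count), so they net to $+\tfrac12\sum_{j\neq i}|E_j|$. That residual term is then combined with $\Bar T_i\subseteq\bigcup_{j\in b(i)}E_j$, i.e.\ $\sum_{j\neq i}|E_j|\geq|\Bar T_i|$, to upgrade the leftover $\tfrac{|\Bar T_i|}{2}$ to $|\Bar T_i|$ and obtain $|T'_i|+|\Bar T_i|=|T_i|$ as in the claimed bound. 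With $M_j$ substituted in both places, that cancellation is lost and the best you can derive is
\[
|G_i|\ \geq\ |T'_i|+\frac{|\Bar T_i|}{2}+\frac{v_i(M')-|T_i|}{2\alpha}-\frac32\sum_{j\neq i}M_j,
\]
which is weaker than the target by $\tfrac{|\Bar T_i|}{2}+\tfrac12\sum_{j\neq i}M_j$ and, after substituting $M_j\leq n'p_j+n'^\epsilon\sqrt{n'p_j}$, cannot be massaged into the stated inequality. So the missing idea is: keep $|E_j|$ exact in both the singleton count and the bag count so that the two occurrences partially cancel, and then finish with $\sum_{j\neq i}|E_j|\geq|\Bar T_i|$.

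One minor further note: the $<2\alpha$ bound for bag-filled bags relies on every remaining item in $R$ being valued below $\alpha$ by every \emph{unsaturated} type, which here follows because for any unsaturated $i$ the set $T_i$ is entirely absorbed into $\bigcup_j E_j$ (you noted this, though citing the analogue of \cref{cl:atmost1/k} is a slight misattribution since that claim is about \cref{alg:adversarial-alg}; the relevant fact in this phase is $R\cap T_i=\emptyset$).
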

\begin{proof}
    Since type \(i\) is not saturated by the end of preprocessing (i.e., \(|G_i| < M_i\)), and reserved bundles are only added to \(G_i\) during this phase, the total number of reserved bundles is always below the threshold \(M_i\). Consequently, the while loop in line \ref{whilehighval} terminates once all available items in \(T'_i\) have been added as singleton bundles to \(G_i\). Therefore,
    \begin{equation}\label{eq:E_i}
        E_i = T'_i.
    \end{equation}
    
    In line \ref{BFstart}, \(R\) denotes the set of items remaining after all types have arrived in order and selected \(|E_j|\) available items from their respective high-valued sets. Hence $R = M' \setminus \bigcup^k_{j=1} E_j$. Note that $T_i \subseteq \bigcup^k_{j=1} E_j$ because we splitted $T_i$ so that $T_i = T'_i \cup \Bar{T}_i$, and $T'_i = E_i$ by \eqref{eq:E_i} and $\Bar{T}_i \subseteq \bigcup_{j\in b(i)} E_j$ by \eqref{eq:T_ibar}. Therefore $R\cap T_i= \emptyset$, hence 
    \begin{equation}\label{eq:Rlowval}
        \forall g\in R: v_i(g)< \alpha.
    \end{equation}
    All items in $T_i$ are high-valued by type $i$, but since each item's value is bounded by $1$ in a normalized instance, $v_i(T_i) \leq |T_i|$. Every item in $\bigcup^k_{j=1} E_j \setminus T_i$ is low valued by type $i$, and hence can have a value of at most $\alpha$. Note that $T'_i = E_i$ implies $\bigcup^k_{j=1} E_j \setminus T_i = \bigcup_{j\in[k],j\neq i} E_j \setminus \Bar{T}_i$. For ease of notation, from now on we only use $j\neq i$ to denote $\{j\in [k] \mid j\neq i\}$. Putting it together, we obtain
    \begin{align*}
        v_i(R) = & v_i(M') - v_i(T_i) - v_i(\bigcup^k_{j=1}E_j\setminus T_i) \geq \\& v_i(M')-|T_i| - (\sum_{j\neq i} |E_j| - |\Bar{T_i}|)\alpha. \end{align*}

Let us refer to lines \ref{while:BF}–\ref{lowC:prepend} as the bag-filling process. During this process, each type \(j\) can obtain up to \(M_j - |E_j|\) extra bundles since \(|G_j| \le M_j\) and \(E_j\) contains the bundles already reserved in an earlier step (i.e., \(|G_j \setminus E_j| \le M_j - |E_j|\) for all \(j\)). Hence, type \(i\) can lose at most \(\sum_{j\neq i} \left(M_j - |E_j|\right)\) bags to other types during the bag-filling process. Recall that a type only claims a bag if its value exceeds \(\alpha\). Moreover, immediately before the last item is added to each assigned bag in the bag-filling process, the bag remains unclaimed by all unsaturated types (including type \(i\)), and by \eqref{eq:Rlowval}, every remaining item in \(R\) is valued below \(\alpha\) by type \(i\). Therefore, when such an item is added as the last item to a bag, it can increase the bag's value to at most \(2\alpha\). Consequently, during bag-filling the value of every assigned bundle is at most \(2\alpha\) for type \(i\). It follows that the remaining value for type \(i\) is at least 
\[
v_i\Bigl(R\setminus \bigcup_{j\neq i} G_j\Bigr) \ge v_i(M') - |T_i| - \Bigl(\sum_{j\neq i}|E_j| - |\Bar{T_i}|\Bigr)\alpha - \sum_{j\neq i}\Bigl[M_j - |E_j|\Bigr]2\alpha.
\]
Since the value of every assigned bundle is at most \(2\alpha\) (including those reserved for type \(i\)), the number of bags reserved for type \(i\) during the bag-filling is at least 
\[
\frac{v_i\Bigl(R\setminus \bigcup_{j\neq i} G_j\Bigr)}{2\alpha}.
\]
Putting this together with the number of bags reserved for type $i$ in the earlier step in $E_i$ and \ref{eq:E_i}, and considering $M_i=\lfloor\mu_i + n'^\epsilon\sqrt{\mu_i}\rfloor$, we obtain the following:

\begin{equation}\label{eq:firstboundG_i}
\begin{aligned}
|G_i| \geq & |T'_i| + \frac{ v_i(M')-|T_i|}{2\alpha} - \frac{(\sum_{j\neq i} |E_j - |\Bar{T_i}|)}{2} \\& -\sum_{j\neq i} \left[n'p_j + n'^\epsilon \sqrt{n'p_j} - |E_j|\right] \\
& = |T'_i| + \frac{ v_i(M')-|T_i|}{2\alpha} - \frac{\sum_{j\neq i} |E_j|}{2} + \frac{|\Bar{T_i}|}{2} \\& - n'(1-p_i) - n'^\epsilon \sum_{j\neq i}\sqrt{n'p_j} + \sum_{j\neq i} |E_j| \\
& = |T'_i| + \frac{ v_i(M')-|T_i|}{2\alpha} + \frac{\sum_{j\neq i} |E_j|}{2} + \frac{|\Bar{T_i}|}{2}\\& - n' + n'p_i - n'^\epsilon \sum_{j\neq i}\sqrt{n'p_j}
\end{aligned}
\end{equation}

where we used $\sum_{j\neq i} n'p_j = n'\sum_{j\neq i} p_j = n'(1-p_i)$. By \eqref{eq:T_ibar}, $\Bar{T}_i \subseteq \bigcup_{j\in b(i)} E_j$, hence $|\Bar{T_i}|\leq \sum_{j\neq i} |E_j|$. Using this inequality in \eqref{eq:firstboundG_i} we derive
\begin{equation}\label{eq:Gi1}
    \begin{aligned}
    |G_i| & \geq |T'_i| + |\Bar{T_i}| + \frac{ v_i(M')-|T_i|}{2\alpha} -n' + n'p_i - n'^\epsilon \sum_{j\neq i}\sqrt{n'p_j}\\
    & = |T_i| - n' + \frac{ v_i(M')-|T_i|}{2\alpha} + n'p_i - n'^\epsilon \sum_{j\neq i}\sqrt{n'p_j}
\end{aligned}
\end{equation}
where we used $T_i = T'_i\cup \Bar{T}_i$.

\end{proof}

\noindent The following lemma is our main guarantee for \Cref{alg:lowC}.
\begin{lemma}\label{lem:allsaturated}
    If \cref{alg:knownD}, invokes \cref{alg:lowC} on an instance $\I'=(N',M',\{v_i\}^k_{i=1})$, where for all $i\in k$, $v_i(M')\geq |N'|$ and $p_{k} = \omega(\frac{n^\epsilon k}{\sqrt{n}})$, for any small $\eta>0$ where $\alpha = \frac{1}{2(1+\eta)}$, there exists an $n(\eta)$, where for all $n>n(\eta)$, all types are saturated during the preprocessing steps(lines \ref{lowC:prepstart}-\ref{lowC:prepend}) of \cref{alg:lowC}. 
\end{lemma}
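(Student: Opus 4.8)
The plan is to argue by contradiction. Suppose that at the end of the preprocessing steps (lines~\ref{lowC:prepstart}--\ref{lowC:prepend}) some type $i$ is \emph{not} saturated, i.e.\ $|G_i|<M_i$. Since $G_i$ only grows during preprocessing, type $i$ is then also not saturated at the end of line~\ref{turnend}, so \emph{both} \cref{prop:limitT_i} and \cref{lem:oneunsat} apply to type $i$. I would then chain these two bounds together, using $\tfrac{1}{2\alpha}=1+\eta$ and the hypothesis $v_i(M')\ge n'$, to conclude that in fact $|G_i|\ge M_i$ once $n$ exceeds a threshold $n(\eta)$ --- contradicting unsaturation, so that every type must be saturated. (If no type is unsaturated, there is nothing to prove.)

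\paragraph{Chaining the two lemmas.} Plugging $\alpha=\tfrac{1}{2(1+\eta)}$ into \cref{lem:oneunsat} gives $|G_i|\ge |T_i|-n'+(1+\eta)\bigl(v_i(M')-|T_i|\bigr)+n'p_i-n'^{\epsilon}\sum_{j\ne i}\sqrt{n'p_j}$; expanding, and using $v_i(M')\ge n'$ so that $(1+\eta)v_i(M')\ge (1+\eta)n'$, the $|T_i|$ and $n'$ terms telescope to $|G_i|\ge \eta\,(n'-|T_i|)+n'p_i-n'^{\epsilon}\sum_{j\ne i}\sqrt{n'p_j}$. Next I would invoke \cref{prop:limitT_i}, which for an unsaturated type gives $|T_i|\le n'\bigl(1-\tfrac{p_k}{2}\bigr)+z$ with $z=n'^{\epsilon}\sum_{j=1}^{k}\sqrt{n'p_j}$, hence $n'-|T_i|\ge \tfrac{n'p_k}{2}-z$, so $|G_i|\ge \eta\bigl(\tfrac{n'p_k}{2}-z\bigr)+n'p_i-n'^{\epsilon}\sum_{j\ne i}\sqrt{n'p_j}$. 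Since $M_i=\lfloor n'p_i+n'^{\epsilon}\sqrt{n'p_i}\rfloor\le n'p_i+n'^{\epsilon}\sqrt{n'p_i}$, it suffices to establish $\eta\bigl(\tfrac{n'p_k}{2}-z\bigr)\ge n'^{\epsilon}\sum_{j\ne i}\sqrt{n'p_j}+n'^{\epsilon}\sqrt{n'p_i}=z$, i.e.\ $n'p_k\ge \tfrac{2(1+\eta)}{\eta}\,z$.

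\paragraph{Closing via the minimum-probability bound.} To finish I would bound $z$ by Cauchy--Schwarz: $z=n'^{\epsilon}\sqrt{n'}\sum_{j}\sqrt{p_j}\le \sqrt{k}\,n'^{\epsilon+1/2}$, so it is enough that $p_k\ge \tfrac{2(1+\eta)}{\eta}\cdot\tfrac{\sqrt{k}}{n'^{1/2-\epsilon}}$. Using \cref{prop:boundk} ($n'=\Omega(n/k)$) together with $k^{1-\epsilon}\le k$, the right-hand side is $O_{\eta}\!\bigl(\tfrac{n^{\epsilon}k}{\sqrt n}\bigr)$ with a constant depending only on $\eta$ (and the fixed $\epsilon$ and the hidden constant in $n'=\Omega(n/k)$). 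Since the hypothesis is $p_k=\omega\!\bigl(\tfrac{n^{\epsilon}k}{\sqrt n}\bigr)$, for each fixed $\eta>0$ there is an $n(\eta)$ so that for all $n>n(\eta)$ this inequality holds; then $|G_i|\ge M_i$, the desired contradiction, and all types are saturated.

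\paragraph{Where the difficulty lies.} The delicate part is reconciling the two phases: \cref{prop:limitT_i} controls how many high-valued items an unsaturated type can still have after the singleton-reservation phase (lines~\ref{turnstart}--\ref{turnend}), whereas \cref{lem:oneunsat} accounts for value bled to other types during bag-filling, where each assigned bundle can be worth up to $2\alpha$. The argument only closes because the $\eta$-slack in $\alpha$ (namely $\tfrac{1}{2\alpha}-1=\eta$) turns into a $\Theta(\eta\,n'p_k)$ surplus in the lower bound for $|G_i|$, which must dominate the total ``overshoot'' $z=n'^{\epsilon}\sum_{j}\sqrt{n'p_j}$ in reserved bundles; keeping $z$ small relative to $n'p_k$ is exactly what the assumption $p_k=\omega(n^{\epsilon}k/\sqrt n)$ (equivalently, the implied sublinear bound on $k$) guarantees. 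The $1/\eta$ factor is precisely why $n(\eta)$ must grow as $\eta\to 0$, matching the asymptotic form of the statement.
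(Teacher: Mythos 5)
Your proof is correct and takes essentially the same approach as the paper: contradiction on some unsaturated type $i$, chaining \cref{lem:oneunsat} and \cref{prop:limitT_i} to lower-bound $|G_i|$, bounding $z$ by Cauchy--Schwarz, and invoking $p_k=\omega(n^\epsilon k/\sqrt n)$ together with $n'=\Omega(n/k)$ from \cref{prop:boundk} to close the gap for $n>n(\eta)$. The only cosmetic difference is that you make explicit the (correct) observation that unsaturation at the end of line~\ref{lowC:prepend} implies unsaturation at line~\ref{turnend}, which licenses applying \cref{prop:limitT_i}; the paper uses this implicitly.
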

\begin{proof}

As $n'= |N'|$, \(v_i(M') \ge n'\) for every \(i \in [k]\). Now suppose for contradiction that type $i$ is not saturated in \cref{alg:lowC}. According to \cref{lem:oneunsat}, 
    \begin{equation}\label{eq:Gi}
    \begin{aligned}
    |G_i| & \geq|T_i| - n' + \frac{ v_i(M')-|T_i|}{2\alpha} + n'p_i - n'^\epsilon \sum_{j\neq i}\sqrt{\mu_j} \\
    & \geq \left[\frac{1}{2\alpha}-1\right] (n'-|T_i|) + n'p_i - n'^\epsilon \sum_{j\neq i}\sqrt{\mu_j}
\end{aligned}
\end{equation}
where we used $\mu_j = n'p_j$, and $v_i(M')\geq n'$. Proposition \ref{prop:limitT_i} implies $|T_i| \leq n'(1-\frac{p_k}{2}) + z$. Hence, we obtain: 
\begin{align}\label{eq:onetolast}
    |G_i| & \geq \left[\frac{1}{2\alpha}-1\right]  \left[\frac{n'p_k}{2} - z\right]+ n'p_i - n'^\epsilon \sum_{j\neq i}\sqrt{\mu_j}.
\end{align}

By the Cauchy-Schwarz inequality, we have
\[
z = n'^\epsilon \sum_{i=1}^{k}\sqrt{\mu_i} \le n'^\epsilon \sqrt{k \sum_{i=1}^{k}\mu_i} = n'^\epsilon \sqrt{k n'}.
\]
By assumption, \(p_k = \omega\Bigl(\frac{n^\epsilon k}{\sqrt{n}}\Bigr)\). \Cref{prop:boundk} implies \(n' = \Omega\Bigl(\frac{n}{k}\Bigr)\). 
Therefore, $p_k = \omega\Bigl(\frac{n^\epsilon\sqrt{k}}{\sqrt{n'}}\Bigr)$. Since \(n \ge n'\), it follows that $p_k = \omega\Bigl(\frac{n'^\epsilon\sqrt{k}}{\sqrt{n'}}\Bigr)$. Hence, it follows that $p_k = \omega\Bigl(\frac{z}{n'}\Bigr)$. Consider $\alpha =\frac{1}{2(1+\eta)}$ given any small $\eta$. There exists a $n(\eta)$ where for all $n> n(\eta)$, $\eta (\omega(z)-z) \geq z$. Let us consider this range of $n$, then we can rewrite \eqref{eq:onetolast} as follows: 

\begin{align}\label{eq:last}
    |G_i| & \geq \eta \left[\omega(z) -z\right]+ n'p_i - n'^\epsilon \sum_{j\neq i}\sqrt{\mu_j} \\
    & \geq z + n'p_i - n'^\epsilon \sum_{j\neq i}\sqrt{\mu_j} \\
    & = n'p_i+ n'^\epsilon\sqrt{\mu_i} \geq M_i
\end{align}

However, this is a contradiction with the assumption that type $i$ is unsaturated, i.e. $|G_i| < M_i$. Therefore, all types should be saturated. 
\end{proof}

Finally, we are ready to prove our main lemma for this Section. 
\lemsecondimp*
\begin{proof}
The first \( |C| \) agents each receive a singleton bundle from \( C \), and since every item in \( C \) is valued above \(\alpha\) by all types, their MMS requirements are met. For the reduced instance $\Ical' = \bigl([|C|+1,n],\, M',\, \{v_i\}_{i=1}^k\bigr)$ with \( n' = n - |C| \). Let \( X^{n'}_i \) denote the number of type \( i \) agents in \(\Ical'\). By Proposition~\ref{prop:boundx_i} and the union bound,
\[
P\Bigl[\exists\, i\in[k]: X^{n'}_i > M_i\Bigr] \le O\Bigl(k\,e^{-n'^{2\epsilon}}\Bigr).
\]
Since Proposition~\ref{prop:boundk} gives \( k = o\bigl(n^{\frac{1}{4}-\frac{\epsilon}{2}}\bigr) \) and \( n' = \omega\bigl(n^{\frac{\epsilon}{2}+\frac{3}{4}}\bigr) \), the bound simplifies to \( O\bigl(e^{-n'^{2\epsilon}}\bigr) = o\bigl(e^{-n^{1.5\epsilon}}\bigr) \). Thus, with probability at least \( 1 - o\bigl(e^{-n^{1.5\epsilon}}\bigr) \), we have \( X^{n'}_i \le M_i \) for all \( i \).

Moreover, since each preallocated item is valued at most \(1\), it holds that \( v_i(M') \ge n - |C| = n' \) for all \( i \). Then, by Lemma~\ref{lem:allsaturated}, there exists an \( n(\eta) \) such that for all \( n > n(\eta) \) every type becomes saturated during preprocessing (i.e., \( |G_i| = M_i \) for all \( i \)). This ensures that the number of reserved bundles for each type is sufficient to cover all type-\(i\) agents in \(\Ical'\), and hence every agent receives a bundle valued above \(\alpha\) with high probability.
\end{proof}

\subsection{Tightness of Analysis}\label{app:tightness}
The following claims show that our analysis is almost-tight.

\begin{restatable}{claim}{claimtightone}
    Algorithm \ref{alg:knownD} cannot obtain an approximation factor better than $\alpha = \frac{1}{2}$.
\end{restatable}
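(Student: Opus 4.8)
The plan is to show that for every constant $\delta\in(0,\tfrac12)$ there is an instance, together with a type distribution, on which \cref{alg:knownD} run with internal parameter $\alpha=\tfrac12+\delta$ fails to be $\alpha$-$\MMS$ competitive — indeed, fails with probability $1-e^{-\Omega(n)}$. The source of the failure is the overshoot in the bag-filling loop of \cref{alg:lowC}: a bag sitting just below value $\alpha$ can be pushed by one further item to value close to $2\alpha>1=\MMS$, so it ``consumes'' strictly more than one unit of value; if the execution is forced to make $\Omega(n)$ such heavy bundles, the algorithm runs out of value before it can assemble the $\sum_i M_i>n$ bundles it reserves.

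I would instantiate this with $n$ agents, $k=2$ types having \emph{identical} additive valuations, and $D=(\tfrac12,\tfrac12)$, so that $\mu_i=np_i=n/2$ and $M_i=\lfloor n/2+n^{\epsilon}\sqrt{n/2}\rfloor=\mu_i+o(n)$. The items are $N$ ``big'' items of common value slightly below $\alpha$ (e.g.\ $\tfrac12+\tfrac{\delta}{2}$, which is $<\alpha$ yet $>\tfrac12$) and many ``small'' items of value $\zeta$ for a small constant $\zeta<\delta$; the counts and the granularity $\zeta$ are chosen so that the total value is exactly $n$ and the items admit a partition into $n$ bundles of value exactly $1$ (each big item together with $\tfrac{1-(\tfrac12+\delta/2)}{\zeta}$ smalls, and the leftover smalls grouped into bundles of $\tfrac1\zeta$ each). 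The latter forces $\MMS_i=1$, since the average bundle value is $1$. Because every item has value $<\alpha$, there are no high-valued items at all, so $C=\emptyset$, \cref{alg:knownD} takes its \textbf{else} branch, pre-allocates nothing, and runs \cref{alg:lowC} on the whole instance; there every $T_i=\emptyset$, so the high-valued phase (lines \ref{turnstart}--\ref{turnend}) is vacuous and we enter the bag-filling loop with $R=M'$.

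Next I would pin down a legitimate execution of that loop (legitimate because items are only required to be added ``arbitrarily''): in each iteration, add small items one by one until the bag's value lies in $[\alpha-\zeta,\alpha)$ (still unclaimed), then add one big item, at which point the bag has value in $[1+\tfrac{3\delta}{2}-\zeta,\,1+\tfrac{3\delta}{2})\subseteq(1+\tfrac{\delta}{2},\,1+2\delta)$, is claimed by both (identical) types, and is placed in one of $G_1,G_2$. With the counts tuned so this exhausts every item in exactly $N$ equal-composition bundles, $N<n/(1+\tfrac{\delta}{2})\le n(1-\tfrac{\delta}{4})$, hence $|G_1|+|G_2|=N\le n(1-\tfrac{\delta}{4})$ while $|G_i|\le M_i=\mu_i+o(n)$; therefore some $i^\star\in\{1,2\}$ — fixed before any agent arrives, since all of \cref{alg:lowC}'s preprocessing precedes the online loop — satisfies $|G_{i^\star}|\le\mu_{i^\star}-\Omega(n)$. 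As $X^n_{i^\star}\sim\mathrm{Bin}(n,\tfrac12)$ has mean $\mu_{i^\star}=n/2$, a Chernoff bound gives $\Pr[X^n_{i^\star}>|G_{i^\star}|]\ge 1-e^{-\Omega(n)}$; on that event the type-$i^\star$ agents outnumber the bundles reserved for them, so after $G_{i^\star}$ empties the next arriving type-$i^\star$ agent receives no bundle, i.e.\ value $0<\alpha=\alpha\cdot\MMS_{i^\star}$. Hence \cref{alg:knownD} is not $\alpha$-$\MMS$ competitive for this $\alpha>\tfrac12$, and since $\delta$ was arbitrary, no factor exceeding $\tfrac12$ is attainable. (An even simpler witness works for $k=1$: the same construction makes the $n$-th agent get nothing \emph{deterministically}.)

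I expect the only real obstacle to be the bookkeeping in the instance design: simultaneously arranging that the total value is exactly $n$, that an exact value-$1$ partition into $n$ bundles exists (so $\MMS=1$), and that the greedy execution above uses up every item after producing $\Omega(n)$ over-weight bundles — together with the attendant integrality/divisibility conditions on $\zeta$, $N$ and $n$. The conceptual core (bag-filling can overshoot to nearly $2\alpha$, so once $\alpha>\tfrac12$ one cannot build $>n$ bundles from $n$ units of value) and the concentration step are both routine.
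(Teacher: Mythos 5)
Your core insight is the right one — with $\alpha>\tfrac12$, bag-filling consumes more value per bag than the quota $\sum_i M_i > n$ allows — but the specific instance you build does not actually force failure, because the failure you identify is an artifact of the \emph{order} in which you chose to fill bags, not a property of the instance.

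Look at the bag-filling step in \cref{alg:lowC}: items are added ``arbitrarily'' until a type claims the bag. You resolve this nondeterminism adversarially, loading smalls up to just below $\alpha$ and then dropping in a big item worth $\tfrac12+\tfrac{\delta}{2}$, so the bag overshoots to $\approx 2\alpha$. But there is an equally legitimate resolution that adds the big item \emph{first} (value $\tfrac12+\tfrac{\delta}{2}<\alpha$, still unclaimed) and then tops it off with smalls of granularity $\zeta$; the loop then halts at value in $[\alpha,\alpha+\zeta)$. Under that fill, every bag costs only $\approx\alpha=\tfrac12+\delta$ units of value, so from total value $n$ one can produce roughly $n/\alpha\approx 2n(1-2\delta)>n$ bags — comfortably more than $M_1+M_2=n+o(n)$ — and both types saturate. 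So your instance admits a resolution of the ``arbitrary'' fill on which \cref{alg:knownD} succeeds; you have only shown that \emph{some} resolution fails. Your $k=1$ aside has the same problem: it is not a deterministic failure, it is deterministic failure \emph{given your chosen fill order}.

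The paper's construction is engineered precisely to close this hole: every item there has value $\approx\tfrac12$ for every type (the tiny $\pm\tfrac{\epsilon}{2i}$ perturbations are only there to make the MMS partitions rigid). Consequently a single item never reaches $\alpha=\tfrac12+\epsilon$, while \emph{any} two items always exceed it, so every claimed bag contains exactly two items no matter how the arbitrary fill is resolved. With $2n$ items that is at most $n$ bags, yet saturation needs $\sum_i M_i>n$; the impossibility holds for \emph{all} executions, not one chosen adversarially. If you want to keep your value-accounting framing rather than the paper's item-counting one, the fix is to make all items lie in the window $(\tfrac12-\epsilon',\tfrac12]$ for a suitably small $\epsilon'$, so that no fill order can produce a claimed bag of value materially below $1$; your Chernoff step for pinning down the failure probability is fine and can be kept as is.
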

\begin{proof}
We demonstrate via an example that an \(\alpha\)-competitive guarantee of \(\alpha = \frac{1}{2} + \epsilon\) is unattainable for any \(\epsilon > 0\). Consider \(k\) types with equal arrival probabilities and \(2n\) items. The first type values every item at \(\frac{1}{2}\), while each of the remaining types values all but the last two items at \(\frac{1}{2}\). For the \(i\)th type (with \(2\leq i \leq k\)), the last two items are valued at \(\frac{1}{2} + \frac{\epsilon}{2i}\) and \(\frac{1}{2} - \frac{\epsilon}{2i}\), respectively. Note that the MMS value of all types is $1$. Since no type possesses a highly valued item, reserved bundles in the preprocessing step are allocated solely through the bag-filling procedure. In this process, each bag consumes exactly two items and is arbitrarily assigned to one of the types, resulting in at most \(n\) bags in total. However, to saturate all types, each type must receive more than \(\frac{n}{k}\) bundles, necessitating more than \(n\) bags overall. This contradiction shows that it is impossible to saturate all types when \(\alpha > \frac{1}{2}\).
\end{proof}

\begin{restatable}{claim}{claimtighttwo}
    Algorithm~\ref{alg:knownD} fails to be \(\alpha\)-MMS competitive for any \(\alpha > 0\) if \(p_k = o\Bigl(\frac{n^\epsilon k}{\sqrt{n}}\Bigr)\).
\end{restatable}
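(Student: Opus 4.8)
The plan is to exhibit, for every $\alpha>0$, a stochastic instance with $p_k=o\bigl(n^\epsilon k/\sqrt n\bigr)$ on which \cref{alg:knownD} run with parameter $\alpha$ fails: with constant probability some agent is handed the empty bundle. The mechanism is that the reservation cap $M_i=\lfloor np_i+n^\epsilon\sqrt{np_i}\rfloor$ used inside \cref{alg:lowC} collapses to $0$ as soon as $np_i+n^\epsilon\sqrt{np_i}<1$ (i.e.\ roughly $p_i=O(1/n)$), yet an agent of type $i$ still arrives with positive probability; a type with $M_i=0$ is removed from $\mathtt{unsaturated}$ at once, reserves nothing in $G_i$, and so any arriving agent of that type receives nothing.

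Concretely I would take $d=\lfloor 1/\alpha\rfloor+1$ so that $1/d<\alpha$, and let every type have the same valuation over $dn$ items each worth $1/d$; then $\MMS_i=1$, $v_i(M)=n$, and no item is high-valued, so $C=\emptyset$ and \cref{alg:knownD} sends the whole instance to \cref{alg:lowC} with $n'=n$. For the distribution I would use one heavy type with $p_1=1-\tfrac1n$ and $k-1$ light types with $p_2=\cdots=p_k=\tfrac1{n(k-1)}$, picking $k:=\lceil 4n^{2\epsilon}\rceil+2$; then for each light type $np_i+n^\epsilon\sqrt{np_i}=\tfrac1{k-1}+\tfrac{n^\epsilon}{\sqrt{k-1}}\le\tfrac14+\tfrac12<1$, hence $M_i=0$. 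Moreover $p_k=\Theta(n^{-1-2\epsilon})$ while $n^\epsilon k/\sqrt n=\Theta(n^{3\epsilon-1/2})$, so $p_k/(n^\epsilon k/\sqrt n)=\Theta(n^{-1/2-5\epsilon})\to0$, confirming the hypothesis $p_k=o(n^\epsilon k/\sqrt n)$.

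I would then trace \cref{alg:lowC} on this instance: since every $T_i=\emptyset$, lines~\ref{turnstart}--\ref{turnend} reserve no singleton and drop each light type from $\mathtt{unsaturated}$ (as $|G_i|=0=M_i$), and the bag-filling loop afterwards serves only the heavy type, so $G_i=\emptyset$ for all $i\ge 2$ when preprocessing ends. In the online phase, any agent whose revealed type lies in $\{2,\dots,k\}$ is assigned a bag from an empty $G_i$, i.e.\ value $0<\alpha\cdot\MMS_i$. Since $\Pr[\text{all }n\text{ arrivals are type }1]=(1-\tfrac1n)^n\le e^{-1}$, the algorithm fails with probability at least $1-e^{-1}>0$, nowhere near the $1-o(e^{-n^{1.5\epsilon}})$ guarantee of \cref{thm:knownD}. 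The one routine check, that the heavy type itself does not run out, holds because bag-filling can build at least $n/(2\alpha)>M_1$ bags of value $\ge\alpha$ from the $dn$ items of value $1/d$ — but this is irrelevant to the failure, which concerns only types $\ge 2$.

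The main obstacle here is not any estimate but getting the quantifiers simultaneously right: the instance must keep $M_i=0$ for the light types, respect $p_1\ge p_2\ge\cdots\ge p_k$, force the $C=\emptyset$ branch so that \cref{alg:lowC} is actually invoked, and still satisfy $p_k=o(n^\epsilon k/\sqrt n)$. These constraints turn out to be comfortably compatible, and if one wants the failure probability to tend to $1$ rather than to $1-e^{-1}$, it suffices to allow $k>n$ and set $p_1=\tfrac1n$, so that $\Pr[\text{fail}]\ge 1-(1/n)^n\to1$.
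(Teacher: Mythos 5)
Your proof is correct and does establish the claim as literally stated, but it takes a genuinely different route from the paper. The paper builds a binary-valuation instance with a large universally-high-valued set $C$, a carefully engineered overlap pattern among the $T_i$'s in the reduced instance, and then argues via a counting constraint on the singleton-reservation phase (lines~\ref{turnstart}--\ref{turnend} of \cref{alg:lowC}) that the last type in the ordering cannot saturate unless $p_k = \Omega\bigl(n^\epsilon k/\sqrt{n}\bigr)$; this places the failure right at the claimed threshold. You instead take $k=\Theta(n^{2\epsilon})$, a single heavy type, $k-1$ light types with $p_i=\Theta(1/(nk))$, and identical flat valuations (no high-valued items, so $C=\emptyset$ and the \cref{alg:lowC} branch fires), so that for every light type $M_i=\lfloor np_i+n^\epsilon\sqrt{np_i}\rfloor = 0$; these types are then marked saturated immediately with $G_i=\emptyset$, and any light-type arrival receives nothing. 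Your side conditions ($p_1\ge\cdots\ge p_k$, the probabilities summing to one, $p_k=o(n^\epsilon k/\sqrt{n})$, failure probability at least $1-e^{-1}$) all check out. The trade-off: your argument is shorter and self-contained, but it operates far below the boundary ($p_k=\Theta(n^{-1-2\epsilon})$ versus the threshold $\Theta(n^{3\epsilon-1/2})$) and hinges on the floor in the definition of $M_i$ collapsing to zero, a rounding artifact that a one-line patch such as $M_i:=\max\{1,\lfloor\cdot\rfloor\}$ would eliminate. The paper's construction is the one that actually supports the stated purpose of the subsection — showing that the hypothesis $p_k=\omega(n^\epsilon k/\sqrt{n})$ in \cref{thm:knownD} is close to sharp — because it exposes a resource-accounting obstruction in the reservation/bag-filling logic rather than an edge case of integer truncation. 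If you want your counterexample to carry the same message, push $p_k$ up to $\Theta(n^\epsilon k/\sqrt{n})$ and show the failure persists there.
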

\begin{proof} 
    We demonstrate via an example that, to achieve an \(\alpha\)-MMS guarantee for any \(\alpha > 0\), the minimum type probability must satisfy $
    p_k = \Omega\Bigl(\frac{n^\epsilon k}{\sqrt{n}}\Bigr)$.
    
    Consider an instance with \(k\) types where each type values exactly \(n\) items at 1. Assume that for \(i\in [k-2]\) we have \(p_i = \frac{1}{k-1}\), \(p_{k-1} = \frac{1}{k-1} - p_k\). Suppose there are \(m = n + \frac{n}{k}\) items available, with only the first \(n - \frac{n}{k}\) items valued at 1 by all types; thus, \(|C| = n\Bigl(1 - \frac{1}{k}\Bigr)\). \Cref{alg:knownD} assigns all items in $C$ to the first $|C|$ agents, then, \cref{alg:lowC} is invoked on the reduced instance \(\I' = (N', M', \{v_i\}_{i=1}^k)\) with \(n' = n - |C| = \frac{n}{k}\) and \(M' = M \setminus C\), so that \(M'\) contains \(2n'\) items (assume \(M' = \{1,2,\dots,2n'\}\)). 

In the reduced instance, each type \(i\in [k-1]\) values the first \(n'-1\) items and the \((n'-1+i)\)th item at 1, while type \(k\) values the last \(n'\) items at 1. Consequently, for any \(i\in [k-1]\) we have 
\[
|T_k\setminus T_i| = n'-1 \ge \frac{n'p_k}{2},
\]
and for any \(j\in [k-1]\), 
\[
|T_i\setminus T_j| = 1 < \frac{n'p_i}{2}.
\]
Thus, type \(k\) is selected as the first type in the ordering. Each of the first \(k-1\) types values exactly one item from \([n',2n']\) at 1; if such an item is not picked by type \(k\), it can be selected by the corresponding type, but this single item does not affect the asymptotic bound.

To ensure that the first \(n'-1\) items are divided among the first \(k-1\) types so that each type \(i\) receives 
\[
n'p_i + \Theta\Bigl(n'^\epsilon\sqrt{\mu_i}\Bigr)
\]
bundles, we require
\[
\sum_{i\in [k-1]} \Bigl( n'p_i + \Theta\Bigl(n'^\epsilon\sqrt{\mu_i}\Bigr) \Bigr) \le n'-1.
\]
This implies 
\[
\Theta\Bigl(k\cdot n'^\epsilon\sqrt{\frac{n'}{k}}\Bigr) \le n'p_k,
\]
or equivalently,
\[
p_k = \Omega\Bigl(\frac{n'^\epsilon \sqrt{k}}{\sqrt{n'}}\Bigr).
\]
Since \(n' = \frac{n}{k}\), this further implies
\[
p_k = \Omega\Bigl(\frac{n'^\epsilon k}{\sqrt{n}}\Bigr).
\]
For constant \(k\) (with \(n' = \Theta(n)\)), the general requirement becomes
\[
p_k = \Omega\Bigl(\frac{n^\epsilon k}{\sqrt{n}}\Bigr).
\]

\end{proof}

\section{Stochastic Arrival of Agents with Unknown Distribution}\label{sec:unknowndist}
In this section, we study the \textsc{OnlineKTypeFD} problem under the stochastic arrival model, where agents arrive according to an \emph{unknown} probability distribution $D$. 
We present an algorithm that achieves an \(\alpha\)-MMS competitive guarantee under mild assumptions, with \(\alpha\) approaching \(\frac{1}{2}\) as \(n\) grows large.

The rest of this section is organized as follows. In \Cref{subsec:unknownD} we present the main algorithm (\Cref{alg:unknownD}) that handles all instances. The rationale behind the choices of the parameters \(\epsilon\) and \(\epsilon'\) used in \cref{alg:unknownD}, and the auxiliary lemmas used for the analysis of this algorithm are provided in \cref{app:notations} and \cref{app:auxiliary} respectively. Based on the number of universally high-valued items, \cref{alg:unknownD} either invokes \cref{alg:knownD}, or invokes Algorithms \ref{alg:adversarial-alg} and \ref{alg:lowC}. The analysis of \cref{alg:unknownD} when each of these algorithms are invoked is provided in Sections~\ref{app:invoke5}, \ref{app:invoke2}, and \ref{app:invokeend} respectively.

\subsection{An (Almost) $\frac{1}{2}$-MMS-competitive Algorithm}\label{subsec:unknownD}
\Cref{alg:unknownD} is our main algorithm for this model. Given the input parameter \(\alpha\), \cref{alg:unknownD} first identifies the set \(C\) of universally high-valued items. This algorithm operates in two cases depending on the number of universally high-valued items. In the first case, when the number of such items \(C\) is at least \(\lceil n^{\epsilon'} \rceil\), the algorithm allocates one item from \(C\) to each of the first \(\lceil n^{\epsilon'} \rceil\) agents, thereby ensuring that these agents receive a bundle they value at least $\alpha$ while simultaneously gathering sufficient data to estimate the underlying type distribution via \cref{alg:learndist}. Using the estimated distribution \(\hat{D}\), the problem is then reduced to a stochastic arrival model with known distribution, and the remaining items are allocated by \cref{alg:knownD}. In the second case, when \(|C| < \lceil n^{\epsilon'} \rceil\), all items in \(C\) are allocated to the first \(|C|\) agents, and the remaining items are partitioned into two sets, \( B_1 \) and \( B_2 \), where \( B_1 \) is designated for the next \( \lceil n^{\epsilon'} \rceil \) agents, and \( B_2 \) for the rest of the agents. Each item is independently assigned to \( B_1 \) with probability \( p \), and to \( B_2 \) otherwise. The probability \(p\) is chosen sufficiently high to ensure that, regardless of the underlying distribution, we can assign each of the next \(\lceil n^{\epsilon'} \rceil\) agents a unique bundle—formed from items in \(B_1\)—that is valued at more than \(\alpha\) by that agent. We then apply the algorithm for adversarial arrival (\cref{alg:adversarial-alg}) to allocate items in \( B_1 \) to the next \( \lceil n^{\epsilon'} \rceil \) agents, and by observing the types of these agents, we learn an estimate of the distribution by invoking \cref{alg:learndist}. Finally, using the estimated distribution, we invoke \cref{alg:lowC} to allocate items in \( B_2 \) to the remaining agents.

The main result of this section is stated in \cref{thm:unknownD}, whose proof is built on three key lemmas: lemmas \ref{lem:alg5works}, \ref{lem:alg2works}, and \ref{lem:algendworks}. Detailed proofs of these lemmas are provided in Sections~\ref{app:invoke5}, \ref{app:invoke2}, and \ref{app:invokeend}, respectively.

\begin{restatable}{lemma}{lemalgfiveworks}\label{lem:alg5works}
    If \cref{alg:unknownD} invokes \cref{alg:knownD}
    on an instance $\Icalhat$, where $p_k = \omega(\frac{k}{n^{\frac{2}{9}(1-c)}})$, for any small $\eta>0$, there exists an $n(\eta)$, where for all $n>n(\eta)$, \cref{alg:knownD} is \( \frac{1}{2(1+\eta)}-\)MMS competitive with probability at least \( 1 - o(e^{-n^{c/2}}) \).
\end{restatable}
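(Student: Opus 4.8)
The plan is to separate \cref{alg:unknownD}'s behaviour into a short \emph{learning phase} --- the first $\lceil n^{\epsilon'}\rceil$ agents, each of whom receives a singleton from $C$ --- and the \emph{allocation phase}, in which \cref{alg:knownD} is run on the residual instance $\Icalhat$ (with $n' = n - \lceil n^{\epsilon'}\rceil$ agents) using the \emph{learned} distribution $\hat D$, with the entire argument conditioned on one high-probability event, namely that $\hat D$ is a good multiplicative estimate of $D$. First I would dispense with the learning phase: since this branch is entered only when $|C| \ge \lceil n^{\epsilon'}\rceil$, each of the first $\lceil n^{\epsilon'}\rceil$ agents gets a distinct item of $C$, worth at least $\alpha$ to every type, so each such agent is $\alpha$-MMS-satisfied in the normalized instance. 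Moreover at most $\lceil n^{\epsilon'}\rceil = o(n)$ items are removed, so $v_i(M')\ge n-\lceil n^{\epsilon'}\rceil = n'$ for every type, and by repeatedly merging the bundle damaged by a removed item into a neighbouring bundle of an MMS partition of $\I$ (each merge preserves value $\ge 1$ since every item is worth $\le 1$) one gets $\MMS_i^{\,n'}(\Icalhat)\ge \MMS_i^{\,n}(\I)=1$; these two facts are exactly the preconditions needed to push the allocation-phase guarantee back to $\I$.

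Next I would control the estimate. The $\lceil n^{\epsilon'}\rceil$ observed types are i.i.d.\ draws from $D$, independent of the later $n'$ arrivals, so a multiplicative Chernoff bound on the count of type-$i$ agents among them, together with a union bound over the $k$ types, shows that with probability $1-o(e^{-n^{c/2}})$ the estimate satisfies $\hat p_i\in[(1-\delta)p_i,(1+\delta)p_i]$ for all $i\in[k]$, for a suitable $\delta=o(1)$ that is in fact $o(p_k)$. This is precisely where the choice of $\epsilon'$ and the hypothesis $p_k=\omega\!\big(k/n^{\frac{2}{9}(1-c)}\big)$ enter: $\epsilon'$ is taken large enough that $\delta^2 p_k \lceil n^{\epsilon'}\rceil \gg n^{c/2}$, and the sublinear bound on $k$ (which, as in \cref{prop:boundk}, also follows from the lower bound on $p_k$) absorbs the union bound. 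From here on we condition on this good event.

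Then I would reduce to the known-distribution case. On the good event $\hat p_k = \Theta(p_k) = \omega\!\big(k/n^{\frac{2}{9}(1-c)}\big)$, so with the internal parameter $\epsilon$ of \cref{alg:knownD} chosen below $\tfrac12-\tfrac{2}{9}(1-c)$ --- which both forces $\delta=o(p_k)$ and lets the $p_k$-condition carry over to $n'=\Theta(n)$ --- the hypothesis of \cref{thm:knownD} holds for $\Icalhat$ with distribution $\hat D$. The one thing to verify is that running \cref{alg:knownD} with $\hat D$ in place of $D$ while the arrivals are still governed by the true $D$ does not break its analysis, and there are two points of care. (i) The thresholds $\hat M_i=\lfloor n'\hat p_i+n'^{\epsilon}\sqrt{n'\hat p_i}\rfloor$ computed from $\hat D$ still dominate the true counts $X^{n'}_i\sim\mathrm{Bin}(n',p_i)$ with probability $1-O(e^{-n'^{2\epsilon}})$ apiece, because $\delta=O(n^{\epsilon-1/2})$ makes the $\pm\delta n'p_i$ perturbation of the threshold smaller than half the $n'^{\epsilon}\sqrt{n'p_i}$ slack in \cref{prop:boundx_i}. (ii) The saturation argument of \cref{lem:oneunsat} and \cref{lem:allsaturated} runs verbatim with $\hat\mu_i=n'\hat p_i$ replacing $\mu_i$: the additional slippage is $O(\delta n')=o(\eta\, n'p_k)$ since $\delta=o(p_k)$, so it is swallowed by the same $\eta\big[\tfrac{n'p_k}{2}-z\big]$ slack that already absorbs $z$, at the cost of enlarging $n(\eta)$. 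Hence Lemmas~\ref{lemma:firstimp} and \ref{lem:secondimp}, now applied to $\Icalhat$ with $\hat D$, give that every remaining agent receives a bundle worth at least $\alpha=\tfrac1{2(1+\eta)}$ --- which, by the first paragraph, is at least $\alpha\cdot\MMS_i$ in $\I$ --- with probability $1-o(e^{-n'^{1.5\epsilon}})=1-o(e^{-n^{c/2}})$.

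Finally I would take a union bound over the learning-failure event, the $k$ threshold-overflow events, and the internal failure event of \cref{alg:knownD}; each is $o(e^{-n^{c/2}})$ and $k$ is sublinear, so the total failure probability is $o(e^{-n^{c/2}})$, proving the lemma. The main obstacle is items (i)--(ii) above: propagating the estimation error $\delta$ through the fine-grained loss accounting of \cref{alg:lowC}/\cref{alg:knownD} and checking that, under the chosen $\epsilon$ and $\epsilon'$, $\delta$ is simultaneously small enough ($\delta=O(n^{\epsilon-1/2})$ and $\delta=o(p_k)$) to be absorbed by the existing slack and large enough that the learning-phase Chernoff bound still fails only with probability $o(e^{-n^{c/2}})$ --- i.e.\ that this two-sided squeeze on $\delta$ is feasible, which is exactly what the exponent $\tfrac{2}{9}(1-c)$ is engineered to guarantee.
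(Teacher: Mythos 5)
Your proposal captures the correct high-level structure --- dispense with the learning phase, control $\hat D$, and reduce to the known-distribution analysis while propagating the estimation error --- and this matches the paper's strategy. But there are a few points where you diverge from the paper, or where the argument as written is imprecise.

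First, the paper's proof does \emph{not} appeal to Lemmas~\ref{lemma:firstimp} and \ref{lem:secondimp} directly; it re-opens the case split inside \cref{alg:knownD} (invoke \cref{alg:highC} if $|C^0|\ge \hat n(1-\tfrac1k)+\hat n^\epsilon\sqrt{\hat n\hat p_{j^*}}$, else \cref{alg:lowC}), and in each branch re-runs the underlying concentration arguments from scratch with $\hat D$. The reason this cannot be avoided by ``applying Lemmas~\ref{lemma:firstimp}/\ref{lem:secondimp} to $\Icalhat$ with $\hat D$'' is that those lemmas are statements about an algorithm whose arrivals are governed by the same distribution it is fed; here \cref{alg:knownD} is fed $\hat D$ but the arrivals are still i.i.d.\ from $D$. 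Your points (i)--(ii) acknowledge this mismatch and sketch the right fix, but they are really a re-derivation of \cref{lem:boundX_iunknown} and the $\hat D$-version of \cref{lem:allsaturated}, not an invocation of the stated lemmas; saying ``Hence Lemmas~\ref{lemma:firstimp} and \ref{lem:secondimp} apply'' overstates what has been shown. In particular, in the \cref{alg:highC} branch, the condition $|C^0|\ge \hat n(1-\tfrac1k)+\hat n^\epsilon\sqrt{\hat n\hat p_{j^*}}$ is expressed in terms of the estimated most-frequent type $j^*$ (with $\hat p_{j^*}\ge 1/k$), and one must still argue $X^{\hat n}_{j^*}\ge \hat n - |C^0|$ even though $j^*$ need not equal the true argmax; the paper spends the bulk of its Case~1 analysis precisely on this (via \cref{prop:boundx_i}, \cref{lem:gamma1}, \cref{lem:gamma2}). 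Your sketch elides this sub-case entirely.

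Second, your concentration mechanism differs from the paper's. You use a multiplicative Chernoff bound $\hat p_i\in[(1-\delta)p_i,(1+\delta)p_i]$ with $\delta=o(p_k)$ and a single ``good event,'' whereas the paper works with a CLT-style tail bound (\cref{lem:probpredict}) and two separate additive, data-dependent deviations (\cref{lem:gamma1}: $|p_i-\hat p_i|\le \tfrac{\hat n^\epsilon\sqrt{\hat n p_i}}{4\hat n}$; \cref{lem:gamma2}: $|p_i-\hat p_i|\le \Theta(p_k)$), because those are exactly the quantities that slot into the thresholds $M_i$ and into \cref{prop:limitT_i}. Your multiplicative route does appear feasible --- the two-sided squeeze $\tfrac{n^{c/4}}{\sqrt{p_k n^{\epsilon'}}}\ll\delta\ll\min\{p_k,\ n^{\epsilon-1/2}\}$ is indeed nonempty under the hypothesis on $p_k$ with $\epsilon=\tfrac{5+4c}{18}$ and $\epsilon'=\tfrac{2+c}{3}$ --- but it requires the full strength of the parameter choices and should be verified rather than asserted; it is a genuinely different (and arguably cleaner at a high level, but less fine-grained) route than the paper's.

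Third, a small unnecessary detour: your argument that $\MMS^{n'}_{v_i}(\Icalhat)\ge 1$ by merging damaged bundles of an MMS partition of $\I$ is correct but superfluous in the paper's framework. Because all valuations are normalized so that the \emph{original} MMS is $1$, the algorithms are designed to hand each agent a bundle of value $\ge\alpha$ in that same normalization, and the only hypothesis needed about the residual items is $v_i(M')\ge n'$, which you already establish. The paper never reasons about the MMS of a reduced instance.

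So: the skeleton is right, the error-propagation concern is correctly identified, and the feasibility of your $\delta$-squeeze checks out; but the proposal glosses over the $|C^0|$ case split and the $j^*$ subtlety, and somewhat mis-states the role of Lemmas~\ref{lemma:firstimp}/\ref{lem:secondimp}. Filling those in would essentially reproduce the paper's proof, possibly with your multiplicative concentration replacing the paper's additive bounds.
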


\begin{restatable}{lemma}{lemalgtwoworks}\label{lem:alg2works}
    If \cref{alg:unknownD} invokes \cref{alg:adversarial-alg} on an instance $\Icalhat$, where $n>10$, for any $\alpha\leq \frac{1}{2}$, \cref{alg:adversarial-alg} is \( \alpha-\)MMS competitive with probability at least \( 1 - O(e^{-n^c}) \).
\end{restatable}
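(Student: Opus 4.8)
\textbf{Proof proposal for Lemma~\ref{lem:alg2works}.}
The plan is to view the situation when \cref{alg:unknownD} invokes \cref{alg:adversarial-alg} and to argue two things: (i) the basket $B_1$ on which \cref{alg:adversarial-alg} is run retains, with the stated probability, a large value for \emph{every} type; and (ii) conditioned on this event, running the adversarial algorithm on $B_1$ with $n' = \lceil n^{\epsilon'}\rceil$ ``virtual'' agents yields a bundle worth at least $\alpha$ times the \emph{original}-instance MMS value for each of the first $n'$ agents.

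First I would recall the setup: in this branch $|C| < \lceil n^{\epsilon'}\rceil$, the $|C|$ universally high-valued items are handed out to the first $|C|$ arrivals (each trivially gets value $\ge \alpha$), and the remaining items $M\setminus C$ are each independently thrown into $B_1$ with probability $p$ (and into $B_2$ otherwise). The key step is a concentration argument on $v_i(B_1)$ for a fixed type $i$. Since each item has $v_i$-value at most $1$, the quantity $v_i(B_1) = \sum_{g\in M\setminus C} v_i(g)\,\mathbf{1}[g\in B_1]$ is a sum of independent $[0,1]$-bounded random variables (equivalently, a linear, hence submodular, function with marginals in $[-1,1]$), with mean $p\cdot v_i(M\setminus C) \ge p(n-|C|) \ge p\,n(1-o(1))$. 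Applying \Cref{cor:nonmonotone-tails} (or a standard Chernoff bound) gives $\Pr[v_i(B_1) \le (1-\delta)\,p\,v_i(M\setminus C)] \le e^{-\delta^2 p\,v_i(M\setminus C)/4}$. With $p$ chosen (as described in the algorithm) so that $p\,n = \Theta(n' k) = \Theta(n^{\epsilon'}k)$, the exponent is $\Theta(n^{\epsilon'}k) = \Omega(n^{\epsilon'})$; taking a union bound over the $k$ types and recalling $k = o(n)$ keeps the failure probability at $O(k\,e^{-\Theta(n^{\epsilon'})}) = O(e^{-n^{c}})$ once the parameters $\epsilon'$ and $c$ are matched up as fixed in \cref{app:notations}. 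So, with probability $1 - O(e^{-n^c})$, simultaneously for all $i\in[k]$ we have $v_i(B_1) \ge \Theta(n' k)$, in fact $v_i(B_1)\ge 2 n' k$ for the chosen constants.

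Next, conditioned on this good event, I would run \cref{alg:adversarial-alg} on the instance $(\,[n'],\,B_1,\,\{v_i\}_{i\in[k]}\,)$. By \Cref{thm:adv}, each of these $n'$ agents receives a bundle worth at least $\tfrac1k$ of her MMS value \emph{in the restricted instance on $B_1$ with $n'$ agents}. It remains to translate this into a guarantee against the \emph{original} MMS value $\MMS_i(\I)$. Here I would use that $v_i(B_1) \ge 2n'k \ge 2k\cdot\MMS_i(\I)$ (since in a normalized instance $v_i(M)=n$ and $\MMS_i=1$, so $2n'k \ge 2k$; more carefully, partitioning $B_1$ into $n'$ bundles, the maximin value is at least $v_i(B_1)/n' - (\text{max item value}) \ge 2k - 1 \ge k$ when one accounts for the single-item overflow, using $v_i(g)\le 1$). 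Hence $\MMS^{n'}_{v_i}(B_1\text{-instance}) \ge k$, and therefore the $\tfrac1k$-MMS guarantee on the restricted instance yields each of the first $n'$ agents a bundle of value at least $\tfrac1k \cdot k = 1 = \MMS_i(\I) \ge \alpha\,\MMS_i(\I)$ for any $\alpha \le 1$ (in particular $\alpha\le\tfrac12$). Combining: the first $|C|$ agents get value $\ge\alpha$ from singletons in $C$, the next $n'$ agents get value $\ge \alpha\,\MMS_i$ from \cref{alg:adversarial-alg} on $B_1$, all on the good concentration event, which fails with probability only $O(e^{-n^c})$; this is exactly the claim.

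\textbf{Main obstacle.} The routine part is the Chernoff/Lipschitz concentration; the delicate part is pinning down the constant in the choice of $p$ (and the interplay of $\epsilon'$, $c$, and the $k = o(n^{\cdot})$ bound inherited from \Cref{prop:boundk}-type constraints) so that (a) $p\,v_i(B_1)$ is large enough that the per-type failure probability, after a union bound over $k$ types, is still $O(e^{-n^c})$, and simultaneously (b) $v_i(B_1)$ is large enough — namely $\Omega(n' k)$ — that the $\tfrac1k$-MMS guarantee of the adversarial algorithm on the $n'$-agent restricted instance upgrades to a genuine $\alpha$-MMS (indeed $1$-MMS, with room to spare) guarantee against the original instance. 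Getting both to hold requires $p = \Theta(n'k/n)$ with the right constant, and checking that this $p \le 1$ (which is where the sublinear bound on $k$ and $n' = \lceil n^{\epsilon'}\rceil$ with small enough $\epsilon'$ enter), together with the ``$n>10$'' slack to absorb small additive losses from single-item overflow in the MMS-on-$B_1$ estimate.
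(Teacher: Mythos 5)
Your proposal is correct and follows essentially the same route as the paper: lower-bound $v_i(B_1)$ simultaneously for all $i$ via a Chernoff/Lipschitz concentration bound (the paper packages this as Lemma~\ref{lem:B1}), deduce that the restricted MMS value $\MMS_{v_i}(\Icalhat)$ is roughly $k$ via a bag-filling count, and then invoke the $\tfrac1k$-MMS guarantee of Theorem~\ref{thm:adv} to upgrade to an $\alpha$-MMS guarantee against the original normalized instance. The only cosmetic divergence is that you implicitly take $\delta$ constant (absorbing the $(1-\delta)$ factor) and estimate the restricted MMS as $v_i(B_1)/n' - 1$, whereas the paper keeps $\delta = n^{-(\epsilon'-c)/2}$ (which it must, to make the companion $B_2$ bound in Lemma~\ref{lem:B1} simultaneously work), shows $\MMS_{v_i}(\Icalhat) \ge k(1-\delta)$ via $\lfloor v_i(B_1)/(k(1-\delta)+1)\rfloor \ge \hat n$, and then concludes each agent receives value $\ge 1-\delta \ge \tfrac12 \ge \alpha$; both variants close the argument.
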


\begin{restatable}{lemma}{lemalgendworks}\label{lem:algendworks}
    If \cref{alg:unknownD} invokes \cref{alg:lowC} on an instance $\I'$, where $p_k = \omega(\frac{k}{n^{\frac{2}{9}(1-c)}})$, for any small $\eta>0$, there exists an $n(\eta)$, where for all $n>n(\eta)$, \cref{alg:lowC} is \( \frac{1}{2(1+\eta)}-\)MMS competitive with probability at least \( 1 - o(e^{-n^{c/2}}) \).
\end{restatable}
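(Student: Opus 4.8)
The plan is to show that, after conditioning on a small number of high-probability events, the instance $\I' = (N', B_2, \{v_i\}_{i\in[k]})$ on which \cref{alg:unknownD} invokes \cref{alg:lowC} (together with the learned distribution $\hat D$) looks exactly like a known-distribution instance of the type handled in \cref{sec:knownD}, so that the argument behind \cref{lem:secondimp} (i.e.\ \cref{prop:limitT_i}, \cref{lem:oneunsat}, \cref{lem:allsaturated}) carries over up to $o(1)$-scale perturbations. Recall the relevant branch: when $|C| < \lceil n^{\epsilon'}\rceil$, the first $|C|$ agents get singletons from $C$, every remaining item is placed into $B_1$ independently with probability $p = \Theta(k\lceil n^{\epsilon'}\rceil/n)$ and into $B_2$ otherwise, the next $\lceil n^{\epsilon'}\rceil$ agents are served from $B_1$ by \cref{alg:adversarial-alg} while \cref{alg:learndist} records their types to form $\hat D=(\hat p_1,\dots,\hat p_k)$, and then \cref{alg:lowC} is run on $\I'$ with $\hat D$, where $|N'| = n - |C| - \lceil n^{\epsilon'}\rceil = (1-o(1))n$ and (since $B_2 \subseteq M\setminus C$) $\I'$ still has no universally high-valued item. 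The proof has four ingredients: (i) bounding the estimation error $|\hat p_i - p_i|$; (ii) lower bounding $v_i(B_2)$; (iii) re-running the saturation analysis with (i)--(ii) as perturbations; and (iv) bounding the number of agents of each type in the allocation stage. Each failure event is inverse-exponentially small, and a union bound over the $O(k)=o(n^{1/4})$ events keeps the overall failure probability $o(e^{-n^{c/2}})$.

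For (i), each of the $\lceil n^{\epsilon'}\rceil$ learning agents draws its type independently from $D$ and $\hat p_i$ is the empirical frequency, so a multiplicative Chernoff bound gives $\Pr\bigl[|\hat p_i - p_i| > \delta p_i\bigr] \le 2e^{-\Omega(\delta^2 p_i n^{\epsilon'})}$; since the hypothesis $p_k = \omega\bigl(k/n^{\frac{2}{9}(1-c)}\bigr)$ together with the choice of $\epsilon'$ in \cref{app:notations} makes $p_k n^{\epsilon'}$ grow faster than $n^{c}$, we can take $\delta = \delta_n \to 0$ with per-type failure probability $o(e^{-n^{c/2}})$ (this also rules out the degenerate event $\hat p_k = 0$). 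Hence w.h.p.\ $\hat p_i = (1\pm o(1))p_i$ for all $i$; in particular $\hat p_k = \omega\bigl(k/n^{\frac{2}{9}(1-c)}\bigr)$, which, since $|N'| = (1-o(1))n$, is exactly the lower bound on the smallest type probability that the \cref{alg:lowC} analysis requires when run on $\I'$. For (ii), view $v_i(B_2) = \sum_{g \in M\setminus C} v_i(g)\,\mathbf{1}[g\in B_2]$ as an additive --- hence submodular --- function of the independent indicators with marginals in $[0,1]\subseteq[-1,1]$, and apply \cref{cor:nonmonotone-tails} with deviation of order $n^{(c-1)/2}$, then union bound over types: w.p.\ $\ge 1-o(e^{-n^{c/2}})$,
\[
v_i(B_2) \ \ge\ (1-p)\,v_i(M\setminus C)\bigl(1 - O(n^{(c-1)/2})\bigr)\ \ge\ |N'| - \Delta \qquad\text{for all } i,
\]
where, using $v_i(M\setminus C) \ge n - |C|$ and $pn = \Theta(k\lceil n^{\epsilon'}\rceil)$, one gets $\Delta = O\bigl(k\lceil n^{\epsilon'}\rceil + n^{(c+1)/2}\bigr)$, which equals $O(k n^{\epsilon'})$ for $c < 0.1$.

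For (iii)--(iv), note that \cref{prop:limitT_i} and \cref{lem:oneunsat} were deliberately stated for a general item set $M'$ with value $v_i(M')$, so they apply verbatim to $\I'$ with $M' = B_2$ and distribution $\hat D$ (all quantities $\mu_j, z$ becoming $\hat\mu_j, \hat z$, which by (i) differ from the true-distribution ones only by lower-order factors): they give $|T_i| \le |N'|(1-\hat p_k/2) + \hat z$ and $|G_i| \ge (|N'| - |T_i|)\bigl(\tfrac{1}{2\alpha}-1\bigr) + \tfrac{v_i(B_2)-|N'|}{2\alpha} + |N'|\hat p_i - |N'|^\epsilon\!\sum_{j\ne i}\sqrt{\hat\mu_j}$ for any type $i$ unsaturated after the singleton phase. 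Substituting $v_i(B_2)\ge |N'|-\Delta$ and $\tfrac{1}{2\alpha}-1 = \eta$ (as $\alpha = \tfrac1{2(1+\eta)}$), and then repeating the chain of inequalities from the proof of \cref{lem:allsaturated}, yields
\[
|G_i| \ \ge\ \eta\Bigl(\tfrac{|N'|\hat p_k}{2} - \hat z\Bigr) - \tfrac{\Delta}{2\alpha} + |N'|\hat p_i - |N'|^\epsilon\!\sum_{j\ne i}\sqrt{\hat\mu_j},
\]
and the crux is that the positive slack $\tfrac{\eta}{2}|N'|\hat p_k = \Theta(\eta n p_k)$ dominates $\hat z + \tfrac{\Delta}{2\alpha}$ for all large $n$ --- this is precisely where $p_k = \omega\bigl(k/n^{\frac{2}{9}(1-c)}\bigr)$ is used, the exponent being forced by balancing the learning-stage length $n^{\epsilon'}$, the $B_2$-deficit $\Delta$, and the concentration scales against this slack. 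Therefore there is $n(\eta)$ such that for $n > n(\eta)$ we have $|G_i| \ge M_i$ for every $i$, i.e.\ every type saturates during preprocessing. Finally, \cref{prop:boundx_i} and a union bound show that the number of type-$i$ agents arriving in the allocation stage is at most $M_i = |G_i|$ for all $i$ w.p.\ $\ge 1-o(e^{-n^{c/2}})$, so each such agent is assigned a pre-reserved bundle, which she values at least $\alpha = \tfrac1{2(1+\eta)}$; since $\MMS_i(\I)=1$ this is an $\alpha$-MMS bundle. A union bound over the $O(k)$ failure events in (i)--(iv) completes the proof.

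The main obstacle is the accounting in step (iii): verifying that the single positive slack term $\Theta(\eta n p_k)$ simultaneously absorbs the standard bag-filling overhead $\hat z$, the value bled into the learning basket $B_1$ (of total size $\Theta(k n^{\epsilon'})$ per type), and the error of using $\hat D$ in place of $D$ --- all of this uniformly over the $k$ types and with inverse-exponential failure probability. This is exactly why the unknown-distribution setting needs a strictly stronger lower bound on $p_k$ than \cref{thm:knownD}, and why the resulting probability bound is $1 - o(e^{-n^{c/2}})$ rather than the $1 - o(e^{-n^{1.5\epsilon}})$ of \cref{lem:secondimp}.
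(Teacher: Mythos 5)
Your proposal is correct and follows essentially the same approach as the paper: the same decomposition into learning-error control, a Chernoff/submodular concentration bound on $v_i(B_2)$ via \cref{cor:nonmonotone-tails}, re-running the saturation chain of \cref{prop:limitT_i}, \cref{lem:oneunsat} and \cref{lem:allsaturated} on $\I'$ with $\hat D$, and a final Chernoff count of type-$i$ arrivals. The only cosmetic differences are that you reach the additive-scale estimation error via a multiplicative Chernoff bound with a carefully chosen $\delta_n\to 0$ (rather than the paper's CLT-style \cref{lem:probpredict}/\cref{lem:gamma1}/\cref{lem:gamma2}), and that you fold the paper's two error terms ($\delta n'$ and $k\lceil n^{\epsilon'}\rceil$, handled in \cref{prop:knz}) into a single $\Delta$; both reparameterizations check out against the exponent constraints of \cref{cl:alleqs}.
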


Assuming the validity of the three aforementioned lemmas, we now proceed to prove our main theorem.

\thmunknownD*
\begin{proof}
Agents receiving singleton bundles with universally high-valued items meet their MMS requirement. Moreover, lemmas~\ref{lem:alg5works}, \ref{lem:alg2works}, and \ref{lem:algendworks} collectively guarantee that all remaining agents also obtain bundles meeting their MMS guarantee.
\end{proof}

\begin{algorithm}[h!]
\caption{Learn Distribution}\label{alg:learndist}
    \SetKwInOut{Input}{Input}\SetKwInOut{Output}{Output}
  \Input{Types of last $\lceil n^{\epsilon'} \rceil$ agents arrived}
  \Output{Estimation of distribution $\hat{D}$.}
  \BlankLine
    For each type $j\in[k]$, let $X_j$ be the number of type $j$ agents among the last $\lceil n^{\epsilon'} \rceil$ agents.\\
    Let $\hat{D}$ be the estimated distribution where the arrival probability of type $j$ is estimated as $\hat{p_j} = \frac{X_j}{\lceil n^{\epsilon'} \rceil}.$\\
    Return $\hat{D}$
\end{algorithm}

\begin{algorithm}[h!]
\caption{Stochastic Arrivals of Agents with Unknown Distribution}\label{alg:unknownD}
    \SetKwInOut{Input}{Input}\SetKwInOut{Output}{Output}
  \Input{A Stochastic Instance $\I = (N,M, \{v_i\}^k_{i=1})$, $\alpha = \frac{1}{2(1+\eta)}$ for a small $\eta$, a small constant $0<c<0.1$.}
  \Output{ $\alpha-$MMS.}
  \BlankLine
Let $n = |N|$, $\epsilon = \frac{5+4c}{18}$, $\epsilon' = \frac{2+c}{3}$. \label{algend:define}\\
Let $C = \{g\in M \mid \forall i\in [k] :v_i(g)\geq \alpha \}$, $M' = M\setminus C$.\label{algend:preproc:C}\\
\eIf{$|C|\geq \lceil n^{\epsilon'} \rceil$\label{algend:firststart}}{
Give one item from $C$ to each of the first $\lceil n^{\epsilon'} \rceil$ agents, and let the remained items of $C$ be $C^0$. \\
Let $\hat{D}$ be the estimated distribution obtained from \cref{alg:learndist}.\\
Let $\Icalhat = ([\lceil n^{\epsilon'} \rceil+1, n], M'\cup C^0,\{v_j\}^k_{j=1})$.\label{algend:Icalhat}\\
Run \cref{alg:knownD} on $(\Icalhat,\hat{D},\alpha,\epsilon)$. \label{algend:firstend}
}{\label{algend:secondstart}
Let $C^0 = C$.\\
Give one item from $C$ to each of the first $|C|$ agents. \\
Let $B_1 = \emptyset$ and $B_2 = \emptyset$, $p = \frac{2k\lceil n^{\epsilon'} \rceil}{n-|C^0|}$. \label{alg6:firstCend}\\
Add each item of $M'$ to $B_1$ with probability $p$, otherwise add it to $B_2$. \label{alg6:constructB_1}\\
Run \cref{alg:adversarial-alg} on $\Icalhat = ([|C^0|+1, |C^0| + \lceil n^{\epsilon'} \rceil], B_1, \{v_j\}^k_{j=1}).$\\
Let $\hat{D}$ be the estimated distribution obtained from \cref{alg:learndist}.\\
Run \cref{alg:lowC} on $\left(\Ical' = ([|C^0|+\lceil n^{\epsilon'} \rceil + 1,n], B_2, \{v_j\}^k_{j=1}),\hat{D}, \epsilon, \alpha\right)$. \label{algend:secondend}
}
\end{algorithm}

\subsection{Notations and Parameter Selection}\label{app:notations}
We define the parameter \begin{equation}\label{dfn:delta}
    \delta = \frac{1}{n^{\frac{\epsilon' - c}{2}}}.
\end{equation}
which will be used in \cref{lem:B1} to provide a lower bound for the value of the bags \(B_1\) and \(B_2\) constructed by \cref{alg:unknownD}. Later, we will show that with high probability, for every \(i \in [k]\) and every \(B_j \in \{B_1, B_2\}\), $v_i(B_j) \geq (1-\delta)\mathbb{E}(v_i(B_j))$. \\
 
Note that the model in this section cannot exceed the bound obtained in the previous section. By the tightness of the analysis in Theorem~\ref{thm:knownD}, the best achievable bound is \( p_k = \omega\left(\frac{n^{\epsilon} k}{\sqrt{n}}\right) \), which implies  \( k = o(n^{\frac{1}{4} - \frac{\epsilon}{2}}) \) by \cref{prop:boundk}. To make these bounds as general as possible, we aim to minimize \(\epsilon\). 

Parameter \( c \) can be chosen as an arbitrarily small positive constant, so let us consider any \( 0 < c < 0.1 \). Consequently, the inequalities in \cref{cl:alleqs}—which are essential for our subsequent analysis—impose a lower bound of \(\frac{5+4c}{18}\) on \(\epsilon\). Hence, to minimize $\epsilon$, we select parameter $$\epsilon = \frac{5+4c}{18}.$$ 
This contrasts with the previous section, where \(\epsilon\) could be arbitrarily small.

To use the framework established in the earlier section, we need to ensure $\epsilon<\frac{1}{2}$, which is satisfied for any $c<0.1$. Consequently, we obtain the bounds:
\begin{equation}\label{boundk}
    k = o\left(n^{\frac{1 - c}{9}}\right), \quad p_k = \omega\left(\frac{k}{n^{\frac{2}{9}(1 - c)}}\right).
\end{equation}

\begin{claim}\label{cl:alleqs} The smallest $\epsilon$ that satisfies the following set of inequalities is $\epsilon = \frac{5+4c}{18}$ when $\epsilon' = \frac{2+c}{3}$. 
     \begin{enumerate}
        \item  $\epsilon' \leq 1.5\epsilon + \frac{1}{4}$.
        \item $1 \leq \frac{\epsilon'}{2} - \frac{c}{2} + 1.5\epsilon + \frac{1}{4}$.
     \end{enumerate}
\end{claim}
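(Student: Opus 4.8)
The plan is to treat the two inequalities as linear constraints on the pair $(\epsilon,\epsilon')$ and to eliminate $\epsilon'$, thereby reducing the problem to a single lower bound on $\epsilon$. Concretely, I would first rewrite inequality~(2): the condition $1 \le \frac{\epsilon'}{2} - \frac{c}{2} + 1.5\epsilon + \frac14$ is equivalent (moving the $\epsilon'$ term to one side and everything else to the other, then multiplying by the positive constant $2$) to the lower bound $\epsilon' \ge \frac32 + c - 3\epsilon$. Inequality~(1) is already an upper bound, $\epsilon' \le \frac14 + 1.5\epsilon$. Hence, for a fixed $\epsilon$, the set of admissible values of $\epsilon'$ is exactly the interval $\bigl[\,\frac32 + c - 3\epsilon,\ \frac14 + 1.5\epsilon\,\bigr]$, and this interval is nonempty if and only if $\frac32 + c - 3\epsilon \le \frac14 + 1.5\epsilon$.

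Next I would solve this last inequality for $\epsilon$: collecting the $\epsilon$ terms gives $\frac32 + c - \frac14 \le 4.5\epsilon$, i.e.\ $\frac54 + c \le \frac92\epsilon$, i.e.\ $\epsilon \ge \frac{2}{9}\bigl(\frac54 + c\bigr) = \frac{5 + 4c}{18}$. This establishes that $\epsilon = \frac{5+4c}{18}$ is precisely the smallest value of $\epsilon$ for which some $\epsilon'$ satisfying both inequalities exists. Finally, I would substitute this extremal $\epsilon$ back into the two endpoints of the interval: at $\epsilon = \frac{5+4c}{18}$ the interval degenerates to a single point, and a direct computation gives $\frac14 + 1.5\cdot\frac{5+4c}{18} = \frac14 + \frac{5+4c}{12} = \frac{8+4c}{12} = \frac{2+c}{3}$, and likewise $\frac32 + c - 3\cdot\frac{5+4c}{18} = \frac{9+6c-(5+4c)}{6} = \frac{4+2c}{6} = \frac{2+c}{3}$; the two endpoints coincide, so $\epsilon' = \frac{2+c}{3}$ is the unique admissible choice at the optimum, as claimed.

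I do not anticipate any real obstacle: this is a two-variable linear feasibility computation. The only points requiring care are bookkeeping ones — preserving the direction of each inequality when clearing the factor of $2$ and when rearranging terms, and verifying explicitly that at $\epsilon = \frac{5+4c}{18}$ the feasibility interval is a single point (not empty), which is what forces the stated value of $\epsilon'$ rather than merely permitting it.
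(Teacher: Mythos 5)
Your proof is correct, and the arithmetic checks out at every step. Note that the paper actually states Claim~\ref{cl:alleqs} without an explicit proof, so there is nothing to compare against; your argument fills that gap. The approach — rewriting inequality~(2) as a lower bound $\epsilon' \ge \frac32 + c - 3\epsilon$, pairing it with the upper bound from inequality~(1), and characterizing the smallest $\epsilon$ as the value at which the feasibility interval for $\epsilon'$ collapses to the single point $\frac{2+c}{3}$ — is exactly the natural linear-programming reading, and it also matches the surrounding prose, where the authors minimize $\epsilon$ first and then observe that $\epsilon'$ is forced. One minor remark: the claim's phrasing (``the smallest $\epsilon$ \dots when $\epsilon' = \frac{2+c}{3}$'') admits an alternative reading in which $\epsilon'$ is fixed in advance, but you can easily check that under that reading both inequalities reduce to the same bound $\epsilon \ge \frac{5+4c}{18}$, so your conclusion is unaffected either way.
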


Considering our chosen \(\epsilon\), \cref{cl:alleqs} implies that we must select the parameter 
\[\epsilon' = \frac{2+c}{3}.\] 
We will later show that observing only \(\lceil n^{\epsilon'} \rceil\) agents (and their types) is sufficient to learn the distribution $D$.

\subsection{Technical Lemmas}\label{app:auxiliary}
In this section, we will provide some important propositions and lemmas that will be used for proving the main lemmas of this section. 

\begin{proposition}\label{lem:probpredict}
    For a type $i\in[k]$, and any positive $\gamma$, $P(|p_i - \hat{p_i}| > \gamma) < 2e^{-\gamma n^{\frac{\epsilon'}{2}}} $ .
\end{proposition}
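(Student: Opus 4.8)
The plan is to observe that $\hat p_i$ is simply the empirical frequency of type $i$ over $\lceil n^{\epsilon'}\rceil$ independent samples from $D$. Concretely, in \cref{alg:learndist} the quantity $X_i$ counts the type-$i$ agents among the $\lceil n^{\epsilon'}\rceil$ agents of the learning stage; in the stochastic model these agents' types are i.i.d.\ draws from $D$, so $X_i \sim \mathrm{Bin}\bigl(\lceil n^{\epsilon'}\rceil,\, p_i\bigr)$ and $\hat p_i = X_i/\lceil n^{\epsilon'}\rceil$ is an average of $\lceil n^{\epsilon'}\rceil$ independent Bernoulli$(p_i)$ variables with $\mathbb E[\hat p_i] = p_i$. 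Applying Hoeffding's inequality to these $[0,1]$-bounded summands gives
\[
P\bigl(|\hat p_i - p_i| > \gamma\bigr) \;\le\; 2\exp\!\bigl(-2\gamma^2 \lceil n^{\epsilon'}\rceil\bigr) \;\le\; 2\exp\!\bigl(-2\gamma^2 n^{\epsilon'}\bigr),
\]
where the last inequality uses $\lceil n^{\epsilon'}\rceil \ge n^{\epsilon'} = \bigl(n^{\epsilon'/2}\bigr)^2$.

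Next I would convert this into the claimed bound $2e^{-\gamma n^{\epsilon'/2}}$ by a two-case argument keyed on the elementary fact that $2\gamma^2 n^{\epsilon'} \ge \gamma\, n^{\epsilon'/2}$ if and only if $2\gamma\, n^{\epsilon'/2} \ge 1$. If $\gamma \ge \tfrac12 n^{-\epsilon'/2}$, this inequality holds, hence $2e^{-2\gamma^2 n^{\epsilon'}} \le 2e^{-\gamma n^{\epsilon'/2}}$ and we are done by the display above. If instead $\gamma < \tfrac12 n^{-\epsilon'/2}$, then $\gamma\, n^{\epsilon'/2} < \tfrac12$, so $2e^{-\gamma n^{\epsilon'/2}} > 2e^{-1/2} > 1 \ge P(|\hat p_i - p_i| > \gamma)$ and the bound is trivial. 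Either way the proposition follows.

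There is essentially no hard step here; the only subtlety worth flagging is that the stated bound is deliberately weaker than the raw Hoeffding estimate in the small-$\gamma$ regime (roughly $\gamma \lesssim n^{-\epsilon'/2}$), so that regime has to be disposed of by the trivial bound $P(\cdot)\le 1$ rather than by concentration — which is exactly why the case split is needed. An alternative route is to invoke the multiplicative Chernoff bound (e.g.\ \Cref{cor:nonmonotone-tails}, taking the indicator of type $i$ as a trivial nonnegative submodular function), but the Hoeffding argument is cleaner because it is oblivious to the actual value of $p_i$.
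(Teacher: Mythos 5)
Your proof is correct, and it takes a genuinely different route from the paper. The paper invokes the CLT to treat $\hat p_i$ as exactly Gaussian with variance $p_i(1-p_i)/\lceil n^{\epsilon'}\rceil$, lower-bounds the standardised deviation $\bar z(\gamma)$ by $z'(\gamma)=2\gamma n^{\epsilon'/2}$ using $p_i(1-p_i)\le\tfrac14$, and then bounds the Gaussian tail via $e^{-t^2/2}\le e^{-t/2}$, obtaining $2\sqrt{2/\pi}\,e^{-\gamma n^{\epsilon'/2}}$. Your argument replaces all of this with a one-line Hoeffding bound, $P(|\hat p_i-p_i|>\gamma)\le 2e^{-2\gamma^2\lceil n^{\epsilon'}\rceil}$, followed by a clean case split on whether $\gamma\ge\tfrac12 n^{-\epsilon'/2}$. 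This buys two things. First, it is more rigorous: the paper's use of the CLT as an equality is not a finite-sample statement, whereas Hoeffding is. Second, your case split makes explicit a regime the paper silently glosses over — the paper's step $e^{-t^2/2}\le e^{-t/2}$ is only valid for $t\ge 1$, i.e.\ $z'(\gamma)\ge 1$, i.e.\ $\gamma\ge\tfrac12 n^{-\epsilon'/2}$, which is exactly the threshold at which your concentration case kicks in; below it you correctly fall back to the trivial bound $P(\cdot)\le 1<2e^{-1/2}$. So your proof establishes the proposition as stated for all $\gamma>0$, while the paper's argument as written only does so in the large-$\gamma$ regime. (Minor note: Hoeffding yields a non-strict $\le$ rather than the strict $<$ in the statement, but the same quibble applies to the paper's proof, and the strictness is immaterial to every downstream use.)
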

\begin{proof}
    For type $i$, $\hat{p_i}$ is the sample mean of random variables drawn from a Bernoulli distribution with mean $p_i$. 
    By CLT, we have that $\hat{p}_i \sim \mathcal{N} \left( p_i, \frac{p_i(1-p_i)}{\lceil n^{\epsilon'} \rceil} \right)$. Therefore, for any $\gamma$, $ P\left(|\hat{p}_i - p_i| > \gamma\right) \leq  P(|Z| > \Bar{z}(\gamma))$ where $Z \sim \mathcal{N} \left( 0, 1\right)$ and $\Bar{z}(\gamma) = \gamma\sqrt{\frac{n^{\epsilon'}}{p_i(1-p_i)}}$. Let $z'(\gamma) = 2\gamma n^{\frac{\epsilon'}{2}}$. As $p_i(1-p_i) \leq \frac{1}{4}$, $\Bar{z}(\gamma) \geq z'(\gamma)$.    
    Therefore, we can write the following:  
    \begin{equation}\label{eq:gamma}
    \begin{aligned}
    P\left(|\hat{p}_i - p_i| > \gamma \right) 
    &= P(|Z| > \Bar{z}(\gamma)) \leq P(|Z| > z'(\gamma)) \\
    &\leq 2 \int_{z'(\gamma)}^{\infty} \frac{1}{\sqrt{2\pi}} e^{-\frac{t^2}{2}} \, dt \\
    &\leq \sqrt{\frac{2}{\pi}} \int_{z'(\gamma)}^{\infty} e^{-\frac{t}{2}} \, dt \\
    &= 2 \sqrt{\frac{2}{\pi}} e^{-\frac{z'(\gamma)}{2}} \\
    &= 2 \sqrt{\frac{2}{\pi}} e^{-\gamma n^{\frac{\epsilon'}{2}}}.
    \end{aligned}
    \end{equation}

\end{proof}

\begin{proposition}\label{lem:gamma1} For any $\hat{n} = O(n)$,
    $P(\exists i\in[1:k], |p_i - \hat{p_i}| > \frac{\hat{n}^{\epsilon}\sqrt{\hat{n}p_i}}{4\hat{n}}) = o(e^{-n^{\frac{c}{2}}})$ when $p_k =\omega(\frac{n^\epsilon k}{\sqrt{n}})$.
\end{proposition}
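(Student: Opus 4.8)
The plan is to reduce the claimed statement to Proposition \ref{lem:probpredict} via a union bound over the $k$ types, then control the resulting bound using the hypothesis $p_k = \omega(n^{\epsilon}k/\sqrt{n})$ and the fact that $\hat{n} = O(n)$. First, fix a type $i \in [k]$ and set the deviation threshold $\gamma_i = \frac{\hat{n}^{\epsilon}\sqrt{\hat{n}p_i}}{4\hat{n}} = \frac{\hat{n}^{\epsilon}\sqrt{p_i}}{4\sqrt{\hat{n}}}$. Applying Proposition \ref{lem:probpredict} with $\gamma = \gamma_i$ gives $P(|p_i - \hat{p_i}| > \gamma_i) < 2\sqrt{2/\pi}\, e^{-\gamma_i n^{\epsilon'/2}}$. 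A union bound over all $k$ types then yields
\[
P\Bigl(\exists i \in [k]: |p_i - \hat{p_i}| > \gamma_i\Bigr) < 2\sqrt{\tfrac{2}{\pi}}\; k \cdot \max_{i \in [k]} e^{-\gamma_i n^{\epsilon'/2}} = 2\sqrt{\tfrac{2}{\pi}}\; k \cdot e^{-\min_i \gamma_i \cdot n^{\epsilon'/2}}.
\]
Since $p_i \geq p_k$ for all $i$ (the types are sorted in decreasing order of probability) and $\hat{n} = O(n)$, we have $\gamma_i \geq \gamma_k = \Theta\bigl(\hat{n}^{\epsilon - 1/2}\sqrt{p_k}\bigr) = \Omega\bigl(n^{\epsilon-1/2}\sqrt{p_k}\bigr)$, so it suffices to show $k \cdot e^{-\Omega(n^{\epsilon - 1/2}\sqrt{p_k}\, n^{\epsilon'/2})} = o(e^{-n^{c/2}})$.

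The remaining work is to verify that the exponent $n^{\epsilon - 1/2}\sqrt{p_k}\cdot n^{\epsilon'/2}$ is $\omega(n^{c/2})$ (and that the polynomial factor $k$ is harmless against the exponential decay). Using the hypothesis $p_k = \omega(n^{\epsilon}k/\sqrt{n}) = \omega(n^{\epsilon - 1/2}k) = \omega(n^{\epsilon - 1/2})$ (since $k \geq 1$), we get $\sqrt{p_k} = \omega(n^{(\epsilon - 1/2)/2}) = \omega(n^{\epsilon/2 - 1/4})$. Hence the exponent is
\[
n^{\epsilon - 1/2} \cdot \sqrt{p_k} \cdot n^{\epsilon'/2} = \omega\bigl(n^{\epsilon - 1/2 + \epsilon/2 - 1/4 + \epsilon'/2}\bigr) = \omega\bigl(n^{\frac{3\epsilon}{2} + \frac{\epsilon'}{2} - \frac{3}{4}}\bigr).
\]
By the second inequality of Claim \ref{cl:alleqs}, $\frac{\epsilon'}{2} - \frac{c}{2} + \frac{3\epsilon}{2} + \frac{1}{4} \geq 1$, i.e. $\frac{3\epsilon}{2} + \frac{\epsilon'}{2} - \frac{3}{4} \geq \frac{c}{2}$; in fact with the chosen parameters $\epsilon = \frac{5+4c}{18}$, $\epsilon' = \frac{2+c}{3}$ this holds with equality, so the exponent is $\omega(n^{c/2})$ (the $\omega$ coming from the $\omega(\cdot)$ slack in the bound on $p_k$). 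Therefore the exponential term $e^{-\omega(n^{c/2})}$ dominates the polynomial factor $k = o(n)$, giving $k\, e^{-\omega(n^{c/2})} = o(e^{-n^{c/2}})$, which proves the claim.

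The main obstacle is purely bookkeeping: carefully tracking how the lower bound $p_k = \omega(n^\epsilon k/\sqrt{n})$ on the \emph{smallest} probability propagates through $\sqrt{p_k}$ into the exponent, and lining up the exponent arithmetic $\frac{3\epsilon}{2} + \frac{\epsilon'}{2} - \frac{3}{4}$ with exactly the inequality in Claim \ref{cl:alleqs} that was designed to make it work — i.e. recognizing that Claim \ref{cl:alleqs}(2) is precisely the inequality guaranteeing the exponent reaches the target rate $n^{c/2}$. One must also be slightly careful that it is $p_i \geq p_k$ (not $\leq$) that makes $\gamma_k$ the worst case, and that replacing $\hat n$ by $\Theta(n)$ throughout only changes constants since $\hat n = O(n)$ and, implicitly, $\hat n = \Omega(n)$ in the intended application (e.g. $\hat n \in \{n, n - |C^0|, \dots\}$ with $|C^0|$ small).
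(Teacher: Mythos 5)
Your proof is correct and follows essentially the same route as the paper's: set $\gamma_i$, note $\gamma_i = \Omega(n^{\epsilon-1/2}\sqrt{p_i})$ since $\hat n = O(n)$ and $\epsilon < 1/2$, lower-bound $\sqrt{p_i}$ by $\sqrt{p_k}$, plug in the hypothesis on $p_k$ to show $\gamma_i n^{\epsilon'/2} = \omega(n^{3\epsilon/2 + \epsilon'/2 - 3/4}) = \omega(n^{c/2})$ via Claim \ref{cl:alleqs}(2), then union-bound over $k$ types. Your handling of the polynomial factor $k$ is in fact slightly more explicit than the paper's (which writes $k\,o(e^{-n^{c/2}}) = o(e^{-n^{c/2}})$ without comment); you correctly observe that the slack in $\omega(n^{c/2})$ absorbs the $k = o(n)$ factor.
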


\begin{proof}
    For any type $i\in[k]$, let $\gamma_i = \frac{\hat{n}^{\epsilon}\sqrt{\hat{n}p_i}}{4\hat{n}}$. As $\hat{n} = O(n)$, and $\epsilon+\frac{1}{2} < 1$, we have $\gamma_i = \Omega(\frac{n^{\epsilon}\sqrt{np_i}}{4n})$.
    Therefore, 
    $$\gamma n^{\frac{\epsilon'}{2}} = \Omega(n^{\epsilon + \frac{\epsilon'}{2} -\frac{1}{2}}\sqrt{p_{i}}) = \Omega(n^{\epsilon + \frac{\epsilon'}{2} -\frac{1}{2}} \sqrt{p_k}) = \omega(n^{\epsilon + \frac{\epsilon'}{2} -\frac{1}{2}}\sqrt{n^{\epsilon - \frac{1}{2}}k})$$ 
    where we used the bound on $p_k$ for the last equality. As $n^{\epsilon + \frac{\epsilon'}{2} -\frac{1}{2}}\sqrt{n^{\epsilon - \frac{1}{2}}k} \geq n^{1.5\epsilon + \frac{\epsilon'}{2}-\frac{3}{4}}$, by the second inequality in claim \ref{cl:alleqs}, $1.5\epsilon + \frac{\epsilon'}{2}-\frac{3}{4}\geq \frac{c}{2}$, hence $\gamma n^{\frac{n^{\epsilon'}}{2}} = \omega(n^{\frac{c}{2}}).$ According to \cref{lem:probpredict}, for any type $i$, $P(|p_i - \hat{p_i}| > \gamma_i) = o(e^{-n^{\frac{c}{2}}})$. By union bound, $$P(\exists i\in[1:k], |p_i - \hat{p_i}| > \gamma_i) = ko(e^{-n^{\frac{c}{2}}}) = o(e^{-n^{\frac{c}{2}}}).$$ 
\end{proof}

\begin{proposition}\label{lem:gamma2}
    $P(\exists i\in[1:k], |p_i - \hat{p_i}| > \Theta(p_k)) = o(e^{-n^{\frac{2+7c}{18}}})$ when $p_k =\omega(\frac{n^\epsilon k}{\sqrt{n}})$.
\end{proposition}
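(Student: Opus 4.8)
The plan is to mirror the proof of Proposition~\ref{lem:gamma1}, replacing the deviation threshold $\frac{\hat n^\epsilon\sqrt{\hat n p_i}}{4\hat n}$ there by a threshold proportional to $p_k$. Concretely, fix the constant $\kappa>0$ hidden in the $\Theta(\cdot)$ and, for every type $i\in[k]$, set $\gamma_i=\kappa\,p_k$ (note this threshold is \emph{uniform} in $i$, which is what lets a single union bound work later). Proposition~\ref{lem:probpredict} then gives, for each $i$,
\[
P\bigl(|p_i-\hat p_i|>\gamma_i\bigr) < 2\sqrt{\tfrac2\pi}\,e^{-\gamma_i n^{\epsilon'/2}} = 2\sqrt{\tfrac2\pi}\,e^{-\kappa\,p_k\,n^{\epsilon'/2}},
\]
so the whole statement reduces to showing that the exponent $p_k\,n^{\epsilon'/2}$ grows faster than $n^{\frac{2+7c}{18}}$.

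For this I would substitute the hypothesis $p_k=\omega\!\bigl(\frac{n^\epsilon k}{\sqrt n}\bigr)$ and the parameter choices $\epsilon=\frac{5+4c}{18}$, $\epsilon'=\frac{2+c}{3}$ fixed in Section~\ref{app:notations} (forced by Claim~\ref{cl:alleqs}), giving $p_k\,n^{\epsilon'/2}=\omega\bigl(k\,n^{\epsilon-\frac12+\epsilon'/2}\bigr)$. A one-line arithmetic check yields
\[
\epsilon-\tfrac12+\tfrac{\epsilon'}{2}=\frac{5+4c}{18}-\frac{9}{18}+\frac{6+3c}{18}=\frac{2+7c}{18},
\]
so $p_k\,n^{\epsilon'/2}=\omega\bigl(k\,n^{\frac{2+7c}{18}}\bigr)=\omega\bigl(n^{\frac{2+7c}{18}}\bigr)$ since $k\ge1$. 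In particular, for every fixed constant $C$ we have $\kappa\,p_k\,n^{\epsilon'/2}\ge C\,n^{\frac{2+7c}{18}}$ for all large $n$, hence $P(|p_i-\hat p_i|>\gamma_i)=o\bigl(e^{-n^{\frac{2+7c}{18}}}\bigr)$ for each $i$.

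The last step is a union bound over the $k$ types: $P\bigl(\exists i:\,|p_i-\hat p_i|>\Theta(p_k)\bigr)\le k\cdot o\bigl(e^{-n^{\frac{2+7c}{18}}}\bigr)$, and the factor $k$ can be absorbed into the $o(\cdot)$ because $k=o\bigl(n^{\frac{1-c}{9}}\bigr)$ grows only polynomially while the constant $C$ in the previous paragraph may be taken arbitrarily large; taking $C=2$ makes $k\cdot e^{-2n^{\frac{2+7c}{18}}}\cdot e^{n^{\frac{2+7c}{18}}}=k\,e^{-n^{\frac{2+7c}{18}}}\to0$. I do not anticipate a genuine obstacle: the only delicate points are (i) using precisely the values of $\epsilon,\epsilon'$ from Claim~\ref{cl:alleqs} so that the exponent evaluates to exactly $\frac{2+7c}{18}$, and (ii) keeping the $\Theta(p_k)$ threshold uniform over $i$ so that one union bound over $k$ types suffices; the regime assumption $p_k=\omega(n^\epsilon k/\sqrt n)$ is exactly what makes the exponent $p_k n^{\epsilon'/2}$ large enough for the bound to close.
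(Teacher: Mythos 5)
Your proposal matches the paper's proof essentially line for line: set $\gamma = \Theta(p_k)$, invoke Proposition~\ref{lem:probpredict}, substitute $p_k = \omega(n^\epsilon k/\sqrt n)$ together with $\epsilon = \frac{5+4c}{18}$, $\epsilon' = \frac{2+c}{3}$ to get $\gamma n^{\epsilon'/2} = \omega(n^{\epsilon+\epsilon'/2-1/2}k) = \omega(n^{\frac{2+7c}{18}})$, then union bound over the $k$ types and absorb the polynomial factor $k$. Your version is slightly more explicit about the arithmetic and about why the factor $k$ disappears, but it is the same argument.
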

\begin{proof}
     For $\gamma = \Theta(p_k)$, considering $p_k =\omega(\frac{n^\epsilon k}{\sqrt{n}})$, we obtain $\gamma n^{\frac{\epsilon'}{2}} = \omega(n^{\epsilon + \frac{\epsilon'}{2}-\frac{1}{2}}k)$. As $n^{\epsilon + \frac{\epsilon'}{2}-\frac{1}{2}}k \geq n^{\epsilon + \frac{\epsilon'}{2}-\frac{1}{2}}$, by replacing $\epsilon = \frac{5+4c}{18}$ and $\epsilon' = \frac{2+c}{3}$, we obtain $\gamma n^{\frac{\epsilon'}{2}} = \omega(n^{\frac{2+7c}{18}})$. By applying \cref{lem:probpredict} and union bound, 
     $$P(\exists i\in[1:k], |p_i - \hat{p_i}| > \Theta(p_k)) = o(ke^{-n^{\frac{2+7c}{18}}}) = o(e^{-n^{\frac{2+7c}{18}}}).$$
\end{proof}

\begin{lemma}\label{lem:boundX_iunknown}
For any \(n' = \Omega\left(\frac{n}{k}\right)\) with \(n' \le n\), if \(p_k = \omega\left(\frac{n^\epsilon k}{\sqrt{n}}\right)\), with probability at least \(1 - o(e^{-n^{c/2}})\), for every type \(i \in [k]\) we have 
\[
X^{n'}_i \le n'\hat{p_i} + n'^\epsilon \sqrt{n'\hat{p_i}}.
\]
\end{lemma}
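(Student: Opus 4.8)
\textbf{Proof plan for Lemma~\ref{lem:boundX_iunknown}.}
The goal is to bound the ex-post count $X^{n'}_i$ of type-$i$ agents among the $n'$ agents in the allocation stage, but using the \emph{estimated} probability $\hat{p_i}$ rather than the true $p_i$. The plan is to combine the true-probability concentration bound from Proposition~\ref{prop:boundx_i} with the estimation-error bound from Proposition~\ref{lem:gamma1}, and then argue that replacing $p_i$ by $\hat{p_i}$ in the deviation term $n'^\epsilon\sqrt{n'p_i}$ costs us only a lower-order amount that gets absorbed.

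First I would invoke Proposition~\ref{prop:boundx_i} (applied with $n'$ agents in place of $n$) to get that, with probability at least $1-O(e^{-n'^{2\epsilon}})$, $X^{n'}_i \le n'p_i + \Theta(n'^\epsilon\sqrt{n'p_i})$; taking a union bound over the $k$ types and using $k = o(n^{1/4-\epsilon/2})$ together with $n' = \Omega(n/k) = \omega(n^{\epsilon/2+3/4})$ (Proposition~\ref{prop:boundk}) shows this holds simultaneously for all $i\in[k]$ with probability at least $1-o(e^{-n^{c/2}})$. Next, I would apply Proposition~\ref{lem:gamma1} with $\hat n = n'$ (legitimate since $n'=O(n)$), which gives that with probability at least $1-o(e^{-n^{c/2}})$, for every $i\in[k]$, $|p_i-\hat{p_i}| \le \frac{n'^\epsilon\sqrt{n'p_i}}{4n'}$, equivalently $n'|p_i - \hat{p_i}| \le \frac{1}{4}n'^\epsilon\sqrt{n'p_i}$. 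Conditioning on both events (still probability $1-o(e^{-n^{c/2}})$ by a final union bound), we have $n'p_i \le n'\hat{p_i} + \frac14 n'^\epsilon\sqrt{n'p_i}$, so $X^{n'}_i \le n'\hat{p_i} + (\Theta(1)+\frac14)n'^\epsilon\sqrt{n'p_i}$.

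It remains to convert the residual $n'^\epsilon\sqrt{n'p_i}$ term into one involving $\sqrt{n'\hat{p_i}}$. From $|p_i - \hat p_i| \le \frac{n'^\epsilon\sqrt{n'p_i}}{4n'} = \frac{n'^{\epsilon-1/2}\sqrt{p_i}}{4}$ and the fact that $n'^{\epsilon-1/2} = o(1)$, we get $p_i \le \hat{p_i} + o(\sqrt{p_i})$, which (since $p_i \ge p_k$ and $\sqrt{p_i} = o(p_i/\,\cdot)$ fails only for tiny $p_i$ — here $p_k = \omega(n^\epsilon k/\sqrt n)$ keeps $p_i$ bounded away from the regime where this is an issue) yields $p_i \le (1+o(1))\hat{p_i}$, hence $\sqrt{n'p_i} \le (1+o(1))\sqrt{n'\hat{p_i}}$. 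Plugging this in, $X^{n'}_i \le n'\hat{p_i} + (1+o(1))\Theta(n'^\epsilon\sqrt{n'\hat{p_i}})$; absorbing the constants into the implicit $\Theta$ (or, to match the stated form with coefficient $1$, noting the $\Theta$-constant from Proposition~\ref{prop:boundx_i} can be taken to be any fixed constant and the $o(1)$ slack is harmless for large $n$) gives the claimed bound $X^{n'}_i \le n'\hat{p_i} + n'^\epsilon\sqrt{n'\hat{p_i}}$ for all $i\in[k]$ with probability at least $1-o(e^{-n^{c/2}})$.

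The main obstacle I anticipate is the last step: passing from a bound with $\sqrt{n'p_i}$ to one with $\sqrt{n'\hat{p_i}}$ cleanly while matching the exact constant ``$1$'' in the lemma statement. This requires being careful that the estimation error $|p_i - \hat p_i|$ is genuinely lower-order compared to $p_i$ itself — which is where the hypothesis $p_k = \omega(n^\epsilon k/\sqrt n)$ (lower-bounding all $p_i$) is essential — and that the $\Theta(\cdot)$ constant inherited from Proposition~\ref{prop:boundx_i}/the Chernoff bound, together with the $\frac14$ and the $(1+o(1))$ factors, can all be folded in for $n$ large enough. Everything else is a routine chaining of two high-probability events via union bounds, which is clean because the failure probabilities are inverse-exponential and $k$ is sublinear.
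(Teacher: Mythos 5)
Your proof is essentially correct and follows the same first two steps as the paper: apply Proposition~\ref{prop:boundx_i} (with the $\Theta$-constant taken to be $\frac14$) and Proposition~\ref{lem:gamma1} (with $\hat n = n'$) to reach $X^{n'}_i \le n'\hat p_i + \frac12 n'^\epsilon\sqrt{n'p_i}$ with probability $1 - o(e^{-n^{c/2}})$. Where you diverge is in converting $\sqrt{n'p_i}$ to $\sqrt{n'\hat p_i}$: the paper invokes Proposition~\ref{lem:gamma2} twice (with the coarser granularity $\gamma = \Theta(p_k)$) to derive $p_k/2 \le \hat p_i$ and $p_i \le \hat p_i + p_k$, hence the explicit multiplicative bound $p_i \le 3\hat p_i$ and the final coefficient $\frac{\sqrt3}{2} < 1$. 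You instead squeeze more out of Proposition~\ref{lem:gamma1} alone to argue $p_i \le (1+o(1))\hat p_i$. That route is valid, but your justification is too vague at the crux: what you actually need is $\frac{n'^{\epsilon-1/2}}{4}\sqrt{p_i} = o(p_i)$, i.e.\ $p_i = \omega(n'^{2\epsilon-1})$, which does follow from $p_i \ge p_k = \omega(n^{\epsilon-1/2}k)$ together with $n' = \Omega(n/k)$, $n' \le n$, and $\epsilon < \frac12$ (a short computation shows $n^{\epsilon-1/2}k = \Omega((n/k)^{2\epsilon-1})$ because $n^{1/2-\epsilon}k^{2\epsilon} \to \infty$), but you should state and verify this rather than gesture at ``tiny $p_i$ being ruled out.'' Both routes deliver the same conclusion; the paper's use of Lemma~\ref{lem:gamma2} avoids this asymptotic comparison at the cost of one extra invocation and a larger but explicit constant ($3$ instead of $1+o(1)$), whereas your route is a touch more economical in which lemmas it uses but requires the above asymptotic check to be made rigorous.
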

\begin{proof}
    According to \cref{prop:boundx_i}, with probability at least $1-O(e^{-n'^{2\epsilon}})$, for every type $i\in[k]$,
\begin{equation}\label{eq:upperX_i}
    X^{n'}_i \leq n'p_i + n'^{\epsilon}\sqrt{n'p_i}/4.
\end{equation}
As $n' = \Omega(\frac{n}{k})$, and considering the bound on $k$ in \eqref{boundk}, $n' = \omega(n^{\frac{8+c}{9}})$. Hence $n'^{2\epsilon} = \omega(n^{\frac{(8+c)(5+4c)}{81}})$. Therefore the bound on $X^{n'}_i$ given in \eqref{eq:upperX_i} holds with probability at least $1-o(e^{-n^{\frac{(8+c)(5+4c)}{81}}})$.

As $n'= O(n)$, by \cref{lem:gamma1}, with probability at least $1-o(e^{-n^{c/2}})$, $p_i \leq \hat{p_i} + \frac{n'^{\epsilon}\sqrt{n'p_i}}{4n'}$. Inserting this in \eqref{eq:upperX_i},  
\begin{equation}\label{eq:aspecial}
    X^{n'}_i \leq n'\hat{p_i} + n'^{\epsilon}\sqrt{n'p_i}/2.
\end{equation} 
Now using \cref{lem:gamma2}, with probability at least $1-o(e^{-n^{\frac{2+7c}{18}}})$, 
\begin{equation}\label{eq:pminfirst}
    p_i \leq \hat{p_i} + p_k.
\end{equation}

Using \cref{lem:gamma2} again, with probability at least $1-o(e^{-n^{\frac{2+7c}{18}}})$, $p_i - \frac{p_k}{2}\leq\hat{p_i}$. Since $p_k \leq p_{i}$, $\frac{p_k}{2} \leq \hat{p_i}$. Combining this with \eqref{eq:pminfirst}, 
$$p_i \leq 3\hat{p_i}$$
where this holds with probability at least $1-2o(e^{-n^{\frac{2+7c}{18}}})$ by union bound. Inserting this inequality in \eqref{eq:aspecial} we obtain the following: 
\begin{equation}
    \begin{aligned}
        X^{n'}_i & \leq n'\hat{p_i} + n'^{\epsilon}\sqrt{n'p_i}/2 \\
        & \leq  n'\hat{p_i} + \frac{\sqrt{3}}{2}\cdot n'^{\epsilon}\sqrt{n'\hat{p_i}} \\
        & \leq n'\hat{p_i} + n'^{\epsilon}\sqrt{n'\hat{p_i}}.
    \end{aligned}
\end{equation}
Using union bound, for a type $i\in[k]$, $X^{n'}_i \leq n'\hat{p_i} + n'^{\epsilon}\sqrt{n'\hat{p_i}}$ with probability at least $$1-o(e^{-n^{\frac{(8+c)(5+4c)}{81}}}) -o(e^{-n^{c/2}}) - 2o(e^{-n^{\frac{2+7c}{18}}}) = 1-o(e^{-n^{c/2}}).$$ Now applying union bound again over the types, we obtain this upper bound holds for all types with probability at least $1-ko(e^{-n^{c/2}})$. As $k$ is constant or sublinear in $n$, this probability simplifies to $1-o(e^{-n^{c/2}})$. 
\end{proof}

\begin{lemma}\label{lem:B1}
    If $|C| < \lceil n^{\epsilon'} \rceil$, the following bounds hold when \( \delta = \frac{1}{n^{\frac{\epsilon' - c}{2}}} \). 
    \begin{enumerate}
        \item With probability at least $1-O(e^{-n^c})$ for all $i\in[1,k]$, $ v_i(B_1) \geq (1-\delta) 2k\lceil n^{\epsilon'} \rceil$. 
        \item  With probability at least $1-O(e^{-n^{1/3}})$ for all $i\in[1,k]$, $ v_i(B_2) \geq (1-\delta)\cdot (n-|C^0| - 2k\lceil n^{\epsilon'} \rceil)$. 
    \end{enumerate}
\end{lemma}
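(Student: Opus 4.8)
The plan is to realize each random value $v_i(B_1)$ and $v_i(B_2)$ as a linear function of the independent item-assignment coin flips performed in line~\ref{alg6:constructB_1}, and then invoke the submodular concentration bound of \Cref{cor:nonmonotone-tails}. Concretely, for each item $g\in M'$ let $X_g\in\{0,1\}$ indicate whether $g$ is placed in $B_1$; the $X_g$ are independent with $\Pr[X_g=1]=p=\tfrac{2k\lceil n^{\epsilon'}\rceil}{n-|C^0|}$. Then $v_i(B_1)=\sum_{g\in M'}v_i(g)\,X_g$ is a non-negative modular — hence submodular — function of the $X_g$ whose marginal values are the item values $v_i(g)$, and these lie in $[0,1]\subseteq[-1,1]$ because the instance is normalized. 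Setting $Y_g=1-X_g$ (independent, $\Pr[Y_g=1]=1-p$), the quantity $v_i(B_2)=\sum_{g\in M'}v_i(g)\,Y_g$ satisfies the same hypotheses. Thus \Cref{cor:nonmonotone-tails} applies to both variables with the $\delta$ of \eqref{dfn:delta}.

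Next I would pin down the two expectations. Since every item has value at most $1$ and $|C^0|=|C|<\lceil n^{\epsilon'}\rceil$, normalization gives $v_i(M')=v_i(M)-v_i(C^0)=n-v_i(C^0)\ge n-|C^0|$. Hence $\mathbb{E}[v_i(B_1)]=p\,v_i(M')\ge p\,(n-|C^0|)=2k\lceil n^{\epsilon'}\rceil$ and $\mathbb{E}[v_i(B_2)]=(1-p)\,v_i(M')\ge (1-p)(n-|C^0|)=(n-|C^0|)-2k\lceil n^{\epsilon'}\rceil$. To feed \Cref{cor:nonmonotone-tails} for $B_2$ I also want $\mathbb{E}[v_i(B_2)]=\Theta(n)$: combining $k=o(n^{(1-c)/9})$ from \eqref{boundk} with $\epsilon'=\tfrac{2+c}{3}$ one checks $k\lceil n^{\epsilon'}\rceil=o(n)$ (the exponent $\tfrac{1-c}{9}+\tfrac{2+c}{3}=\tfrac{7+2c}{9}<1$), so $\mathbb{E}[v_i(B_2)]\ge n/2$ once $n$ is large.

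Finally, fix a type $i$, apply \Cref{cor:nonmonotone-tails} with $\delta=n^{-(\epsilon'-c)/2}$ (so $\delta^2=n^{c-\epsilon'}$), and substitute. For $B_1$ this gives $\Pr[v_i(B_1)\le(1-\delta)\mathbb{E}[v_i(B_1)]]\le e^{-\delta^2\mathbb{E}[v_i(B_1)]/4}\le e^{-k\lceil n^{\epsilon'}\rceil n^{c-\epsilon'}/2}\le e^{-n^{c}/2}=e^{-\Omega(n^c)}$; for $B_2$, using $\mathbb{E}[v_i(B_2)]\ge n/2$ and $1+c-\epsilon'=\tfrac{1+2c}{3}>\tfrac13$, it gives $\Pr[v_i(B_2)\le(1-\delta)\mathbb{E}[v_i(B_2)]]\le e^{-\delta^2\mathbb{E}[v_i(B_2)]/4}\le e^{-n^{1+c-\epsilon'}/8}=e^{-\Omega(n^{1/3})}$. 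On the complement of these events we get $v_i(B_1)\ge(1-\delta)\mathbb{E}[v_i(B_1)]\ge(1-\delta)2k\lceil n^{\epsilon'}\rceil$ and $v_i(B_2)\ge(1-\delta)\mathbb{E}[v_i(B_2)]\ge(1-\delta)\bigl((n-|C^0|)-2k\lceil n^{\epsilon'}\rceil\bigr)$, the latter being a genuine positive lower bound since $(n-|C^0|)-2k\lceil n^{\epsilon'}\rceil=n-o(n)$. A union bound over the $k=o(n^{(1-c)/9})$ types — which only changes the constant in the exponent — then delivers the claimed $1-O(e^{-n^c})$ and $1-O(e^{-n^{1/3}})$ guarantees. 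I do not expect a genuine obstacle here: the only care needed is keeping the three quantities $|M'|$, $n-|C^0|$, and $v_i(M')$ distinct, and checking the exponent arithmetic after substituting $\epsilon'=\tfrac{2+c}{3}$ with $c<0.1$; everything else is a direct invocation of \Cref{cor:nonmonotone-tails}.
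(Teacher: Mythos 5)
Your approach coincides with the paper's: model $v_i(B_1)$ and $v_i(B_2)$ as modular functions of the independent coin flips, invoke \Cref{cor:nonmonotone-tails} with the paper's $\delta$, lower-bound the two expectations, and take a union bound over types. The expectation computations and the $B_2$ calculation are sound — in particular, you correctly note that $1+c-\epsilon' = \tfrac{1+2c}{3}>\tfrac13$, so that $e^{-n^{(1+2c)/3}/8}$ has genuine slack over $e^{-n^{1/3}}$ and the multiplicative factor of $k$ from the union bound is absorbed, delivering $O(e^{-n^{1/3}})$ as claimed.

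There is, however, a real slip in the $B_1$ chain. You weaken $e^{-k\lceil n^{\epsilon'}\rceil n^{c-\epsilon'}/2}\le e^{-n^c/2}$ before taking the union bound. Once the $k$ in the exponent is gone, the union bound produces $k\,e^{-n^c/2}=e^{-n^c/2+\ln k}$, which is $e^{-(1/2-o(1))n^c}$ — an $e^{-\Omega(n^c)}$ bound, but \emph{not} $O(e^{-n^c})$ as the lemma states (the ratio $k\,e^{-n^c/2}/e^{-n^c}=k\,e^{n^c/2}$ is unbounded). The paper keeps the $k$ in the exponent exactly so that the union-bound factor is absorbed: $k\,e^{-kn^c/2}\le k\,e^{-n^c}$ for $k\ge2$, and for $k\ge 3$ the extra $e^{-(k-2)n^c/2}$ slack kills the polynomial prefactor, while for $k=2$ the prefactor is a constant. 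So your statement that the union bound ``only changes the constant in the exponent'' is too optimistic given the weakening you made; retaining the factor of $k$ until after the union bound is the needed fix. This is a local arithmetic issue rather than a conceptual one — the proof strategy is the right one.
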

\begin{proof}
    When $|C| < \lceil n^{\epsilon'} \rceil$, $C^0$ represents the set of initial items in $C$. Before filling \( B_1 \) and \( B_2 \), one unique item of $C^0$ is allocated to each of the first \( |C^0| \) agents. 
    Since each item has a value of at most 1 for every type in a normalized instance, it follows that for all \( i \in [k] \), 
    \[
    v_i(M') \geq n - |C^0|.
    \]
    Each item of \( M' \) is added to \( B_1 \) with probability \( \frac{2k\lceil n^{\epsilon'} \rceil}{n - |C^0|} \), and to \( B_2 \) otherwise. For every type \(i \in [k]\), since the input instance is normalized, each item is valued at most 1 by type \(i\); hence, the valuation function \(v_i\) meets the conditions specified in \cref{cor:nonmonotone-tails}. Therefore, using the concentration bound provided in \cref{cor:nonmonotone-tails} for every bag \( B_j \in \{B_1, B_2\} \), we have:
    \begin{equation}\label{eq:boundB_i}
        P\Bigl(v_i(B_j) < (1 - \delta) \mathbb{E}[v_i(B_j)]\Bigr) \leq e^{-\delta^2\mathbb{E}[v_i(B_j)]/4 }.
    \end{equation}
    Now, note that
    \begin{equation}\label{eq:boundexpB_1}
    \begin{aligned}
        \mathbb{E}[v_i(B_1)] &= p \cdot v_i(M') \\
        &\geq p \cdot (n - |C^0|) = 2k\lceil n^{\epsilon'} \rceil.
    \end{aligned}
    \end{equation}
    Therefore,
    $$
    P\left(v_i(B_1) < (1 - \delta) 2k\lceil n^{\epsilon'} \rceil\right) \leq P\Bigl( v_i(B_1) < (1 - \delta) \mathbb{E}[v_i(B_1)] \Bigr).
    $$
    Using this inequality with \eqref{eq:boundB_i} and \eqref{eq:boundexpB_1}, we obtain
    \[
    P\left(v_i(B_1) < (1 - \delta) 2k\lceil n^{\epsilon'} \rceil\right) \leq e^{-\delta^2 k\lceil n^{\epsilon'} \rceil/2 }.
    \]
    Substituting \( \delta = \frac{1}{n^{\frac{\epsilon' - c}{2}}} \), we obtain:
    \[
    e^{-\delta^2 k\lceil n^{\epsilon'} \rceil/2} = O(e^{-kn^c}).
    \]
    Now, applying the union bound for any $k = O(n)$,
    \[
    P\left(\exists i \in [1,k]: v_i(B_1) < (1 - \delta) 2k\lceil n^{\epsilon'} \rceil\right) \leq O(k e^{-kn^c}) = O(e^{-n^{c}}).
    \]
    Therefore, with probability at least \( 1 - O(e^{-n^c}) \), for all \( i \in [1,k] \), 
    \[
    v_i(B_1) \geq (1 - \delta) 2k\lceil n^{\epsilon'} \rceil.
    \]

    A similar analysis for \( v_i(B_2) \) yields:
    \[
    P\left( v_i(B_2) < (1 - \delta) (n - |C^0| - 2k\lceil n^{\epsilon'} \rceil)\right) \leq e^{-\delta^2(n - |C^0| - 2k\lceil n^{\epsilon'} \rceil)/4}.
    \]
    Since \( |C^0| < \lceil n^{\epsilon'} \rceil \), and considering \( \epsilon' = \frac{2+c}{3} \) along with the upper bound on \( k \) from \eqref{boundk}, we observe that
    \[
    n - |C^0| - 2k\lceil n^{\epsilon'} \rceil = \Theta(n).
    \]
    Substituting \( \delta = \frac{1}{n^{\frac{\epsilon' - c}{2}}} \), we obtain:
    \[
    e^{-\delta^2(n - |C^0| - 2k\lceil n^{\epsilon'} \rceil)/4} = \Theta(e^{-n^{1 - \epsilon' + c}}) = O(e^{-n^{1/3}}).
    \]
    Applying the union bound for any $k = O(n)$, we obtain,
    \begin{align*}  
    P\left(\exists i \in [1,k]: v_i(B_2) < (1 - \delta) (n - |C^0| - 2k\lceil n^{\epsilon'} \rceil)\right) = \\  k O(e^{-n^{1/3}})= O(e^{-n^{1/3}}).
    \end{align*}    
\end{proof}

\subsection{Large number of Universally High-Valued Items (Invoking \cref{alg:knownD})}\label{app:invoke5}
If \(|C| \geq \lceil n^{\epsilon'} \rceil\), then \cref{alg:unknownD} allocates one item from \(C\) to each of the first \(\lceil n^{\epsilon'} \rceil\) arriving agents. Since any singleton bundle containing an item from \(C\) is valued at least \(\alpha\) by all types, the MMS requirement is met for these agents. We then need only to show that every agent in the reduced instance \(\Icalhat\) also receives a bundle valued at least \(\alpha\). Recall that \cref{alg:unknownD} subsequently invokes \cref{alg:knownD} on the reduced instance
\[
\Icalhat = \bigl([\lceil n^{\epsilon'} \rceil + 1, n],\, M' \cup C^0,\, \{v_j\}_{j=1}^k\bigr),
\]
along with the estimated distribution \(\hat{D}\) obtained from observing the types of the first \(\lceil n^{\epsilon'} \rceil\) agents. Let \(j^* \in [k]\) be the type with the highest estimated probability, i.e., \(\hat{p}_{j^*} \ge \hat{p}_j\) for all \(j \in [k]\); hence, \(\hat{p}_{j^*} \ge \frac{1}{k}\). Also, denote by \(\hat{n}\) the number of agents in \(\Icalhat\), so that \(\hat{n} = n - \lceil n^{\epsilon'} \rceil\).

\lemalgfiveworks*
\begin{proof}
    First, note that as $p_k = \omega(\frac{k}{n^{\frac{2}{9}(1-c)}})$, and $\epsilon = \frac{5+4c}{18}$, $p_k =\omega(\frac{n^\epsilon k}{\sqrt{n}})$. 
    
    For a given \(\alpha = \frac{1}{2(1+\eta)}\), \cref{alg:knownD} chooses to invoke either \cref{alg:highC} or \cref{alg:lowC} based on the number of remaining universally high-valued items, \(|C^0|\). Let us analyze each case separately. 

\begin{itemize}
    \item If $|C^0| \geq \hat{n}(1-\frac{1}{k}) + \hat{n}^{\epsilon}\sqrt{\hat{n}\hat{p_{j^*}}}$, \cref{alg:highC} is invoked. By \cref{prop:boundx_i}, we have that with probability at least $1-O(e^{-n^{2\epsilon}})$, 
\begin{equation}\label{eq:jstar}
    X^{\hat{n}}_{j^*} \geq \hat{n}p_{j^*} - \hat{n}^{\epsilon}\sqrt{\hat{n}p_{j^*}}/4.
\end{equation}
Using $\hat{p_{j^*}} \geq \frac{1}{k}$, and \cref{lem:gamma1}, with probability at least $1-o(e^{-n^{c/2}})$, $p_{j^*} \geq \hat{p}_{j^*}  - \frac{\hat{n}^{\epsilon}\sqrt{\hat{n}p_{j^*}}}{4\hat{n}} \geq \frac{1}{k} - \frac{\hat{n}^{\epsilon}\sqrt{\hat{n}p_{j^*}}}{4\hat{n}}$.  Inserting this in \eqref{eq:jstar}, we obtain 
\begin{equation}
    X^{\hat{n}}_{j^*} \geq \frac{\hat{n}}{k} -\hat{n}^{\epsilon}\sqrt{\hat{n}p_{j^*}}/2.
\end{equation}
Now using \cref{lem:gamma2}
, with probability at least $1-o(e^{-n^{\frac{2+7c}{18}}})$, $p_{j^*} \leq \hat{p}_{j^*} + p_k$. Therefore 
\begin{equation}
    X^{\hat{n}}_{j^*} \geq \frac{\hat{n}}{k} -\hat{n}^{\epsilon}\sqrt{\hat{n}p_{j^*}}/2 
    \geq \frac{\hat{n}}{k} -\hat{n}^{\epsilon}\sqrt{\hat{n}(\hat{p}_{j^*} + p_k})/2.
\end{equation}
Note that since $j^*$ is the most frequent type, $\hat{p_{j^*}} \geq \frac{1}{k}$, while $p_k$ is the probability of the least frequent type, which is always less than $\frac{1}{k}$. Therefore, $p_k \leq \hat{p}_{j^*}$ and 
\begin{equation}
\begin{aligned}    
    X^{\hat{n}}_{j^*}  &
    \geq \frac{\hat{n}}{k} -\frac{\sqrt{2}}{2} \cdot \hat{n}^{\epsilon}\sqrt{\hat{n}\hat{p}_{j^*}} \\ &\geq \frac{\hat{n}}{k} -\hat{n}^{\epsilon}\sqrt{\hat{n}\hat{p}_{j^*}} \\
    & = \hat{n} - \left[\hat{n}(1-\frac{1}{k}) +  \hat{n}^{\epsilon}\sqrt{\hat{n}\hat{p}_{j^*}}\right] \\ 
    & \geq \hat{n} - |C^0|.
    \end{aligned}
\end{equation}
By union bound, this bound on $X^{\hat{n}}_{j^*}$ holds with probability at least $1 - O(e^{-n^{2\epsilon}}) -  o(e^{-n^{c/2}}) - o(e^{-n^{\frac{2+7c}{18}}}) = 1-o(e^{-n^{c/2}})$. Given this holds, \cref{lem:highC}, implies that \cref{alg:highC} is $\alpha-$MMS competitive. 

\item If $|C^0| < \hat{n}(1-\frac{1}{k}) + \hat{n}^{\epsilon}\sqrt{\hat{n}\hat{p_{j^*}}}$, each item of $C^0$ will be allocated to a unique agent from the first $|C^0|$ agents that arrive in $\Icalhat$, and then the problem is reduced to a smaller instance $\I'$ with $n' = n -\lceil n^{\epsilon'} \rceil - |C^0|$ agents, and items in $M'$. Since $c<1$, $\epsilon' = \frac{2+c}{3} < 1$; thereby $n - \lceil n^{\epsilon'} \rceil = \Theta(n)$, implying $n' = \Theta(n-|C^0|)$. 

As $\hat{n} \leq n$, and $\hat{p}_{j^*} \leq 1$, $|C^0| < n(1-\frac{1}{k}) + n^{\epsilon}\sqrt{n}$. Given this bound on $|C^0|$, $n' = \Omega(\frac{n}{k} - n^{\epsilon}\sqrt{n})$. Given the bound on $k$ in \cref{boundk}, $\frac{n}{k} = \Omega(n^{\frac{8+c}{9}})$, and given $\epsilon= \frac{5+4c}{18}$, $n^{\epsilon}\sqrt{n} = \Theta(n^{\frac{7+2c}{9}})$. As $c<1$, $n' = \Omega(\frac{n}{k})$. Let \(X^{n'}_i\) denote the ex-post number of type \(i\) agents in \(\I'\). According to \cref{lem:boundX_iunknown}, with probability at least $1-o(e^{-n^{c/2}})$, for every type $i\in[k]$, 
\begin{equation}\label{eq:boundX_in'}
    X^{n'}_i \leq n'\hat{p_i} + n'^{\epsilon}\sqrt{n'\hat{p}_i}.
\end{equation} 
As \cref{alg:lowC} is invoked on $\I'$ with $n'$ agents, and the estimated distribution $\hat{D}$, the upper limit on the number of reserved bundles for each type $i\in[k]$ is fixed as $M_i= \lfloor n'\hat{p}_i + n'^{\epsilon}\sqrt{n'\hat{p}_i}\rfloor$. Hence, \cref{eq:boundX_in'} implies with probability at least $1-o(e^{-n^{c/2}})$, for every $i\in[k]$, $ X^{n'}_i \leq M_i$. 

Note that before invoking \cref{alg:lowC}, we have only allocated $\lceil n^{\epsilon'} \rceil + |C^0|$ items, where the value of each item is bounded by $1$ in a normalized input, therefore $v_i(M') \geq n- \lceil n^{\epsilon'} \rceil - |C^0|= n'$ for all $i\in[k]$. Hence, according to \cref{lem:allsaturated} there exists an \( n(\eta) \) such that for all \( n > n(\eta) \), every type becomes saturated during the preprocessing phase—that is, \( |G_i| = M_i \) for all \( i \in [k] \). This guarantees that, with high probability, the number of reserved bundles for each type is at least equal to the number of agents of that type observed ex-post.
\end{itemize}

\end{proof}

\subsection{Learning with Small Number of Universally High-Valued Items (Invoking \cref{alg:adversarial-alg})}\label{app:invoke2}
\lemalgtwoworks*
\begin{proof}
Note that \(\Icalhat = ([|C^0|+1,\, |C^0| + \lceil n^{\epsilon'} \rceil],\, B_1,\, \{v_j\}_{j=1}^k)\) is the instance on which \cref{alg:adversarial-alg} is invoked, and let \(\hat{n} = \lceil n^{\epsilon'} \rceil\) denote the number of agents in \(\Icalhat\). By \cref{lem:B1}, with probability at least \(1 - O(e^{-n^c})\), for all \(i\in[k]\) we have \(v_i(B_1) \ge (1-\delta)2k\hat{n}\). Given this, we now show that the MMS value for every type in \(\Icalhat\) is at least \(k(1-\delta)\), i.e., for all \(i\in[k]\), \(\MMS_{v_i}(\Icalhat) \ge k(1-\delta)\). To do so, consider any type \(i\in[k]\) and perform an arbitrary bag-filling procedure on \(B_1\) that stops as soon as a bag's value for type \(i\) reaches \(k(1-\delta)\); note that each such bag has value at most \(k(1-\delta)+1\) (since each item is valued at most \(1\) in a normalized instance). Thus, the number of bags that can be formed is at least 
\[
\left\lfloor \frac{v_i(B_1)}{k(1-\delta)+1} \right\rfloor.
\]
Given $\delta$ as \eqref{dfn:delta} and $\epsilon' = \frac{2+c}{3}$, for any $c < 0.1$, and $n>10$, we have that $\delta = \frac{1}{n^{\frac{1-c}{3}}} \leq \frac{1}{11^{0.3}} \leq 0.5$. Given $k\geq 2$, \(k(1-\delta)+1 \leq 2k(1-\delta)\); together with \(v_i(B_1) \ge (1-\delta)2k\hat{n}\) this implies 
\[
\left\lfloor \frac{v_i(B_1)}{k(1-\delta)+1} \right\rfloor \ge \hat{n}.
\]
Hence, for each type \(i\) we can construct at least \(\hat{n}\) bags, each with value at least \(k(1-\delta)\), which in turn implies \(\MMS_{v_i}(\Icalhat) \ge k(1-\delta)\). Finally, by \cref{thm:adv}, \cref{alg:adversarial-alg} is $1/k-$MMS competitive, so every agent in $\Icalhat$ receives a bundle with value at least $\frac{1}{k}$ times their MMS, which is at least $1-\delta$. Since for \(n>10\) and $c<0.1$, we have \(\delta \leq 0.5\) (so that \(1-\delta \ge \frac{1}{2} \ge \alpha\)), it follows that every agent in \(\Icalhat\) receives a bundle valued at least \(\alpha\) with probability at least \(1 - O(e^{-n^c})\).
\end{proof}

\subsection{Allocation with Small Number of Universally High-Valued Items (Invoking \cref{alg:lowC})}\label{app:invokeend}
    Note that \cref{alg:lowC} is invoked only when the initial number of items in $C$, is less than $\lceil n^{\epsilon'} \rceil$, i.e. $|C^0| < \lceil n^{\epsilon'} \rceil$. In this case, \cref{alg:lowC} is invoked on $\Ical' = ([|C^0|+\lceil n^{\epsilon'} \rceil + 1,n], B_2, \{v_j\}^k_{j=1})$ with the estimated distribution $\hat{D}$. We adhere to the definitions of \(n'\), \(\{\mu_j\}_{j=1}^k\), \(z\), \(b\), \(E_j\), and \(\Bar{T}_i\) as given in \cref{sub:knownDnotations}. In this context, with \(\I'\) as the input instance and \(\hat{D}\) as the known distribution, \(n'\) denotes the number of agents in \(\I'\); specifically, 
\[
n' = n - |C^0| - \lceil n^{\epsilon'} \rceil.
\]
Moreover, the expected number of agents of type $i$ is estimated as $\mu_i = n'\hat{p}_i$. The upper-limit on any $|G_i|$, number of bundles reserved for type $i$ in the preprocessing steps of \cref{alg:lowC}, is $M_i = \lfloor\mu_i + n'^{\epsilon}\sqrt{\mu_i}\rfloor$.  Note that after the preprocessing, a type $i$ is saturated if $|G_i| = M_i$. Finally, by \eqref{definez}, $z = n'^\epsilon \sum_{j=1}^{k} \sqrt{\mu_j}$.

\begin{proposition}\label{prop:knz}
Assuming \(n' = \Theta(n)\) and \(p_k = \omega\Bigl(\frac{n^\epsilon k}{\sqrt{n}}\Bigr)\), with probability at least \(1 - o\Bigl(e^{-n^{\frac{2+7c}{18}}}\Bigr)\) the following asymptotic relations hold:
 \begin{enumerate}
     \item  $k\lceil n^{\epsilon'} \rceil = o(z)$.
    \item $\delta n'= o(z)$.
 \end{enumerate}
\end{proposition}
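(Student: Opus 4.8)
The plan is to establish both relations \emph{deterministically} from the two structural facts available here — the hypothesis $n' = \Theta(n)$, and the bound $k = o(n^{\frac{1-c}{9}})$ that \cref{prop:boundk} derives from $p_k = \omega\bigl(\tfrac{n^\epsilon k}{\sqrt n}\bigr)$ together with $\epsilon = \tfrac{5+4c}{18}$ — after which the probabilistic statement follows a fortiori. The one analytic ingredient is a crude lower bound on $z$: since $\hat D$ is produced by \cref{alg:learndist} via $\hat p_j = X_j/\lceil n^{\epsilon'}\rceil$ with $\sum_j X_j = \lceil n^{\epsilon'}\rceil$, it is a genuine probability distribution, so $\sum_{j=1}^k \hat p_j = 1$; using $\sqrt x \ge x$ on $[0,1]$ gives $\sum_{j=1}^k \sqrt{\hat p_j} \ge 1$, and hence, recalling $\mu_j = n'\hat p_j$,
\[
z \;=\; n'^{\epsilon}\sum_{j=1}^k\sqrt{\mu_j} \;=\; n'^{\,\epsilon+\frac12}\sum_{j=1}^k\sqrt{\hat p_j} \;\ge\; n'^{\,\epsilon+\frac12} \;=\; \Theta\bigl(n^{\,\epsilon+\frac12}\bigr).
\]

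For part~1, I would note that $\epsilon+\tfrac12 = \tfrac{7+2c}{9}$ and $k\lceil n^{\epsilon'}\rceil = o\bigl(n^{\frac{1-c}{9}}\cdot n^{\epsilon'}\bigr) = o\bigl(n^{\frac{1-c}{9}+\epsilon'}\bigr)$; since $\frac{1-c}{9}+\epsilon' = \frac{1-c}{9}+\frac{2+c}{3} = \frac{7+2c}{9} = \epsilon+\tfrac12$, the bound on $z$ above gives $k\lceil n^{\epsilon'}\rceil = o(z)$. For part~2, with $\delta = n^{-\frac{\epsilon'-c}{2}}$ and $n'=\Theta(n)$ we have $\delta n' = \Theta\bigl(n^{1-\frac{\epsilon'-c}{2}}\bigr)$, and $1-\frac{\epsilon'-c}{2} = 1-\frac{1-c}{3} = \frac{2+c}{3} = \epsilon'$; since $\epsilon' = \frac{6+3c}{9} < \frac{7+2c}{9} = \epsilon+\tfrac12$ for every $c<1$, again $\delta n' = o\bigl(n^{\epsilon+\frac12}\bigr) = o(z)$.

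Both relations therefore hold with probability $1$ under the hypotheses, hence in particular with probability at least $1-o\bigl(e^{-n^{(2+7c)/18}}\bigr)$, as claimed. The only real work is exponent bookkeeping against $\epsilon=\tfrac{5+4c}{18}$ and $\epsilon'=\tfrac{2+c}{3}$; there is no combinatorial content, and the sole thing to watch is getting the sign of each exponent comparison right. If one prefers instead to surface the high-probability event already used elsewhere in this section: by two applications of \cref{lem:gamma2}, exactly as in the proof of \cref{lem:boundX_iunknown}, with probability at least $1-o\bigl(e^{-n^{(2+7c)/18}}\bigr)$ one has $\hat p_j \ge p_j/3 \ge p_k/3$ for every $j\in[k]$, whence $z \ge n'^{\epsilon}k\sqrt{n'p_k/3} = \Omega\bigl(n^{\epsilon+\frac12}k\sqrt{p_k}\bigr)$; substituting $\sqrt{p_k} = \omega\bigl(n^{\epsilon/2-1/4}\sqrt k\bigr)$ and invoking the two (tight) inequalities of \cref{cl:alleqs} then yields parts~1 and~2 directly. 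Either route is routine; the expected obstacle is merely keeping the chain of $\omega/o$ estimates consistent.
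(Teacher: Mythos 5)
Your first ("deterministic") route is correct and is a genuine simplification of the paper's argument. The paper invokes \cref{lem:gamma2} to obtain, with probability $1-o(e^{-n^{(2+7c)/18}})$, that $\hat{p}_j \ge p_k/2$ for all $j$, then substitutes the hypothesis $p_k = \omega(n^\epsilon k/\sqrt{n})$ to get $z = \omega(k n^{1.5\epsilon + 1/4})$, and finally checks the two exponent inequalities in \cref{cl:alleqs}. You observe instead that $\hat{D}$ output by \cref{alg:learndist} is always a bona fide probability vector, so $\sum_j \sqrt{\hat{p}_j} \ge \sum_j \hat{p}_j = 1$ holds surely, giving the unconditional bound $z \ge n'^{\,\epsilon + 1/2} = \Omega(n^{\epsilon+1/2})$. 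Your exponent arithmetic — $\frac{1-c}{9} + \epsilon' = \epsilon + \tfrac12$ exactly, so $k\lceil n^{\epsilon'}\rceil = o(n^{\epsilon+1/2})$ rides only on the strict little-$o$ in $k = o(n^{(1-c)/9})$; and $\delta n' = \Theta(n^{\epsilon'})$ with $\epsilon' < \epsilon + \tfrac12$ — is correct. Notably, your deterministic lower bound on $z$ even dominates the paper's (since $k n^{1.5\epsilon+1/4} = o(n^{\epsilon+1/2})$ under $k = o(n^{1/4 - \epsilon/2})$), so both conclusions hold with probability one and the high-probability qualifier in the statement is vacuous under your argument. This buys a cleaner proof at no cost; the only thing it does not do is parallel the paper's style of threading the failure probability through every lemma, but that is cosmetic. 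The alternate route you sketch at the end is essentially the paper's own proof and is fine as stated.
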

\begin{proof}
    Using \cref{lem:gamma2}, with probability at least $1-o(e^{-n^{\frac{2+7c}{18}}})$, for any type $i\in[k]$, $p_i - \frac{p_k}{2}\leq\hat{p_i}$. Since $p_k \leq p_{i}$, $\frac{p_k}{2} \leq \hat{p_i}$. Using this in \eqref{definez}, 
    \begin{equation}
        n'^{\epsilon}k\sqrt{n'p_k/2}\leq z. 
    \end{equation}

    Since $p_k = \omega(\frac{n^{\epsilon}k}{\sqrt{n}})$, and $n'=\Theta(n)$,
    \begin{equation}\label{eq:lowerboundz}
        z = \omega(n^{\epsilon}k\sqrt{n^{\epsilon+1/2}k}) = \omega(kn^{1.5\epsilon + 1/4}).
    \end{equation}
    Hence,     
    $$\frac{k\lceil n^{\epsilon'} \rceil}{z} = o( \frac{kn^{\epsilon'}}{kn^{1.5\epsilon + 1/4}}) = o(n^{\epsilon' - 1.5\epsilon-1/4}).$$ Therefore, we need $\epsilon' \leq 1.5\epsilon + \frac{1}{4}$, where by \cref{cl:alleqs}, this holds, so $\frac{k\lceil n^{\epsilon'} \rceil}{z} = o(1)$. 

    Based on \eqref{eq:lowerboundz}, \eqref{dfn:delta}, and $n' = \Theta(n)$, 
    $$
    \frac{\delta n'}{z} = o(\frac{n^{1-\epsilon'/2 + c/2}}{n^{1.5\epsilon + 1/4}}) = o(n^{3/4 -\epsilon'/2 - 1.5\epsilon + c/2}).
    $$
    By \cref{cl:alleqs}, $3/4 -\epsilon'/2 - 1.5\epsilon + c/2 \leq 0$, hence $\frac{\delta n'}{z} = o(1)$.  
\end{proof}

\lemalgendworks*
\begin{proof}
    
    As $|C^0| < \lceil n^{\epsilon'} \rceil$, $n'\geq n - 2\lceil n^{\epsilon'} \rceil$. Given $\epsilon' < 1$, $n' = \Theta(n)$. Hence, \cref{lem:boundX_iunknown} implies with probability at least $1-o(e^{-n^{c/2}})$, the number of agents arriving from type $i$ in $\I'$ is at most $M_i$, i.e. $X^{n'}_i \leq M_i$. Therefore, it suffices to show that by the end of preprocessing, all types are saturated, ensuring that each type has reserved at least as many bundles as there are agents of that type.

    Suppose for contradiction that a type $i$ is not saturated by the end of preprocessing step. According to \cref{lem:oneunsat}, 
    \begin{equation}
        |G_i| \geq |T_i| - n' + \frac{ v_i(B_2)-|T_i|}{2\alpha} + n'\hat{p_i} - n'^\epsilon \sum_{j\neq i}\sqrt{n'\hat{p_j}}.
    \end{equation}
    Given in this problem $\mu_i = n'\hat{p}_i$, 
    \begin{equation}\label{eq:boundG_i}
        |G_i| \geq |T_i| - n' + \frac{ v_i(B_2)-|T_i|}{2\alpha} + n'\hat{p_i} - n'^\epsilon \sum_{j\neq i}\sqrt{\mu_j}.
    \end{equation}

    According to \cref{lem:B1} with probability at least $1-O(e^{-n^{1/3}})$, for every type $i\in[k]$, $v_i(B_2)\geq (1-\delta)\cdot (n-|C^0| - 2k\lceil n^{\epsilon'} \rceil)$. Given $n' =  n - |C^0| - \lceil n^{\epsilon'} \rceil$, $v_i(B_2) \geq (1-\delta)\cdot(n' -2k\lceil n^{\epsilon'} \rceil)$. Using this in \cref{eq:boundG_i}, we obtain
    
     \begin{equation}\label{eq:boundG_i1}
     \begin{aligned}
        |G_i| & \geq |T_i| - n' + \frac{ (1-\delta)\cdot(n' -2k\lceil n^{\epsilon'} \rceil)-|T_i|}{2\alpha} + n'\hat{p_i} - n'^\epsilon \sum_{j\neq i}\sqrt{\mu_j}\\
        &  = |T_i| - n' + \frac{(1-\delta)(n'-|T_i|)}{2\alpha} - \frac{(1-\delta)2k\lceil n^{\epsilon'} \rceil}{2\alpha} \\& -\frac{\delta |T_i|}{2\alpha} +  n'\hat{p_i} - n'^\epsilon \sum_{j\neq i}\sqrt{\mu_j} \\
  & = (n' - |T_i|) \left[\frac{1-\delta}{2\alpha}-1 \right] -\frac{(1-\delta)2k\lceil n^{\epsilon'} \rceil}{2\alpha} - \frac{\delta |T_i|}{2\alpha} \\& + n'\hat{p_i} - n'^\epsilon \sum_{j\neq i}\sqrt{\mu_j}.
    \end{aligned}
    \end{equation}

Since our chosen $\alpha$ will be arbitrarily close to $\frac{1}{2}$, $\alpha \geq \frac{1}{4}$. Therefore, 
\begin{equation}\label{eq:readyforT_i}
\begin{aligned}
    |G_i| &  \geq (n' - |T_i|) \left[\frac{1-\delta}{2\alpha}-1 \right] -4k(1-\delta)\lceil n^{\epsilon'} \rceil- 2\delta |T_i| \\ &+ n'\hat{p_i} - n'^\epsilon \sum_{j\neq i}\sqrt{\mu_j}
\end{aligned}
\end{equation}
Let \(j^* \in [k]\) be the type with the lowest estimated probability (i.e., \(\hat{p}_{j^*} \le \hat{p}_j\) for all \(j \in [k]\)); since type \(i\) is not saturated, \cref{prop:limitT_i} implies that \(|T_i| \le n'\Bigl(1 - \frac{\hat{p}_{j^*}}{2}\Bigr) + z\), where \(z = n'^\epsilon \sum_{j=1}^{k} \sqrt{\mu_j}\). Moreover, by \cref{lem:gamma2}, with probability at least \(1 - o\Bigl(e^{-n^{\frac{2+7c}{18}}}\Bigr)\) we have \(p_{j^*} - \frac{p_k}{2} \le \hat{p}_{j^*}\); since \(p_k \le p_{j^*}\), it follows that \(\frac{p_k}{2} \le \hat{p}_{j^*}\). Therefore, 
\[
|T_i| \le n'\Bigl(1 - \frac{p_k}{4}\Bigr) + z.
\]
Using this in \cref{eq:readyforT_i} we obtain
\begin{equation}
\begin{aligned}
    |G_i| &  \geq  \left[\frac{n'p_k}{4} - z \right] \left[\frac{1-\delta}{2\alpha}-1 \right] -4k(1-\delta)\lceil n^{\epsilon'} \rceil\\ &- 2\delta \left[n'-\frac{n'p_k}{4} + z \right] + n'\hat{p_i} -n'^\epsilon \sum_{j\neq i}\sqrt{\mu_j}
\end{aligned}
\end{equation}
As \(p_k = \omega\Bigl(\frac{k}{n^{\frac{2}{9}(1-c)}}\Bigr)\) and \(\epsilon = \frac{5+4c}{18}\), it follows that \(p_k = \omega\Bigl(\frac{n^\epsilon k}{\sqrt{n}}\Bigr)\). Since \(n' = \Theta(n)\), we deduce that \(p_k = \omega\Bigl(\frac{n'^\epsilon k}{\sqrt{n'}}\Bigr)\). By the Cauchy-Schwarz inequality, we have \(z \le n'^\epsilon \sqrt{k}\sqrt{\sum_{i=1}^{k}\mu_i} = n'^\epsilon\sqrt{k n'}\), and hence \(n' p_k = \omega(z)\). Moreover, using \cref{prop:knz} with probability at least \(1-o\Bigl(e^{-n^{\frac{2+7c}{18}}}\Bigr)\), we obtain
\[
4k(1-\delta)\lceil n^{\epsilon'} \rceil = o(z).
\] 

Moreover, since \(n' p_k = \omega(z)\), we have \(\frac{n'p_k}{4} - z = \omega(z) - z\), which is positive for sufficiently large \(n\). Hence,
\[
2\delta\left[n' - \frac{n'p_k}{4} + z\right] = O(2\delta n').
\]
Furthermore, by \cref{prop:knz}, with probability at least \(1 - o\Bigl(e^{-n^{\frac{2+7c}{18}}}\Bigr)\) this term is also \(o(z)\). Consider $\alpha = \frac{1}{2(1+\eta)}$, 
\begin{equation}
    \begin{aligned}
    |G_i| &  \geq  \eta\left[\omega(z) - z \right] -o(z)- o(z) + n'\hat{p_i} -n'^\epsilon \sum_{j\neq i}\sqrt{\mu_j}.
\end{aligned}
\end{equation}
There exists an $n(\eta)$ where for any $n>n(\eta)$, $\eta\left[\omega(z) - z \right] -o(z)- o(z) \geq z$, then,
\begin{equation}
    \begin{aligned}
    |G_i| &  \geq  z + n'\hat{p_i} -n'^\epsilon \sum_{j\neq i}\sqrt{\mu_j}= n'\hat{p_i} + n'^{\epsilon}\sqrt{n'\hat{p_i}} \geq M_i.
\end{aligned}
\end{equation}
 
Using the union bound, with probability at least 
\[
1 - O\Bigl(e^{-n^{1/3}}\Bigr) - 3o\Bigl(e^{-n^{\frac{2+7c}{18}}}\Bigr),
\]
we have \(|G_i| \ge M_i\) for type \(i\). This contradicts the assumption that type \(i\) is unsaturated (i.e., \(|G_i| < M_i\)); hence, all types must be saturated. Moreover, with probability at least \(1-o\Bigl(e^{-n^{c/2}}\Bigr)\) we have \(X^{n'}_i \le M_i\) for every \(i\in[k]\). Therefore, by applying the union bound, with probability at least
\[
1 - o\Bigl(e^{-n^{c/2}}\Bigr) - O\Bigl(e^{-n^{1/3}}\Bigr) - 3o\Bigl(e^{-n^{\frac{2+7c}{18}}}\Bigr) = 1 - o\Bigl(e^{-n^{c/2}}\Bigr),
\]
\cref{alg:lowC} is \(\alpha\)-MMS competitive.

\end{proof}

\section{Discussion}
In this paper we initiate the study of online discrete fair division with agent arrival. It is well-known that this model is not amenable to positive results \cite{amanatidis2023fair}. To circumvent this barrier, we introduce the information structure of \textsc{OnlineKTypeFD}, where every arriving agent belongs to one of $k$ known types and has the corresponding valuation function. Leveraging this, we design constant-MMS-competitive algorithms under the stochastic arrival of agents. 

It would be interesting to study other fairness notions under this model, such as EF1, EFX, and Prop1, and see if a constant approximation is possible for either of these. Another interesting question is, what if we are given a distribution over valuation functions for each type, instead of the exact function? Are reasonable ex-post fairness guarantees still possible with high probability, or we need to consider ex-ante guarantees?
~\newpage

\bibliographystyle{alpha}
\bibliography{main}

\end{document}